\newcommand{\defeq}{\stackrel{\text{def}}{=}}
\newcommand{\set}[1]{\{#1\}}                    % Set (as in \set{1,2,3}).
\newcommand{\setof}[2]{\{{#1}\mid{#2}\}}        % Set (as in \setof{x}{x>0}).
\newcommand{\bag}[1]{\{\!\!\{#1\}\!\!\}}
\theoremstyle{definition}
\newtheorem{theorem}{Theorem}[section]   % or [chapter] or no optional argument
\newtheorem{definition}[theorem]{Definition}
\newtheorem{remark}[theorem]{Remark}
\newtheorem{claim}[theorem]{Claim}
\newtheoremstyle{cited}%
{.5\baselineskip\@plus.2\baselineskip
    \@minus.2\baselineskip}% (space above)
{.5\baselineskip\@plus.2\baselineskip
    \@minus.2\baselineskip}% (space below)
{\itshape}% (body font)
{\parindent}% (indent amount)
{}% {theorem head font}
{.}% {punctuation after theorem head}
{.5em}% {space after theorem head}
{\textsc{\thmname{#1}} \thmnote{\normalfont#3}}% {theorem head spec}
\newcommand{\cq}{\text{CQ}\xspace}
\newcommand{\cqs}{\text{CQs}\xspace}
\newcommand{\rpq}{\text{RPQ}\xspace}
\newcommand{\rpqs}{\text{RPQs}\xspace}
\newcommand{\crpq}{\text{CRPQ}\xspace}
\newcommand{\crpqs}{\text{CRPQs}\xspace}
\newcommand{\freeleaf}{\text{free-leaf}\xspace}
\newcommand{\boundconnected}{\text{bound-connected}\xspace}
\newcommand{\Boundconnected}{\text{Bound-connected}\xspace}
\newcommand{\out}{\textsf{OUT}}
\definecolor{light-gray}{gray}{0.7.2}
\definecolor{goodgreen}{rgb}{0.1, 0.5, 0.1}
\definecolor{burntorange}{rgb}{0.8, 0.33, 0.0}
\definecolor{lightblue}{RGB}{173, 216, 230} %
\colorlet{DarkGreen}{green!50!black}
\colorlet{lightred}{red!30!white}
\colorlet{lightgreen}{green!30!white}
\colorlet{lightbrown}{brown!30!white}
\newcommand{\calG}{\mathcal{G}}
\newcommand{\calE}{\mathcal{E}}
\newcommand{\calV}{\mathcal{V}}
\newcommand{\nop}[1]{}
\newcommand{\change}[1]{#1}
\newcommand{\rai}{RelationalAI}
\newcommand{\ospg}{{\sf{OSPG}}\xspace}
\newcommand{\osyan}{{\sf{OSYan}}\xspace}
\newcommand{\pc}{{\sf{PC}}\xspace}
\newcommand{\vars}{\textsf{vars}}
\newcommand{\atoms}{\textsf{atoms}}
\newcommand{\free}{\textsf{free}}
\newcommand{\bound}{\textsf{bound}}
\newcommand{\nodes}{\textsf{nodes}}
\newcommand{\fhtw}{\textsf{fhtw}}
\newcommand{\ghtw}{\textsf{ghtw}}
\newcommand{\td}{\textsf{TD}}
\newcommand{\fnfhtw}{\textsf{fn-fhtw}}
\newcommand{\fnghtw}{\textsf{fn-ghtw}}
\newcommand{\ftd}{\textsf{FTD}}
\newcommand{\mylist}{\textsf{list}}
\newcommand{\dom}{\textsf{Dom}}
\newcommand{\eliminate}{\textsf{eliminate}}
\newcommand{\cw}{\mathsf{cw}}
\newcommand{\ov}{\overline}
\newcommand{\regpath}[1]{\accentset{\rightsquigarrow}{#1}}
\begin{document}

\title{Acyclic Conjunctive Regular Path Queries are no Harder than Corresponding Conjunctive Queries}

\author{Mahmoud Abo Khamis}
\orcid{0000-0003-3894-6494}
\email{mahmoudabo@gmail.com}
\affiliation{
  \institution{\rai{}}
  %\department{Query Optimizer}
  \city{Berkeley}
  \state{CA}
  \country{USA}
}

\author{Alexandru-Mihai Hurjui}
\orcid{0009-0009-2799-4694}
\email{hurjuialexandru12@gmail.com}
\affiliation{%
    \institution{University of Zurich}
    %\department{Department of Informatics}
    \city{Zurich}
    \country{Switzerland}
}

\author{Ahmet Kara}
\orcid{0000-0001-8155-8070}
\email{ahmet.kara@oth-regensburg.de}
\affiliation{
  \institution{OTH Regensburg}
  %\department{Department of Computer Science \& Mathematics}
  \city{Regensburg}
  \country{Germany}
}

\author{Dan Olteanu}
\orcid{0000-0002-4682-7068}
\email{olteanu@ifi.uzh.ch}
\affiliation{%
    \institution{University of Zurich}
    %\department{Department of Informatics}
    \city{Zurich}
    \country{Switzerland}
}

\author{Dan Suciu}
\orcid{0000-0002-4144-0868}
\email{suciu@cs.washington.edu}
\affiliation{
  \institution{University of Washington}
  %\department{Department of Computer Science \& Engineering}
  \city{Seattle}
  \state{WA}
  \country{USA}
}

\begin{abstract}
    We present an output-sensitive algorithm for evaluating an acyclic Conjunctive Regular Path Query (CRPQ). Its complexity is written in terms of the input size, the output size, and a well-known parameter of the query that is called the ``free-connex fractional hypertree width''. Our algorithm improves upon the complexity of the recently introduced output-sensitive algorithm for acyclic CRPQs~\cite{crpq-icdt26}. More notably, the complexity of our algorithm for a given acyclic CRPQ $Q$ matches the best known output-sensitive complexity for the ``corresponding'' conjunctive query (CQ), that is the CQ that has the same structure as the CRPQ $Q$ except that each RPQ is replaced with a binary atom (or a join of two binary atoms). This implies that it is not possible to improve upon our complexity for acyclic CRPQs without improving the state-of-the-art on output-sensitive evaluation for acyclic CQs. Our result is surprising because RPQs, and by extension CRPQs, are equivalent to recursive Datalog programs, which are generally poorly understood from a complexity standpoint. Yet, our result implies that the recursion aspect of acyclic CRPQs does not add any extra complexity on top of the corresponding (non-recursive) CQs, at least as far as output-sensitive analysis is concerned.
\end{abstract}

%%
%% The code below is generated by the tool at http://dl.acm.org/ccs.cfm.
%% Please copy and paste the code instead of the example below.
%%
\begin{CCSXML}
    <ccs2012>
       <concept>
           <concept_id>10002951.10002952.10002953.10010146</concept_id>
           <concept_desc>Information systems~Graph-based database models</concept_desc>
           <concept_significance>500</concept_significance>
           </concept>
       <concept>
           <concept_id>10003752.10010070.10010111.10011711</concept_id>
           <concept_desc>Theory of computation~Database query processing and optimization (theory)</concept_desc>
           <concept_significance>500</concept_significance>
           </concept>
       <concept>
           <concept_id>10003752.10003809.10003635</concept_id>
           <concept_desc>Theory of computation~Graph algorithms analysis</concept_desc>
           <concept_significance>500</concept_significance>
           </concept>
    </ccs2012>
\end{CCSXML}

\ccsdesc[500]{Information systems~Graph-based database models}
\ccsdesc[500]{Theory of computation~Database query processing and optimization (theory)}
\ccsdesc[500]{Theory of computation~Graph algorithms analysis}

\keywords{graph databases; conjunctive regular path queries; output-sensitive algorithms; product graph}

\maketitle

\section{Introduction}
\label{sec:intro}
\change{
Graph databases have become increasingly popular in recent years, with applications in various domains such as knowledge graphs, social networks, and the semantic web.
These graph databases are often queried differently from traditional relational databases, using graph query languages such as SPARQL~\cite{SPARQL:2013} and Cypher~\cite{Cypher:SIGMOD:2018}.
Recursion is at the heart of graph query languages, and it is often expressed using regular expressions over edge labels, leading to the notion of Regular Path Queries (RPQs) and their conjunctive extensions, Conjunctive Regular Path Queries (CRPQs).
CRPQs are part of SPARQL, the
W3C standard for querying RDF data, including commonly used
knowledge bases such as DBpedia and Wikidata~\cite{FigueiraGKMNT20,MartensT19b}.
Recent analytical studies of SPARQL query logs indicate that the use of RPQs and CRPQs is on the rise~\cite{10.14778/3149193.3149196}. For example, RPQs currently make up 38\% of unique queries to Wikidata~\cite{10.1145/3308558.3313472,10.1007/978-3-030-00668-6_23}.
}

\change{In this paper, we study acyclic \crpqs in the data complexity setting, i.e.,
where the query is {\em fixed}.}
We provide an {\em output-sensitive} algorithm for evaluating acyclic \crpqs,
i.e., an algorithm whose complexity depends on both the input size and output size.
We show that our algorithm improves upon prior algorithms for acyclic \crpqs~\cite{crpq-icdt26,rpq-pods25}.
We also show that our complexity matches the best known output-sensitive complexity
for the corresponding acyclic conjunctive queries~\cite{xiao-os-yannakakis}.
Hence, it is not possible to improve upon our complexity for acyclic \crpqs~
{\em without} a new breakthrough on output-sensitive evaluation of acyclic conjunctive queries.

Given an alphabet $\Sigma$, an \emph{edge-labeled graph} $G = (V, E, \Sigma)$
is a directed graph with a vertex set $V$ and an edge set $E$ where each edge is labeled
with a symbol from $\Sigma$.
The size of $G$ is $|V| + |E|$ and we denote it by $N$ throughout the paper.
Given an edge-labeled graph $G=(V, E, \Sigma)$,
a \emph{regular path query (\rpq)} is a regular expression over $\Sigma$.
We denote a regular expression as $\regpath{R}$.
The output of an RPQ $\regpath{R}$ on $G$ is the set of all pairs of vertices $(v, u) \in V \times V$ where $G$ contains a path from $v$ to $u$ whose label (i.e., the concatenation of the labels of its edges) belongs to the language defined by $\regpath{R}$.
Recent work~\cite{rpq-pods25} introduced an \emph{output-sensitive} algorithm for evaluating RPQs, meaning that its complexity depends not just on the size $N$  of the input graph but also on the output size $\out$. Its complexity is $O(N + N \cdot \out^{1/2})$.\footnote{The complexity was stated in~\cite{rpq-pods25} as $O(N + N^{3/2} + \out\cdot N^{1/2})$.
However, upon examining the analysis from~\cite{rpq-pods25}, it turns out that the complexity
can be expressed as $O(N + N \cdot\Delta + N \cdot \out / \Delta)$ for any $\Delta > 0$.
To minimize this expression, we set $\Delta = \out^{1/2}$, yielding the stated complexity.}

Given an edge-labeled graph $G=(V, E, \Sigma)$,
a {\em Conjunctive Regular Path Query (\crpq)} $Q$ is a conjunction of atoms
followed by a projection onto a set of {\em free variables}, where each atom is by itself an \rpq.
In particular, each atom has the form $\regpath{R}(X, Y)$ where $\regpath{R}$ is some regular expression over $\Sigma$ and $X$ and $Y$ are two variables.
For example, consider the following two \crpqs:
\begin{align}
    Q_1(X, Y_1, Y_2, Y_3) &=
        \regpath{R_1}(X, Y_1) \wedge
        \regpath{R_2}(X, Y_2) \wedge
        \regpath{R_3}(X, Y_3)
        \label{eq:intro-q1}\\
    Q_2(Y_1, Y_2, Y_3) &=
        \regpath{R_1}(X, Y_1) \wedge
        \regpath{R_2}(X, Y_2) \wedge
        \regpath{R_3}(X, Y_3)
        \label{eq:intro-q2}
\end{align}
The first query $Q_1$ is asking for tuples of vertices $(x, y_1, y_2, y_3)$ in $G$
such that\footnote{For a natural number $n$, we use $[n]$ to denote $\{1,2,\ldots,n\}$. When $n=0$, $[n]=\emptyset$.} for every $i \in [3]$, there is a path from $x$ to $y_i$ whose label belongs to the language defined by $\regpath{R_i}$.
The second query $Q_2$ is asking for triples of vertices $(y_1, y_2, y_3)$
such that there exists a vertex $x$ where for every $i \in [3]$, there is a path from $x$ to $y_i$ whose label belongs to the language defined by $\regpath{R_i}$.
In this paper, we are only interested in \crpqs that are {\em acyclic}, which means
that the query structure is a forest.\footnote{We will see later that the notion of acyclicity for \crpqs is slightly different from acyclicity for CQs, in the sense
that acyclic \crpqs cannot have {\em self-loops} or {\em parallel-edges}. See~Definition~\ref{defn:acyclic-crpq} and Remark~\ref{rem:acyclic_crpq} for details.}
Both $Q_1$ and $Q_2$ above are acyclic since they both have a query structure of a 3-star.
Acyclic \crpqs can be {\em free-connex}, which is equivalent\footnote{While this definition might seem different from the standard definition of free-connex acyclic \cqs~\cite{BaganDG07}, the two definitions coincide for connected acyclic \crpqs because all atoms are binary. See Section~\ref{subsec:width-measures}.} to saying that the free variables
are connected. This is true for $Q_1$ but not for $Q_2$ since, in the latter, the free variables
$Y_1, Y_2, Y_3$ are not connected.

Recent work~\cite{crpq-icdt26} has introduced an output-sensitive algorithm for evaluating acyclic \crpqs.
At a high-level, you can think of their algorithm as ``lifting'' the Yannakakis algorithm~\cite{Yannakakis81} from acyclic conjunctive queries (CQs) to acyclic \crpqs.
Let's take the query $Q_1$ above.
We cannot obtain an output-sensitive algorithm for $Q_1$ by simply evaluating each RPQ $\regpath{R_i}(X, Y_i)$ independently using the algorithm from~\cite{rpq-pods25} and then joining their outputs.
This is because the output of each RPQ $\regpath{R_i}(X, Y_i)$ could be much larger than the final output of $Q_1$ itself.
Instead, the main insight from~\cite{crpq-icdt26} is to start by ``calibrating''
the \rpqs $\regpath{R_i}(X, Y_i)$ with each other before invoking the algorithm
from~\cite{rpq-pods25} on them and joining their outputs.
The calibration filters out vertices that don't contribute to the final output, similar to the calibration passes in the original Yannakakis algorithm~\cite{Yannakakis81}.
Using this insight, the authors from~\cite{crpq-icdt26} show how to evaluate any acyclic
\crpq that is {\em free-connex} in time $O(N + N\cdot \out^{1/2} + \out)$, where $\out$ is the output size.
This applies to the example query $Q_1$ above.

\change{Extending the approach from~\cite{crpq-icdt26} from free-connex \crpqs to acyclic \crpqs that are {\em not} free-connex turns out to be more challenging.
Such non-free-connex queries are transformed in~\cite{crpq-icdt26} into free-connex ones by ``promoting'' some non-free variables to become free.}
% and the authors propose an approach for them that is not very satisfactory.
% In particular, they propose solving such queries by transforming them into free-connex queries,
% which is done by ``promoting'' some non-free variables to become free.
For example, $Q_2$ above can be transformed into a free-connex query by promoting $X$ to become free, thus obtaining $Q_1$.
Using this approach, the authors of~\cite{crpq-icdt26} obtain a ``weak'' output-sensitive
algorithm for $Q_2$ whose runtime is $O(N + N \cdot \out_1^{1/2} + \out_1)$
where $\out_1$ is the output size of the {\em transformed query} $Q_1$, which could be much larger than the original output size $\out_2$ of $Q_2$.
To rewrite the runtime in terms of the original output size $\out_2$,
the authors use a weak upper bound of $\out_1 \leq \out_2 \cdot |V|$,
which gives a runtime of $O(N + N \cdot \out_2^{1/2} \cdot |V|^{1/2} + \out_2 \cdot |V|)$.
This is unsatisfactory because it introduces a new parameter $|V|$ into the runtime,
is still a loose bound for many input instances,
and doesn't address non-free-connex queries at a fundamental level.
Instead, it over-approximates them using blackbox algorithms that were not designed for them.
In the worst case, $|V|$ could be as large as $N$, making the runtime
$O(N + N^{3/2} \cdot \out_2^{1/2} + N \cdot \out_2)$,
which can be significantly improved as we explain below.

Instead, in this paper, we develop a custom-made approach for acyclic \crpqs, both free-connex
and non-free-connex, where the concept of ``free-connex'' naturally occurs at the heart of our approach.
Its complexity depends on $N$, $\out$, and a parameter of the query $Q$,
which has been known in the literature on CQs for many years under different names~\cite{DBLP:journals/tods/OlteanuZ15,faq},
but was recently referred to as the {\em free-connex fractional hypertree width}, of the query $Q$~\cite{xiao-os-yannakakis}, denoted $\fnfhtw(Q)$.\footnote{In particular, the free-connex fractional hypertree width is just like the fractional hypertree width except that we only consider {\em free-connex} tree decompositions~\cite{BaganDG07}.
See Sec.~\ref{subsec:width-measures}.}
The free-connex fractional hypertree width is a measure of how far a query is from being free-connex.
It is traditionally defined for \cqs, but we extend it to \crpqs in the natural way.
Its value is 1 when the query is free-connex, and it increases as the query becomes ``less'' free-connex.
For example, the query $Q_1$ above has $\fnfhtw(Q_1) = 1$ since it is free-connex,
while $Q_2$ has $\fnfhtw(Q_2) = 3$.
Our main result says that any acyclic \crpq $Q$ can be evaluated in the following time, in data complexity:
\begin{align}
    O(N + N\cdot \out^{1-\frac{1}{\max(\fnfhtw(Q), 2)}}+\out)
    \label{eq:intro-runtime}
\end{align}
The above complexity collapses back to the complexity from~\cite{rpq-pods25} for \rpqs and to the complexity from~\cite{crpq-icdt26} for free-connex acyclic \crpqs.
For non-free-connex acyclic \crpqs, our complexity is significantly better than the one from~\cite{crpq-icdt26}.
For example, our complexity for $Q_2$ becomes $O(N + N \cdot \out_2^{2/3} + \out_2)$, where $\out_2$ is the output size of $Q_2$.
This is a more natural runtime expression and is also {\em always}\footnote{To compare the two runtimes, note that for $Q_2$, we always have $\out_2 \leq N^3$.} upper bounded by the runtime from~\cite{crpq-icdt26}, which was
$O(N + N^{3/2} \cdot \out_2^{1/2} + N \cdot \out_2)$.
See Appendix~\ref{app:comparison} for a detailed comparison with~\cite{crpq-icdt26}.

But now we ask: How good is our runtime bound from Eq.~\eqref{eq:intro-runtime}?
Is there a way to establish a lower bound showing how optimal it is?
To answer this question, we use conjunctive queries as ``points of reference'' to establish
{\em lower bounds} on \crpqs.
In particular, each \crpq $Q$ must be at least as hard to evaluate as the ``corresponding'' CQ $Q'$, where each RPQ atom $\regpath{R}(X, Y)$ in $Q$ is replaced with a binary relation atom $S(X, Y)$ in $Q'$ of size $N$. For example, the \crpq $Q_2$ above is at least as hard to evaluate as the following CQ:
\begin{align*}
    Q'_2(Y_1, Y_2, Y_3) &=
        S_1(X, Y_1) \wedge
        S_2(X, Y_2) \wedge
        S_3(X, Y_3)
\end{align*}
This is because given relation instances for $S_1, S_2, S_3$ for $Q_2'$,
we can construct a corresponding edge-labeled graph $G$ over an alphabet $\Sigma = \{S_1, S_2, S_3\}$ where each tuple $S_i(x, y)$ corresponds to an edge from $x$ to $y$ labeled with $S_i$.
And now, evaluating $Q_2'$ reduces to evaluating $Q_2$ on $G$ where each $\regpath{R_i}$
is chosen to be a single symbol $S_i$.
In fact, a \crpq $Q$ could be even harder.
For example, even a single \rpq $\regpath{R}$ is at least as hard to evaluate
as the following $k$-path CQ, for every $k \geq 1$:
\begin{align}
    Q_{k\text{-path}}(X_1, X_{k+1}) &= S_1(X_1, X_2) \wedge S_2(X_2, X_3) \wedge \cdots \wedge S_k(X_k, X_{k+1})
    \label{eq:intro-k-path}
\end{align}
This is because the regular expression $\regpath{R}$ can be chosen to be the concatenation of $k$ symbols $S_1 \cdots S_k$.
For a given acyclic \cq $Q'$, the best known output-sensitive {\em combinatorial algorithm}\footnote{{\em Combinatorial algorithms} is an under-defined concept
in the algorithms literature that basically refers to algorithms that don't use fast matrix multiplication, e.g. Strassen's algorithm~\cite{Strassen1969GaussianEI}.} has the following complexity~\cite{xiao-os-yannakakis,paris-os-yannakakis}:
\begin{align}
    O(N + N\cdot \out^{1-\frac{1}{\fnfhtw(Q')}}+\out)
    \label{eq:intro-cq-runtime}
\end{align}
For example, for $Q_2'$ above, the best known combinatorial algorithm runs in time $O(N + N \cdot \out^{2/3} + \out)$,
which happens to match the runtime of our algorithm for $Q_2$, which is also combinatorial.
But this  means that we cannot improve our complexity for the \crpq $Q_2$
{\em without} improving the complexity for the CQ $Q_2'$ from~\cite{DBLP:conf/icdt/AmossenP09,xiao-os-yannakakis}.
More generally, our runtime expression for an acyclic \crpq $Q$ from Eq.~\eqref{eq:intro-runtime} matches
the runtime expression for the corresponding $\cq$ $Q'$ from Eq.~\eqref{eq:intro-cq-runtime},
whenever the free-connex fractional hypertree width is at least two.
Moreover, when the free-connex fractional hypertree width is one\footnote{The free-connex fractional hypertree width for acyclic queries takes only {\em integer} values~\cite{xiao-os-yannakakis}. See Proposition~\ref{prop:fn-fhtw-properties}.},
our runtime expression becomes $O(N + N\cdot \out^{1/2} + \out)$,
which matches the runtime of the best known combinatorial algorithm\footnote{In particular, when $k \geq 2$, we have $\fnfhtw(Q_{k\text{-path}}) = 2$, hence the runtime from Eq.~\eqref{eq:intro-cq-runtime} becomes $O(N + N\cdot \out^{1/2} + \out)$.} for the CQ $Q_{k\text{-path}}$ above for $k \geq 2$~\cite{xiao-os-yannakakis}.
This means that improving our runtime bound any further can {\em only} happen if a new breakthrough is made
on the CQ side.
In other words, our output-sensitive algorithm for acyclic \crpqs is
the best possible, as far as we understand \cq evaluation today.

% \dans{The intro is really great. My only suggestion is to add a very
%   short outline at the end, because it ends very abruptly}
% \mak{done}

\section{Preliminaries}
\label{sec:prelims}

\subsection{Languages and Graphs}
An {\em alphabet} $\Sigma$ is a finite set of symbols.
The set of all strings over an alphabet $\Sigma$ is denoted by $\Sigma^*$.
The set of all strings over $\Sigma$ of length $k$ is denoted by $\Sigma^k$.
The empty string is denoted by $\epsilon$.
A {\em language} $L$ over an alphabet $\Sigma$ is a subset of $\Sigma^*$.
Given a string $w \in \Sigma^*$, the {\em reverse} of $w$, denoted  $w^R$, is defined recursively as follows: 
$w^R = \epsilon$ if $w = \epsilon$ and 
$w^R = \sigma \cdot v^R$ if $w = v\cdot \sigma$ for some $v \in \Sigma^*$
and $\sigma \in \Sigma$.
%$w = \sigma_1 \cdots \sigma_k$
%is defined as $w^{R} = \sigma_k \cdots \sigma_1$.
The reverse of a language $L$ is $L^R = \{w^R \mid w \in L\}$.
We use the standard definitions of regular expressions and regular languages over an alphabet $\Sigma$.
Throughout the paper, we denote regular expressions as $\regpath{R}, \regpath{S}$, etc. Given a regular expression $\regpath{R}$, we use $L(\regpath{R})$ to denote the language defined by $\regpath{R}$.
% Given a regular expression $\regpath{R}$, the {\em reverse of $\regpath{R}$} is a regular expression $\regpath{R'}$ defining the reverse of $L(\regpath{R})$.
% It is known that the reverse, $\regpath{R'}$, can be constructed in linear time in the size of the original regular expression $\regpath{R}$~\cite{HopcroftMU07}.

\begin{definition}[Edge-Labeled Graph]
An edge-labeled graph $G = (V, E, \Sigma)$ consists of a finite set $V$ of vertices, a finite set $\Sigma$ of edge labels, and a set
    $E \subseteq V \times \Sigma \times V$ of
    labeled directed edges. Every triple $(v, \sigma, u)$ in $E$
    denotes an edge from vertex $v$ to vertex $u$ labeled with $\sigma$.
    The size of $G$ is $|V| + |E|$, i.e., the total number of its vertices and edges.\footnote{We neglect the size of $\Sigma$ by assuming that it does not contain any symbol that does not appear as an edge label in $G$.}
\end{definition}

\begin{definition}[A Path in an Edge-Labeled Graph]
    Given an edge-labeled graph $G = (V, E, \Sigma)$ and two vertices $v, u\in V$, a {\em path} $p$ from $v$ to $u$ of length $k$ for some
    natural number $k$ is a sequence of $k+1$ vertices
    $w_0 := v, w_1, \ldots, w_{k} := u$ and a sequence of $k$ edge labels $\sigma_1, \sigma_2, \ldots, \sigma_k$
    such that for every $i \in [k]$, there is an edge $(w_{i-1}, \sigma_i, w_{i}) \in E$.
    The {\em label} of the path $p$, denoted by $\sigma(p)$, is the string $\sigma_1 \sigma_2 \ldots \sigma_k \in \Sigma^k$.
\end{definition}

\begin{definition}[Transpose of an Edge-Labeled Graph]
    Given an edge-labeled graph $G = (V, E, \Sigma)$,
    let $\Sigma^{-1}$ be a set of fresh symbols:
    one distinct fresh symbol $\sigma^{-1}$ for each symbol $\sigma \in \Sigma$.
    The {\em transpose} of $G$ is the edge-labeled graph $G^T = (V, E^T, \Sigma^{-1})$, where
    $E^T \defeq \{(u, \sigma^{-1}, v) \mid (v, \sigma, u) \in E\}$.
    \label{defn:transpose-graph}
\end{definition}

\begin{definition}[Inverse of a Regular Expression]
    Given a regular expression $\regpath{R}$ over an alphabet $\Sigma$,
    let $\Sigma^{-1}$ be a set of fresh symbols:
    one distinct fresh symbol $\sigma^{-1}$ for each symbol $\sigma \in \Sigma$.
    The {\em inverse} of $\regpath{R}$, denoted $\regpath{R}^{-1}$, is a regular expression over $\Sigma^{-1}$ that defines the reverse language
    of $L(\regpath{R})$ but where every symbol $\sigma$ is replaced by $\sigma^{-1}$.
    In particular, $\regpath{R}^{-1}$ is defined inductively as follows:
    \begin{align*}
        \regpath{R}^{-1} \defeq
        \begin{cases}
            \epsilon & \text{if } \regpath{R} = \epsilon, \\
            \sigma^{-1} & \text{if } \regpath{R} = \sigma \in \Sigma,\\
            (\regpath{R}_1^{-1} + \regpath{R}_2^{-1}) & \text{if } \regpath{R} = \regpath{R}_1 + \regpath{R}_2, \\
            (\regpath{R}_2^{-1} \cdot \regpath{R}_1^{-1}) & \text{if } \regpath{R} = \regpath{R}_1 \cdot \regpath{R}_2, \\
            (\regpath{R}_1^{-1})^* & \text{if } \regpath{R} = (\regpath{R}_1)^*.
        \end{cases}
    \end{align*}
    \label{defn:inverse-regex}
\end{definition}

\begin{definition}[Product Graph, $G_1 \times G_2$]
\label{defn:product-graph}
Given two edge-labeled graphs $G_1 = (V_1, E_1, \Sigma)$ and $G_2 = (V_2, E_2, \Sigma)$,
%    over the same set of labels $\Sigma$, 
the {\em product graph} $G_1 \times G_2$ is a directed graph
    $G=(V, E)$ (without edge labels) where $V \defeq V_1 \times V_2$ and
        $E \defeq \{((v_1, v_2), (u_1, u_2)) \mid
            \exists \sigma \in \Sigma : (v_1, \sigma, u_1) \in E_1 \wedge (v_2, \sigma, u_2) \in E_2\}$.
\end{definition}

\subsection{Queries}
% \begin{definition}[Regular Path Queries]
% A {\em Regular Path Query (\rpq)} $\regpath{R}$ over an alphabet $\sigma$ is a regular expression over $\Sigma$.
% The answer to the \rpq $\regpath{R}$ on an edge-labeled graph 
%     $G = (V, E, \Sigma)$ is the set of all pairs of vertices
%     $(v, u) \in V \times V$ such that there exists a path $p$ from $v$ to $u$ whose label $\sigma(p)$ is in the language $L(\regpath{R})$.
% \end{definition}

%%%%%%%%%%%%%%%%%%%%%%%%%%%%%%%%%%%%%%%%%%%%%%%%%%%%%%%%%%%%%%%%%%%%%%%%%%%%%%%%%%%%%%%%%%%%
A Conjunctive Query (CQ) has the form
$Q(\bm F) = R_1(\bm X_1) \wedge \cdots \wedge 
    R_n(\bm X_n)$,
where $\vars(Q)\defeq \bigcup_{i \in [n]} \bm X_i$ is the set of variables,
$\free(Q) \defeq \bm F \subseteq \vars(Q)$ is the set of {\em free} variables,
$\bound(Q) \defeq \vars(Q)\setminus\free(Q)$ is the set of {\em bound} variables, 
and $\atoms(Q)$ is the set of atoms $R_i(\bm X_i)$.
We follow the standard semantics for CQs.
The query graph of a CQ $Q$ is a hypergraph $\calG = (\calV, \calE)$ where $\calV := \vars(Q)$ and
$\calE := \set{\bm X_i \mid R_i(\bm X_i) \in \atoms(Q)}$.
\change{We call a CQ $Q$ {\em acyclic} if its query graph is $\alpha$-acyclic,
which means that we can construct a tree whose nodes are the sets $\bm X_i$ for $i \in [n]$
where each variable $X\in\bm V$ appears in a connected subtree.}
%%%%%%%%%%%%%%%%%%%%%%%%%%%%%%%%%%%%%%%%%%%%%%%%%%%%%%%%%%%%%%%%%%%%%%%%%%%%%%%%%%%%%%%%%%%%
\begin{definition}[Conjunctive Regular Path Queries]
A {\em Conjunctive Regular Path Query (\crpq)} $Q$ over an alphabet 
$\Sigma$ is of the form
$Q(\bm F) = \regpath{R}_1(X_1, Y_1) \wedge \cdots \wedge 
    \regpath{R}_n(X_n, Y_n)$,
where: each $\regpath{R}_i$ is a \change{regular expression over $\Sigma$}; 
each $\regpath{R}_i(X_i, Y_i)$ is an {\em atom};
$\vars(Q) \defeq \bigcup_{i \in [n]} \{X_i, Y_i\}$ is the set of variables;
$\free(Q) \defeq \bm F \subseteq \vars(Q)$ is the set of {\em free} variables;
$\bound(Q) \defeq \vars(Q)\setminus\free(Q)$ is the set of {\em bound} variables; 
and $\atoms(Q)$ is the set of atoms.
The answer to the \crpq $Q$ on an input edge-labeled graph $G = (V, E, \Sigma)$ is defined as follows:
A mapping $\mu :\free(Q) \rightarrow V$ is in the answer of $Q$ if it can be extended to a mapping $\mu' : \free(Q) \cup \bound(Q) \rightarrow V$ such
that $G$ has a path from $\mu'(X_i)$ to $\mu'(Y_i)$ labeled by a string from $L(\regpath{R}_i)$, for every $i \in [n]$.
We represent a mapping $\mu$ by the tuple $(\mu(X))_{X \in \free(Q)}$,
assuming a fixed order on the free variables.
\end{definition}

\paragraph{Query Graphs and Acyclic \crpqs}
In order to define the query graph of a \crpq, we first recall the definition of undirected multigraphs.
An {\em undirected multigraph} $\calG = (\calV, \calE)$ consists of a finite set of vertices $\calV$ and a multiset  of edges $\calE$ such that each edge is a subset of $\calV$ of size two or one. 
An edge of size one is called a {\em self-loop}.  
The {\em degree} of a vertex $v$  in $\calG$ is the number of edges
incident to $v$.
%distinct neighbor vertices, i.e., 
%$|\{\{v,u\} | \}|$of edges in $\calE$ containing $v$. 
A vertex that has degree one or zero is called a {\em leaf}.
Given two vertices
$v, u\in \calV$, a {\em path} from $v$ to $u$ in $\calG$ is a sequence 
of vertices $w_0 := v, w_1, \ldots, w_{k} := u$ for some $k \geq 1$
    such that for every $i \in [k]$, there is an edge $\{w_{i-1}, w_{i}\} \in \calE$. 
    %A {\em undirected multigraph} is an undirected multi-hypergraph where every edge consists of exactly two vertices.

\begin{definition}[Query Graph of a \crpq]
The {\em query graph} of a \crpq $Q$ is an undirected multigraph
%\footnote{A {\em multigraph} $\calG=(\calV,\calE)$ is a graph where the edges $\calE$ are a {\em multiset}, i.e., where the same edge can occur multiple times in $\calE$.} 
$\calG=(\calV,\calE)$ with vertex set $\calV :=\vars(Q)$ and
 edge multiset $\calE := \bag{\{X, Y\} \mid \regpath{R}(X, Y) \in \atoms(Q)}$.
\end{definition}

A {\em cycle} in an undirected multigraph $\calG=(\calV, \calE)$ is a sequence of the form
$v_1,e_1,$ $v_2,e_2,$ $\ldots,$ $v_k,e_k, v_{k+1}:= v_1$ of $k$ distinct vertices $v_1, \ldots, v_k\in\calV$ and $k$ distinct edges $e_1, \ldots, e_k\in\calE$ such that $k\geq 1$ and for every $i \in [k]$, we have $e_i=\{v_i, v_{i+1}\}$.
A multigraph is called {\em acyclic} if it has no cycles.

\begin{definition}[Acyclic \crpqs]
    \label{defn:acyclic-crpq}
A \crpq $Q$ is called {\em acyclic} if its query graph is acyclic.
\end{definition}
\begin{remark}[Distinction between acyclicity in \crpqs versus CQs]
\label{rem:acyclic_crpq}
    Note that according to the above definition, an acyclic \crpq cannot have a {\em self-loop}, i.e., an atom of the form $\regpath{R}(X, X)$, or {\em parallel edges}, i.e., two atoms sharing the same variables.
    This is more restricted than the definition of ($\alpha$-)acyclicity for \cqs in the literature,
    where a \cq with an atom $R(X, X)$ or two atoms $R_1(X, Y), R_2(X, Y)$ can still be acyclic.
    However, this restriction is unavoidable because a \crpq with a self-loop, $\regpath{R}(X, X)$,
    is already as hard as evaluating the following $k$-cycle conjunctive query for any $k \geq 3$:
    \begin{align}
        Q_{k\text{-cycle}}(X_1) &= S_1(X_1, X_2) \wedge S_2(X_2, X_3) \wedge \cdots \wedge S_k(X_k, X_{1})
    \label{eq:k-cycle}
    \end{align}
    This is because the regular expression $\regpath{R}$ could be chosen to be a concatenation
    of $k$ symbols, $S_1 \cdots S_k$.
    A similar reasoning applies to parallel edges,
    where a \crpq consisting of only two parallel edges is already as hard as $Q_{k\text{-cycle}}$ for any $k \geq 3$.
\end{remark}

% \dans{The discussion above is correct, but somewhat non-standard,
%   because the term $\alpha$-acyclic applies to a hypergraph, not a
%   graph.  Maybe just replace ``multigraph'' above with ``labeled
%   hypergraph''?  The remark continues to hold: in general, the labeled
%   hypergraph associated to a \crpq may be $\alpha$-acyclic, while the
%   \crpq is cyclic.  Also, we seem to have dropped the simple
%   observation that a \crpq is acyclic iff its expansion (which
%   converts each edge $\tilde R(X,Y)$ into
%   $R_1(X,Z) \wedge R_2(Z,Y)$) is acyclic.}
% \mak{I see. I rephrases the remark a bit and also defined $k$-expansion below
% and stated equivalence.}

In light of the above discussion it is possible to define acyclicity for \crpqs in terms of their {\em $k$-expansions} to \cqs:
\begin{definition}[$k$-expansion of a \crpq]
    Given a \crpq $Q$ and a natural number $k \geq 1$,
    the {\em $k$-expansion} of $Q$, is a \cq $Q_k'$ that is obtained from $Q$
    by replacing every atom $\regpath{R}(X, Y)$ in $Q$ with a $k$-path of binary atoms
    $S(X, Z_1), S(Z_1, Z_2), \ldots, S(Z_{k-1}, Y)$, where $Z_1, \ldots, Z_{k-1}$ are {\em fresh} variables and $S$ is an arbitrary relation symbol.
    \label{defn:k-expansion}
\end{definition}
\begin{proposition}
    A \crpq $Q$ is acyclic if and only if its $3$-expansion is an acyclic \cq.
\end{proposition}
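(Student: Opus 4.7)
The plan is to prove both directions by tracking how the $3$-expansion transforms cycles in the query multigraph. Because every atom of $Q_3'$ uses the same binary relation symbol $S$, the hypergraph of $Q_3'$ is in fact an ordinary graph, and for such graphs $\alpha$-acyclicity coincides with being a forest. The key structural observation is that replacing each atom of $Q$ by a path of three $S$-atoms preserves forests but converts self-loops into triangles and parallel edges into $6$-cycles.

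For the forward direction, assume that $Q$ is acyclic, so by Definition~\ref{defn:acyclic-crpq} its query multigraph $\calG$ is a forest with no self-loops and no parallel edges. The $3$-expansion then subdivides each edge of $\calG$ twice, inserting two fresh variables per atom. Subdivision of a forest yields another forest, so the graph underlying $Q_3'$ is acyclic and hence $\alpha$-acyclic.

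For the converse I would argue contrapositively: assume $\calG$ contains a cycle of length $k \geq 1$ (in the multigraph sense of Section~\ref{sec:prelims}) and exhibit a chordless cycle of length at least three in $Q_3'$. If $k = 1$, the self-loop $\regpath{R}(X, X)$ expands to the triangle $S(X, Z_1), S(Z_1, Z_2), S(Z_2, X)$. If $k = 2$, two parallel atoms between $X$ and $Y$ expand, via disjoint fresh variables, to a $6$-cycle through $X, Z_1, Z_2, Y, Z_2', Z_1'$. If $k \geq 3$, the $k$ edges of the cycle expand to a $3k$-cycle. In every case the resulting cycle is chordless, because each fresh variable has degree exactly $2$ in $Q_3'$ and appears only inside its own expansion, while distinct original variables of $Q$ are never directly adjacent in $Q_3'$. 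A chordless cycle of length at least three precludes $\alpha$-acyclicity, so $Q_3'$ is not acyclic.

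The main subtlety, and where I expect the argument to need most care, is verifying chordlessness and thereby seeing why $3$ is the correct subdivision factor. A $1$-expansion would leave self-loops and parallel edges intact as $\alpha$-acyclic artefacts, and a $2$-expansion would collapse a self-loop $\regpath{R}(X, X)$ into the single hyperedge $\{X, Z_1\}$ (since $\{X, Z_1\}$ and $\{Z_1, X\}$ coincide as sets), which remains $\alpha$-acyclic. The value $3$ in Definition~\ref{defn:k-expansion} is thus precisely calibrated to turn each of the three obstructions to \crpq acyclicity—self-loops, parallel edges, and genuine long cycles in $\calG$—into a chordless cycle of length at least three in the expanded CQ, which is exactly what is needed for the equivalence.
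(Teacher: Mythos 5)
Your proof is correct. The paper actually states this proposition without any proof, offering only the one-line remark that a self-loop becomes cyclic only after a $3$-expansion; your argument --- reducing $\alpha$-acyclicity of the ($2$-uniform) expanded hypergraph to being a forest, observing that double subdivision preserves forests in one direction, and exhibiting chordless cycles arising from self-loops, parallel edges, and longer cycles in the other --- is a sound and complete elaboration of exactly the reasoning the paper leaves implicit.
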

The reason we use $3$-expansions above (and not just 2-expansions) is because a self-loop will only result in a cyclic \cq after a $3$-expansion.

%The following definitions apply to both CRPQs and CQs.
A \crpq $Q$ is called {\em connected}  if its query graph is connected.  A variable in $Q$ is called a {\em leaf} if it is a leaf in the query graph
of $Q$. A query $Q'$ is a subquery of $Q$ if $\atoms(Q') \subseteq \atoms(Q)$ 
and $\free(Q') = \free(Q) \cap \vars(Q')$. 

We state some basic observations on evaluating \rpqs and \crpqs.

\begin{restatable}{proposition}{PropLinearRPQ}\label{prop:linear-rpq}
    (The \crpq $Q(X) = \regpath{R}(X, Y)$ is evaluatable in linear time.)
    Given an edge-labeled graph of size $N$, the \crpq $Q(X) = \regpath{R}(X, Y)$ can be evaluated in $O(N)$ time.
\end{restatable}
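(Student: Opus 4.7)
The plan is to reduce the evaluation of $Q(X)=\regpath{R}(X,Y)$ to a single graph reachability computation on a product automaton--graph construction. Since the query is fixed (data complexity), the regular expression $\regpath{R}$ has constant size, and therefore any nondeterministic finite automaton built from it also has constant size.

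First I would convert $\regpath{R}$ into an NFA $M=(S,\Sigma,\delta,s_0,F)$ of constant size $|S|=O(1)$ (e.g., via Thompson's construction, followed by removal of $\epsilon$-transitions). Next I would build a product digraph $H$ on vertex set $V\times S$ by placing an edge from $(v,s)$ to $(v',s')$ whenever $(v,\sigma,v')\in E$ and $s'\in\delta(s,\sigma)$. Because $|S|=O(1)$, the graph $H$ has $O(|V|+|E|)=O(N)$ vertices and edges, and can be constructed in $O(N)$ time by iterating over $E$ once and, for each edge, over the constantly many transitions of $M$ on its label.

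By the standard correctness of the product construction, a pair $(v,u)\in V\times V$ satisfies $\regpath{R}(v,u)$ iff $H$ has a directed path from $(v,s_0)$ to some $(u,s_f)$ with $s_f\in F$. To evaluate $Q(X)$, I only need to know, for each $v\in V$, whether the vertex $(v,s_0)$ in $H$ can reach some accepting vertex $(\cdot,s_f)$. This is a single-source/multi-target reachability problem, which I would solve by performing one BFS (or DFS) in the \emph{transpose} of $H$, starting simultaneously from every vertex of the form $(\cdot,s_f)$ with $s_f\in F$ (i.e., seeding them all into the initial queue). The BFS visits every reachable ancestor in $O(|H|)=O(N)$ time. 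Finally I return the set $\{v\in V\mid (v,s_0)\text{ was visited}\}$, which can be read off in a single pass.

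The only subtlety--and it is a mild one rather than a real obstacle--is the treatment of $\epsilon$: paths in $G$ of length zero have label $\epsilon$, so if $\epsilon\in L(\regpath{R})$ then every vertex of $V$ must be included in the output. This is handled uniformly by the NFA (either by keeping $\epsilon$-transitions and computing $\epsilon$-closures, or by $\epsilon$-eliminating $M$ and separately marking $s_0$ as accepting when $\epsilon\in L(\regpath{R})$); either way adds only $O(N)$ work. All remaining steps are standard linear-time graph routines, so the total running time is $O(N)$ as claimed.
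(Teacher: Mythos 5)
Your proposal is correct and follows essentially the same route as the paper's proof: build the product of $G$ with a constant-size NFA for $\regpath{R}$, then do a single backward (multi-source) reachability pass from the accepting-state vertices and read off which $(v,s_0)$ vertices were reached. The $\epsilon$-handling remark is a nice extra detail the paper leaves implicit, but the argument is the same.
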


The following proposition follows immediately from Definitions~\ref{defn:transpose-graph} and~\ref{defn:inverse-regex}. It basically says that to produce the output of an \rpq $\regpath{R}(Y, X)$
where $X$ and $Y$ are swapped, we can simply take the {\em inverse} of $\regpath{R}$ and evaluate it on the {\em transpose graph}.
\begin{restatable}{proposition}{PropRPQTranspose}
    \label{prop:rpq-transpose}
(The \crpq $Q(X, Y) = \regpath{R}(Y, X)$ is equivalent to an RPQ $\regpath{R}^{-1}(X, Y)$.)
The answer of the \crpq $Q(X, Y) = \regpath{R}(Y, X)$ on an edge-labeled graph $G$ is the same as the answer of the \rpq $\regpath{R}^{-1}(X, Y)$ on the transpose graph $G^T$.
\end{restatable}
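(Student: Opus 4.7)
The plan is to unfold both answer definitions and establish a bijective correspondence between witnessing paths. Given a pair of vertices $(x, y)$, it suffices to show that there exists a path in $G$ from $y$ to $x$ whose label lies in $L(\regpath{R})$ if and only if there exists a path in $G^T$ from $x$ to $y$ whose label lies in $L(\regpath{R}^{-1})$, since both sides of the proposition are, by definition, the set of such pairs $(x, y)$.

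First I would establish a language-level lemma: for every regular expression $\regpath{R}$ over $\Sigma$, a string $w' = \sigma'_1 \cdots \sigma'_k \in (\Sigma^{-1})^*$ belongs to $L(\regpath{R}^{-1})$ if and only if there exists $w = \sigma_1 \cdots \sigma_k \in L(\regpath{R})$ with $\sigma'_i = \sigma_{k-i+1}^{-1}$ for every $i \in [k]$. This is shown by structural induction on $\regpath{R}$ using Definition~\ref{defn:inverse-regex}. The only non-trivial case is concatenation: if $w \in L(\regpath{R}_1 \cdot \regpath{R}_2)$ decomposes as $w = w_1 w_2$, then its reverse-and-invert is the reverse-and-invert of $w_2$ followed by that of $w_1$, which matches exactly the definition $\regpath{R}^{-1} = \regpath{R}_2^{-1} \cdot \regpath{R}_1^{-1}$. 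The Kleene-star case reduces to repeated application of the concatenation case, and the remaining cases are immediate.

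Second, I would observe the graph-level correspondence that is immediate from Definition~\ref{defn:transpose-graph}: a vertex sequence $w_0 := y, w_1, \ldots, w_k := x$ together with edge labels $\sigma_1, \ldots, \sigma_k$ forms a path from $y$ to $x$ in $G$ if and only if the reversed vertex sequence $w_k, w_{k-1}, \ldots, w_0$ together with labels $\sigma_k^{-1}, \ldots, \sigma_1^{-1}$ forms a path from $x$ to $y$ in $G^T$. This holds edge by edge, since $(w_{i-1}, \sigma_i, w_i) \in E$ iff $(w_i, \sigma_i^{-1}, w_{i-1}) \in E^T$.

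Combining the two lemmas yields the claim in both directions: any path from $y$ to $x$ in $G$ labeled by some $w \in L(\regpath{R})$ corresponds to a path from $x$ to $y$ in $G^T$ whose label is precisely the reverse-and-invert of $w$, and by the language-level lemma this label lies in $L(\regpath{R}^{-1})$; conversely, a path from $x$ to $y$ in $G^T$ whose label lies in $L(\regpath{R}^{-1})$ corresponds to a path from $y$ to $x$ in $G$ whose label lies in $L(\regpath{R})$. I expect the only delicate step to be the inductive proof of the language-level lemma, and even there the obligation is purely bookkeeping, since Definition~\ref{defn:inverse-regex} is engineered specifically so that the concatenation case aligns with string reversal.
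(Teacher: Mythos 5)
Your proof is correct and is precisely the argument the paper intends: the paper gives no explicit proof of Proposition~\ref{prop:rpq-transpose}, stating only that it ``follows immediately'' from Definitions~\ref{defn:transpose-graph} and~\ref{defn:inverse-regex}, and your decomposition into a language-level reverse-and-invert lemma (by structural induction on $\regpath{R}$) plus the edge-by-edge correspondence between paths in $G$ and $G^T$ is the natural formalization of that remark. Nothing is missing.
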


\begin{restatable}{proposition}{PropFilterRPQ}
\label{prop:filter-rpq}
(The query $Q(X, Y) = \regpath{R}(X, Y) \wedge S(Y)$ is equivalent to an \rpq.)
Let $\regpath{R}(X, Y)$ be an \rpq over an edge-labeled graph $G=(V, E, \Sigma)$,
and $S \subseteq V$ be a set of vertices viewed as a unary relation.
The answer of the query $Q(X, Y) = \regpath{R}(X, Y) \wedge S(Y)$ on $G$ is the same as the answer of an \rpq $\regpath{R}'(X, Y)$ on an edge-labeled graph $G'=(V, E', \Sigma')$
that can be constructed from $G$ in linear time.
\end{restatable}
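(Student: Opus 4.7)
The plan is to enlarge the alphabet with one fresh end-marker symbol, attach that marker as a self-loop at every vertex of $S$, and concatenate the marker to $\regpath{R}$. So pick a fresh symbol $\$ \notin \Sigma$ and define
\[
\Sigma' \defeq \Sigma \cup \{\$\}, \qquad E' \defeq E \cup \{(y, \$, y) \mid y \in S\}, \qquad G' \defeq (V, E', \Sigma'),
\]
and let $\regpath{R}' \defeq \regpath{R} \cdot \$$. The construction visits each vertex of $S$ once and copies $E$ verbatim, so $G'$ and $\regpath{R}'$ are produced in $O(|V|+|E|+|S|) = O(N)$ time.

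For correctness, I will verify the two containments. If $(x,y)$ is in the answer of $\regpath{R}'(X,Y)$ on $G'$, then there is a path $x = w_0 \to w_1 \to \cdots \to w_k = y$ in $G'$ whose label lies in $L(\regpath{R}) \cdot \{\$\}$. Because $L(\regpath{R}) \subseteq \Sigma^*$ and $\$ \notin \Sigma$, the last edge must be the unique type of $\$$-edge in $G'$, namely a self-loop $(y',\$,y')$ with $y' \in S$; hence $w_{k-1} = w_k = y$ and $y \in S$. The prefix $w_0, \ldots, w_{k-1}$ uses only edges of $E$ (since it is labeled by a string in $\Sigma^*$), so it witnesses a path in $G$ from $x$ to $y$ whose label is in $L(\regpath{R})$, giving $(x,y) \in Q(G)$. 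Conversely, if $(x,y) \in Q(G)$, then $y \in S$, so $(y,\$,y) \in E'$, and concatenating this self-loop to the $\regpath{R}$-path from $x$ to $y$ yields a path in $G'$ whose label lies in $L(\regpath{R}) \cdot \{\$\} = L(\regpath{R}')$.

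There is no real obstacle: the only point requiring care is ensuring that $\$$ cannot be used \emph{inside} the $\regpath{R}$-part of the path, which is immediate from the fact that $\regpath{R}$ is a regular expression over $\Sigma$ and $\$$ is fresh. Note also that this construction keeps the vertex set unchanged (as required by the statement), since the marker edges are self-loops rather than edges to new copy vertices.
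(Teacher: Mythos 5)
Your construction is exactly the paper's: a fresh marker symbol, self-loops at the vertices of $S$ labeled by that marker, and $\regpath{R}' = \regpath{R}\cdot\$$. The proposal is correct and even spells out the two-directional correctness check that the paper leaves implicit.
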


%%%%%%%%%%%%%%%%%%%%%%%%%%%%%%%%%%%%%%%%%%%%%%%%%%%%%%%%%%%%%%%%%%%%%%%%%%%%%%%%%%%%%%%%%%%%
\subsection{Width Measures}
\label{subsec:width-measures}
The following width measures were originally defined for conjunctive queries (CQs), but they extend naturally to \crpqs.

Given a \crpq (or CQ) $Q$,
a {\em tree decomposition} of $Q$ is
a pair $(T,\chi)$, where $T$ is a tree and $\chi: \nodes(T) \rightarrow 2^{\vars(Q)}$ is a
map from the nodes of $T$ to subsets of $\vars(Q)$ that satisfies the following properties:
\begin{itemize}[leftmargin=*]
    \item For every atom $\regpath{R}(X,Y)\in\atoms(Q)$, there is a node $t \in \nodes(T)$ such that $X,Y \in \chi(t)$. (In case of a CQ $Q$, for every atom $R(\bm X) \in \atoms(Q)$, there is a node $t \in \nodes(T)$ such that  $\bm X \subseteq \chi(t)$.)

    \item For every variable $X \in \vars(Q)$, the set $\setof{t}{X \in \chi(t)}$ forms a connected sub-tree of $T$.
\end{itemize}
Each set $\chi(t)$ is called a {\em bag} of the tree decomposition.
Let $\td(Q)$ denote the set of all tree decompositions of $Q$.
A tree decomposition $(T,\chi)$ of $Q$ is called {\em free-connex}~\cite{BaganDG07,10.1145/2448496.2448498} if
it contains a (possibly empty) connected subtree $T_f$ with nodes $\nodes(T_f)$ that contains all and only the free variables of $Q$, i.e., $\bigcup_{t \in \nodes(T_f)} \chi(t) = \free(Q)$.
Let $\ftd(Q)$ denote the set of all free-connex tree decompositions of $Q$.
Note that if the query is {\em Boolean} (i.e. $\free(Q)=\emptyset$)
or {\em full} (i.e. $\free(Q)=\vars(Q)$),
then every tree decomposition of $Q$ is free-connex,
in which case $\ftd(Q) = \td(Q)$.

Given a multigraph (or hypergraph) $\calG=(\calV, \calE)$, and a set of vertices $\bm Z \subseteq \calV$,
the {\em fractional edge cover number} of $\bm Z$, $\rho^*_{\calG}(\bm Z)$, is defined as the optimal value of the following linear program:
\begin{align}
    \text{minimize} \sum_{e \in \calE} x_e \label{eq:fractional-edge-cover}\quad\quad
    \text{subject to} \quad  \sum_{e: v \in e} x_e \geq 1, \quad \forall v \in \bm Z \quad\quad\text{and}\quad\quad
    x_e \geq 0, \quad \forall e \in \calE
\end{align}
Given a \crpq (or CQ) $Q$ and a set of variables $\bm Z \subseteq \vars(Q)$,
we use $\rho_Q^*(\bm Z)$ to denote $\rho^*_{\calG}(\bm Z)$ where $\calG$ is the query graph of $Q$.

\begin{definition}
[Free-Connex Fractional Hypertree Width]
Given a \crpq (or CQ) $Q$, the {\em fractional hypertree width}, and the {\em free-connex fractional hypertree width} of $Q$, are defined respectively as follows:
\begin{align}
    \fhtw(Q) &\quad\defeq\quad \min_{(T,\chi)\in\td(Q)}\quad \max_{t \in \nodes(T)}\quad \rho^*_Q(\chi(t)),\label{eq:fhtw}\\
    {\color{red}\fnfhtw}(Q) &\quad\defeq\quad \min_{(T,\chi)\in{\color{red}\ftd}(Q)}\quad \max_{t \in \nodes(T)}\quad \rho^*_Q(\chi(t)).\label{eq:fn-fhtw}
\end{align}
\label{defn:fn-fhtw}
\end{definition}

We state below some basic properties of the free-connex fractional hypertree width.
Similar to  $\rho^*_Q(\bm Z)$ (and $\rho^*_{\calG}(\bm Z)$) defined above,
we define the {\em integral} edge cover number $\rho_Q(\bm Z)$ (and $\rho_{\calG}(\bm Z)$)
by replacing the constraint $x_e \geq 0$ in the linear program~\eqref{eq:fractional-edge-cover} with $x_e \in \{0,1\}$, thus making it an integer linear program.
We define the {\em generalized hypertree width} $\ghtw(Q)$, and the {\em free-connex generalized hypertree width} $\fnghtw(Q)$, similar to Equations~\eqref{eq:fhtw} and~\eqref{eq:fn-fhtw} respectively, where we replace $\rho^*_Q$ with $\rho_Q$.
An acyclic \crpq (or acyclic \cq) is {\em free-connex} if it has a free-connex tree decomposition
where for every bag $\chi(t)$, there is an atom containing all variables in $\chi(t)$.
The following properties of the free-connex fractional hypertree width are folklore and have been proved in various forms in the literature, e.g.,~\cite{DBLP:journals/tods/OlteanuZ15,faq,xiao-os-yannakakis,BaganDG07}.

\begin{proposition}
    \label{prop:fn-fhtw-properties}
    For any acyclic \crpq (or acyclic CQ) $Q$, the following claims hold:
    \begin{itemize}
        \item $\fnfhtw(Q)$ is an integer. In particular, $\fnfhtw(Q) = \fnghtw(Q)$.
        \item $\fnfhtw(Q)$ is 1 if and only if $Q$ is free-connex.
    \end{itemize}
\end{proposition}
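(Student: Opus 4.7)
The plan is to reduce both bullets to a single technical lemma: for any acyclic \crpq (or acyclic CQ) $Q$ and any $\bm Z \subseteq \vars(Q)$, the fractional edge cover LP in~\eqref{eq:fractional-edge-cover} has an integral optimum, i.e., $\rho^*_Q(\bm Z) = \rho_Q(\bm Z)$. Once this lemma is in hand, the first bullet is immediate by swapping the LP objective with its integer counterpart inside Definition~\ref{defn:fn-fhtw}: one gets $\fnfhtw(Q) = \min_{(T,\chi)\in\ftd(Q)} \max_t \rho^*_Q(\chi(t)) = \min_{(T,\chi)\in\ftd(Q)} \max_t \rho_Q(\chi(t)) = \fnghtw(Q)$, and the right-hand side is manifestly an integer.

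To prove the integrality lemma, I would split on the query class. For an acyclic \crpq, the query graph is by definition an acyclic undirected multigraph, hence a forest; forests are bipartite, and the incidence matrix of a bipartite (multi)graph is totally unimodular, so the LP~\eqref{eq:fractional-edge-cover} with any right-hand side indicator vector has an integral optimum. For an acyclic \cq, the query graph is an $\alpha$-acyclic hypergraph, whose incidence matrix is balanced; this is a classical ingredient (used in~\cite{DBLP:journals/tods/OlteanuZ15,faq,xiao-os-yannakakis}) that also yields integrality of~\eqref{eq:fractional-edge-cover} for every $\bm Z$. I expect this acyclic-hypergraph step to be the main obstacle, in the sense that it is the only nontrivial ingredient; it can either be invoked as a known fact or re-derived via a GYO-style elimination argument producing an integral cover from any fractional one.

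For the second bullet, both directions are short. For $(\Leftarrow)$, assume $Q$ is free-connex. Then by definition there is a free-connex tree decomposition in which every bag $\chi(t)$ is contained in some atom, giving $\rho^*_Q(\chi(t)) \leq 1$; conversely any nonempty bag has $\rho^*_Q(\chi(t)) \geq 1$ (any single vertex imposes total LP weight at least $1$), so $\fnfhtw(Q) = 1$. For $(\Rightarrow)$, assume $\fnfhtw(Q) = 1$. By the first bullet we have $\fnghtw(Q) = 1$, so there exists a free-connex tree decomposition $(T, \chi)$ with $\rho_Q(\chi(t)) \leq 1$ for every $t$. An integral edge cover of weight at most $1$ on a nonempty bag must use exactly one atom that contains every variable of the bag; this is precisely the condition in the paper's definition of a free-connex acyclic query, so $Q$ is free-connex.

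The only real technical content sits in the integrality lemma; the rest is unpacking definitions and noting that an LP with an integral optimum matches its integer-programming counterpart. I would therefore structure the written proof as: (i) state and prove the integrality lemma, treating the \crpq and \cq cases separately but briefly; (ii) deduce $\fnfhtw = \fnghtw$, hence integrality; (iii) derive the two implications of the free-connex characterization as two-line arguments using the lemma plus the paper's definition of a free-connex acyclic query.
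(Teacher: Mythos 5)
The paper itself does not prove this proposition: it is stated as folklore with citations to \cite{DBLP:journals/tods/OlteanuZ15,faq,xiao-os-yannakakis,BaganDG07}, and the closest in-paper machinery is the combination of Lemma~\ref{lem:fhtw_con_comp} and Proposition~\ref{prop:number_free_rho_star}, which yields integrality through the \boundconnected-component characterization $\fnfhtw(Q)=\max(\max_i \rho^*_{Q_i}(\free(Q_i)),1)$ together with the observation that each $\rho^*_{Q_i}(\free(Q_i))$ is $1$ or $|\free(Q_i)|$. Your route is different and more direct: you reduce everything to the single claim that the edge-cover LP~\eqref{eq:fractional-edge-cover} is integral for every $\bm Z\subseteq\vars(Q)$ when $Q$ is acyclic, and then both bullets fall out by unfolding Definition~\ref{defn:fn-fhtw} and the paper's definition of a free-connex acyclic query. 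That part of your argument, including both directions of the second bullet, is correct, and your approach has the advantage of covering acyclic \cqs with atoms of arbitrary arity in one stroke rather than only binary-atom query graphs.

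The one genuine flaw is your justification of the integrality lemma in the \cq case: it is \emph{not} true that the incidence matrix of an $\alpha$-acyclic hypergraph is balanced. Take the hypergraph with edges $\{a,b\},\{b,c\},\{a,c\},\{a,b,c\}$; it is $\alpha$-acyclic (GYO reduction succeeds thanks to the big edge), yet the submatrix on rows $a,b,c$ and the three binary edges is the $3\times 3$ odd square with two ones per row and column, so the matrix is not balanced. (Balancedness and total balancedness correspond to stronger acyclicity notions than $\alpha$-acyclicity.) The conclusion you need is nevertheless a classical fact, and your fallback, a GYO-style ear elimination that turns any fractional cover into an integral one of no larger weight, is the correct way to prove it; note that you should either run that elimination directly with the requirement set $\bm Z$, or first observe that restricting an $\alpha$-acyclic hypergraph to $\bm Z$ (replacing each edge $e$ by $e\cap\bm Z$) preserves $\alpha$-acyclicity via join trees, so that the subset version reduces to the classical all-vertices statement. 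The \crpq case via total unimodularity of forest incidence matrices is fine as written, since acyclic query multigraphs have no self-loops or parallel edges and are therefore simple forests.
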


\begin{definition}[Trivial \crpqs]
    An acyclic \crpq $Q$ is called {\em trivial} if each connected subquery of $Q$ has at most one free variable.
    \label{defn:trivial-crpq}
\end{definition}
Trivial acyclic \crpqs are known to be solvable in time $O(N+\out)$~\cite{crpq-icdt26}.

\begin{restatable}[$\fnfhtw$ of $k$-expansion]{proposition}{PropKExpansionFnfhtw}
    Given an acyclic \crpq $Q$ and a natural number $k \geq 2$,
    let $Q_k'$ be the $k$-expansion of $Q$ (Definition~\ref{defn:k-expansion}).
    The following holds:
    \begin{align}
        \fnfhtw(Q_k') \quad=\quad \begin{cases}
            1 & \text{if } Q \text{ is trivial},\\
            \max(\fnfhtw(Q), 2) & \text{if } Q \text{ is non-trivial}.
        \end{cases}
    \end{align}
    \label{prop:k-expansion-fnfhtw}
\end{restatable}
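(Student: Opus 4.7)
My plan is to case-split on whether $Q$ is trivial or non-trivial. In the trivial case I would directly build a free-connex TD of $Q_k'$ of width $1$: since every connected component of $Q$ (and hence of $Q_k'$) has at most one free variable, each subdivided component admits a TD whose bags are single edges (so $\rho^*\leq 1$), and I assemble the free-connex subtree $T_f$ by adding one singleton bag $\{v\}$ per component with free variable $v$, gluing distinct components together through empty bridging bags so that $T_f$ is a connected subtree with $\bigcup_{t\in T_f}\chi(t)=\free(Q)$.

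For the non-trivial lower bound, I would show $\fnfhtw(Q_k') \geq 2$ and $\fnfhtw(Q_k') \geq \fnfhtw(Q)$ separately. For the first, pick two free variables $u,v$ in a common connected component (guaranteed by non-triviality); since $k\geq 2$, every query-graph path between them in $Q_k'$ contains at least one fresh bound $Z_i$. Because bags of $T_f$ contain only free variables, a subtree-Helly argument applied to $T_u,T_{Z_i},T_v$ forces any free-connex TD to contain a bag $B \supseteq \{u,v\}$, and since no atom of $Q_k'$ contains both we get $\rho^*_{Q_k'}(B) \geq 2$. For the second, I iterate the Helly argument across all free variables of the component $C^*$ realizing $\fnfhtw(Q)$, obtaining a single bag that contains all of $F_{C^*} := \free(Q) \cap \vars(C^*)$; since every atom of $Q_k'$ covers at most one variable of $\vars(Q)$, $\rho^*_{Q_k'}(F_{C^*}) = |F_{C^*}|$, which I combine with the structural fact $\fnfhtw(Q) \leq |F_{C^*}|$ for non-trivial acyclic $Q$ to conclude.

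For the non-trivial upper bound, I would start from an optimal free-connex TD $(T,\chi)$ of $Q$ and construct a free-connex TD of $Q_k'$ inheriting $T_f$. The new bags outside $T_f$ comprise (a) chains of the form $\{X,Z_1,Y\},\{Z_1,Z_2,Y\},\ldots,\{Z_{k-1},Y\}$ handling each atom $\regpath{R}(X,Y)$'s expansion (each of $\rho^*_{Q_k'}\leq 2$), and (b) enriched versions of the inherited bags of $T$, augmented with carefully chosen intermediate $Z_i$-variables to preserve every TD connectivity requirement. The main obstacle is that the inherited bags suffer $\rho^*$-inflation under $Q_k'$: because each atom of $Q_k'$ covers at most one variable of $\vars(Q)$, a bag of $Q$-variables alone has $\rho^*_{Q_k'}$ equal to its cardinality, which can be as large as $2\fnfhtw(Q)$. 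The enrichment with specific $Z$-variables is what lets pairs of $Q$-variables be covered jointly by $Q_k'$-atoms and brings each bag's $\rho^*_{Q_k'}$ down to $\max(\fnfhtw(Q),2)$; executing this enrichment while verifying all the TD connectivity properties constitutes the technical crux of the proof.
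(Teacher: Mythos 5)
Your trivial case and your lower bound are essentially sound. The Helly-style argument is a legitimate (and more self-contained) alternative to the paper's route, which instead derives the existence of a bag containing all of $\free(Q_i)$ for each \boundconnected component $Q_i$ via bound-leading variable elimination orders (Lemmas~\ref{lem:elimination_order_tree-decomposition} and~\ref{lem:elimination_order_icludes_components}). One precision issue: the argument only works when the two free variables $u,v$ are joined by a path whose \emph{interior} vertices are all bound (otherwise a free intermediate vertex lets $T_f$ separate $u$ from $v$, and no common bag is forced). So you must run it within a \boundconnected component (Definition~\ref{defn:bound-connected-components}), not merely a connected one; with that fix, pairwise intersection of the subtrees $T_u$ plus the Helly property of subtrees of a tree does yield a single bag containing all of $\free(Q_i)$, and $\rho^*_{Q_k'}$ of that set equals its cardinality since no atom of $Q_k'$ contains two original variables.

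The upper bound, however, has a genuine gap, and it is precisely the part you defer as ``the technical crux.'' Your repair strategy --- augmenting the inherited bags with intermediate $Z_i$-variables so that ``pairs of $Q$-variables be covered jointly by $Q_k'$-atoms,'' thereby bringing $\rho^*_{Q_k'}$ of a bag \emph{down} --- cannot work, for two reasons. First, $\rho^*_{\calG}(\bm Z)$ is monotone \emph{increasing} in $\bm Z$ (enlarging the bag only adds constraints to the LP in Eq.~\eqref{eq:fractional-edge-cover}), so enrichment can never decrease a bag's fractional edge cover number. Second, every atom of $Q_k'$ with $k\geq 2$ contains at most one original variable of $Q$, so no choice of bag contents ever lets a single $Q_k'$-atom cover two $Q$-variables; a bag inheriting $2\fnfhtw(Q)$ original variables really does have $\rho^*_{Q_k'}$ equal to $2\fnfhtw(Q)$ no matter what you add to it. The inherited decomposition must be abandoned, not patched. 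The paper's proof avoids this entirely by applying the two-sided characterization of Lemma~\ref{lem:fhtw_con_comp}, $\fnfhtw(\cdot)=\max(\max_i\rho^*_{Q_i}(\free(Q_i)),1)$ over \boundconnected components, to \emph{both} $Q$ and $Q_k'$ (the components of $Q_k'$ being exactly the $k$-expansions of those of $Q$), so that the whole proposition reduces to comparing $\rho^*_{Q_i}(\free(Q_i))$ with $\rho^*_{Q_i'}(\free(Q_i'))=|\free(Q_i)|$; by Proposition~\ref{prop:number_free_rho_star} these agree except in the single-edge case, where they are $1$ and $2$ respectively, which is exactly where the $\max(\cdot,2)$ comes from. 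The $\leq$ direction of Lemma~\ref{lem:fhtw_con_comp} --- i.e., actually constructing a free-connex tree decomposition of $Q_k'$ whose bags have small $\rho^*_{Q_k'}$ --- is the content your proposal is missing; it is cited from~\cite{xiao-os-yannakakis} rather than reproved, but some such construction is indispensable.
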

As an example, the \crpq $Q(X, Y) = \regpath{R}(X, Y)$ has $\fnfhtw(Q) = 1$,
but its $k$-expansion is $Q_{k\text{-path}}$ (Eq.~\eqref{eq:intro-k-path}), which has
$\fnfhtw(Q_{k\text{-path}}) = 2$ for all $k \geq 2$.

%%%%%%%%%%%%%%%%%%%%%%%%%%%%%%%%%%%%%%%%%%%%%%%%%%%%%%%%%%%%%%%%%%%%%%%%%%%%%%%%%%%%%%%%%%%%
\subsection{Data Statistics}
Given a set of variables $\bm X$, we refer to a tuple $\bm x$ over $\bm X$ as an {\em $\bm X$-tuple}.
\begin{definition}[Degrees in a relation]
    \label{defn:degree}
    Let $R(\bm Z)$ be a relation over a variable set $\bm Z$,
    and $\bm X, \bm Y \subseteq \bm Z$ be two (disjoint) subsets.
    Given an $\bm X$-tuple $\bm x$,
    the {\em degree of $\bm Y$ given $\bm X=\bm x$ in $R$}, denoted by $\deg_R(\bm Y|\bm X=\bm x)$, is the number of different $\bm Y$-tuples $\bm y$ that appear in $R$ together with $\bm x$. The {\em degree of $\bm Y$ given $\bm X$ in $R$}, denoted by $\deg_R(\bm Y|\bm X)$, is the maximum degree of $\bm Y$ given $\bm X = \bm x$ in $R$ over all $\bm X$-tuples $\bm x$:
    \begin{align*}
        \deg_R(\bm Y|\bm X=\bm x) \quad&\defeq\quad |\pi_{\bm Y}(\sigma_{\bm X=\bm x}(R))|,\\
        \deg_R(\bm Y|\bm X) \quad&\defeq\quad \max_{\bm x\in \pi_{\bm X}(R)} \deg_R(\bm Y|\bm X=\bm x).
    \end{align*}
\end{definition}

\section{Results}
\label{sec:results}

The following theorem is our main result:
\begin{restatable}{theorem}{ThmMain}
    \label{thm:main}
    Let $Q$ be any acyclic {\crpq}, and $G$ be an input edge-labeled graph of size $N$.
    Then, $Q$ over $G$ can be answered in time $O(N + N\cdot \out^{1-\frac{1}{{\max(}\fnfhtw(Q){, 2)}}}+\out)$
    in data complexity, where $\out$ is the output size,
    and $\fnfhtw(Q)$ is the free-connex fractional hypertree width of $Q$ (Definition~\ref{defn:fn-fhtw}).
\end{restatable}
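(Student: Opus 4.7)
The plan is to prove Theorem~\ref{thm:main} by combining three existing ingredients with one new insight. The ingredients are: (i)~the linear-time single-atom evaluation of Proposition~\ref{prop:linear-rpq}, together with the transposition and filtering reductions of Propositions~\ref{prop:rpq-transpose} and~\ref{prop:filter-rpq}, which let us treat a calibrated RPQ atom, even one whose endpoints have been pre-filtered by a unary relation, as a black box costing $O(N+\out)$; (ii)~the calibration-based Yannakakis-style sweep from~\cite{crpq-icdt26}, which already gives the $\out^{1/2}$ factor needed when $\fnfhtw(Q)=1$ and so handles the free-connex case; and (iii)~the output-sensitive acyclic CQ evaluation of~\cite{xiao-os-yannakakis}, whose heavy/light dichotomy on bound variables yields exactly the exponent $1-1/\fnfhtw$ appearing in Eq.~\eqref{eq:intro-cq-runtime}. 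The new insight is that the same dichotomy can be applied directly to a CRPQ, provided the degree statistics used to classify values are computed with respect to the \emph{product graphs} of the RPQ atoms (as in Definition~\ref{defn:product-graph}) rather than materialised binary relations. By Proposition~\ref{prop:k-expansion-fnfhtw}, the target runtime Eq.~\eqref{eq:intro-runtime} equals the CQ runtime Eq.~\eqref{eq:intro-cq-runtime} applied to the $k$-expansion of $Q$, so the high-level accounting is already dictated.

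Concretely, I would first reduce to a single non-trivial connected acyclic CRPQ: trivial connected subqueries (Definition~\ref{defn:trivial-crpq}) are solved in $O(N+\out)$ by~\cite{crpq-icdt26}, and distinct connected components can be handled independently and recombined. For the remaining query I would fix a free-connex tree decomposition $(T,\chi)$ realising $\fnfhtw(Q)$ and identify its distinguished free-connex subtree $T_f$. A calibration pass in the spirit of~\cite{crpq-icdt26} then removes, for every atom $\regpath{R}_i(X_i,Y_i)$, every vertex which cannot be extended to a full answer: the two directions of the sweep reduce, via Propositions~\ref{prop:rpq-transpose} and~\ref{prop:filter-rpq}, to evaluating a single RPQ atom possibly intersected with a unary relation on its endpoint, which Proposition~\ref{prop:linear-rpq} solves in $O(N)$ time per pass. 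After calibration the input still has size $O(N)$, but every surviving vertex participates in at least one answer tuple.

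The main step is handling the non-free-connex subtrees of $(T,\chi)$ hanging off $T_f$. Set a threshold $\Delta \defeq \out^{1/w}$ where $w\defeq\max(\fnfhtw(Q),2)$. For each such subtree rooted at a bag $\chi(t)$, call the shared variable with $T_f$ \emph{heavy} if more than $\Delta$ witnesses inside the subtree share the same value for this variable, and \emph{light} otherwise. Light values contribute total work $O(N\cdot\Delta) = O(N\cdot\out^{1-1/w})$, since there are at most $N$ such values and each is enumerated by a BFS in the appropriate product graph, in the style of~\cite{rpq-pods25}. Heavy values, of which there are at most $\out/\Delta$, are absorbed into an enlarged free-connex core by materialising, per heavy value, a compact witness list obtained by one further RPQ evaluation per atom along the path to $T_f$; the enlarged free-connex query is then solved in $O(N + N\cdot\out^{1/2}+\out)$ by~\cite{crpq-icdt26}. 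Balancing the two terms reproduces Eq.~\eqref{eq:intro-runtime}.

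The obstacle I anticipate is the heavy-side accounting. In the CQ setting of~\cite{xiao-os-yannakakis} the $\fnfhtw$-bound follows by chaining standard degree inequalities along each bag, with the ``size'' of an edge bounded by $N$. In the CRPQ setting each bag-edge stands for a product graph that is never explicitly materialised and whose output could be as large as $N^2$, so the degrees of Definition~\ref{defn:degree} must be reinterpreted in terms of reachability sets. The key technical claim I would need to establish is that, \emph{after calibration}, the number of heavy values at each bag of width $w$ is still controlled by $\out/\Delta$ in exactly the same way as in~\cite{xiao-os-yannakakis}, so that the per-bag product-of-degrees argument telescopes along $T$ to $N\cdot\out^{1-1/w}$. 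Once this alignment between product-graph reachability and relational degree is formalised, Proposition~\ref{prop:k-expansion-fnfhtw} matches the bound to that of the $k$-expansion and the theorem follows.
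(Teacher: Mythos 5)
Your high-level instinct --- a heavy/light dichotomy with threshold tied to $\out$ and $w=\max(\fnfhtw(Q),2)$, with degrees read off product graphs rather than materialised relations --- is the right one, but the proposal has a genuine gap exactly at the point you flag yourself, and the paper resolves it with machinery your sketch does not contain. The paper does not work bag-by-bag on a tree decomposition. It first splits $Q$ into its \emph{bound-connected components}, filters each component by the linear-time single-free-variable queries $Q_X$ so that each component's output is provably at most $\out$, and converts each component into a \emph{free-leaf} CRPQ whose number of free variables is at most $w$ (via Lemma~\ref{lem:fhtw_con_comp} and Proposition~\ref{prop:number_free_rho_star}). The core of the argument is then Lemma~\ref{lmm:free-leaf}: for a free-leaf query with $\ell$ free variables one sets $\Delta=\out^{1-1/\ell}$ (note: not $\out^{1/\ell}$; your assignment $\Delta=\out^{1/w}$ makes your claimed light-side cost $O(N\cdot\Delta)=O(N\cdot\out^{1-1/w})$ false for $w\ge 3$), propagates $(X,\Delta)$-restrictions bottom-up through the product graphs in $O(N\cdot\Delta)$ per pass (Propositions~\ref{prop:restriction-propagation} and~\ref{prop:restriction-composition}), and --- crucially --- \emph{re-roots the query tree at each free variable in turn}. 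Each pass either reports the light root values or shrinks that variable's domain to at most $\out/\Delta=\out^{1/\ell}$ heavy values; after $\ell-1$ passes the product of the restricted domains is at most $\Delta$, so the final root is forced to be entirely light. This iteration is what makes the heavy side terminate, and it is absent from your proposal.

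Your alternative for the heavy side --- absorbing the at most $\out/\Delta$ heavy values into an ``enlarged free-connex core'' and solving that core in $O(N+N\cdot\out^{1/2}+\out)$ via~\cite{crpq-icdt26} --- is the step that would fail. Promoting the heavy variables to free is precisely the move of~\cite{crpq-icdt26} that this paper is written to avoid: the enlarged free-connex query's output can be much larger than $\out$, and nothing in your sketch bounds it. Likewise, your ``key technical claim'' that the number of heavy values per bag is controlled by $\out/\Delta$ after calibration is not something that follows from calibration alone; it requires that the $\Delta$ witnesses retained per heavy value project to $\Delta$ \emph{distinct output tuples of $Q$}, which the paper guarantees only because (i) the free-leaf transformation makes the retained tuples range over free variables, and (ii) the $Q_X$ filters force every component's output to embed into the output of $Q$. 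Without the bound-connected decomposition, the $(X,\Delta)$-restriction propagation, and the re-rooting loop, the proposal reduces to a plan whose central lemma is still to be proved.
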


Our algorithm for acyclic \crpqs is {\em combinatorial}, i.e., it does not use fast matrix multiplication.
In order to assess how good its runtime is,
we compare it below to the best known runtime for acyclic \cqs over all combinatorial algorithms ~\cite{xiao-os-yannakakis,paris-os-yannakakis}.
\begin{theorem}[\cite{xiao-os-yannakakis}]
    \label{thm:acyclic-cq}
    Let $Q$ be any acyclic {\color{red}\cq} and $D$ be an input database instance of size $N$.
    Then, $Q$ over $D$ can be answered in time $O(N + N\cdot \out^{1-\frac{1}{\color{red}\fnfhtw(Q)}}+\out)$
    in data complexity, where $\out$ is the output size,
    and $\fnfhtw(Q)$ is the free-connex fractional hypertree width of $Q$.
\end{theorem}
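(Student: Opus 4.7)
The plan is to reproduce, at the sketch level, the output-sensitive Yannakakis variant of Xiao~et~al., which augments classical Yannakakis with a threshold/heavy--light dichotomy keyed to the free-connex width. Fix a free-connex tree decomposition $(T,\chi)$ of $Q$ witnessing $w \defeq \fnfhtw(Q)$, with the connected subtree $T_f$ covering exactly $\free(Q)$. By Proposition~\ref{prop:fn-fhtw-properties} we may assume $w$ is an integer and that each bag is covered by an integral edge cover of size at most $w$. Root $T$ at a node of $T_f$. First, perform the standard semi-join reducer passes bottom-up and top-down along $T$; this costs $O(N)$ and guarantees that every surviving tuple participates in some global answer.

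Next, set the threshold $\Delta \defeq \out^{1/w}$ (replace by $1$ if $\out \le 1$). For each bag $\chi(t)$ with integral cover $\{R_{i_1},\ldots,R_{i_k}\}$, $k\le w$, the goal is not to materialize the full join $\bowtie_j R_{i_j}$ (which is the step that makes the plain Yannakakis algorithm non-output-sensitive), but to produce, for every assignment $\bm f$ to $\chi(t)\cap\free(Q)$ that actually survives as a prefix of some output, a witness extension to $\chi(t)$. Classify each value $v$ appearing in a variable of $\chi(t)$ as \emph{heavy} if $\deg_{R_{i_j}}(\cdot \mid v) > \Delta$ for some covering atom, and \emph{light} otherwise, using Definition~\ref{defn:degree}.

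Now handle the bag in two pieces. In the \textbf{light case}, intersect every covering relation with the light part of the bag; because every step along the edge cover multiplies by at most $\Delta$, an AGM-style counting argument bounds the total work at the bag by $O(N\cdot \Delta^{w-1}) = O(N\cdot \out^{1-1/w})$. In the \textbf{heavy case}, a value is heavy only if it appears $\ge \Delta$ times in some $R_{i_j}$, so there are at most $N/\Delta$ heavy values per variable; enumerate heavy witnesses by looping over the constantly-many (in data complexity) heavy variable choices and, for each, reducing to a residual sub-bag of width $\le w-1$ on which we recurse with the same threshold. A routine induction on $w$ shows the total heavy cost is also $O(N\cdot \out^{1-1/w})$, matching the light term. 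Once every bag has been processed, a final Yannakakis-style top-down pass along $T_f$ enumerates the answers in $O(\out)$ additional time, giving the claimed $O(N + N\cdot \out^{1-1/w} + \out)$.

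The main obstacle is the analysis of the heavy/light decomposition across a bag of width $w>2$: one must show that the recursion on residual sub-bags telescopes cleanly and does not incur an extra $\log$ or $\Delta$ factor, and that the cover-based AGM bound applies uniformly even when the bag is covered by fewer than $w$ atoms. This is handled by choosing, for each bag, an integral edge cover of exactly the optimal size guaranteed by $\fnghtw(Q)=\fnfhtw(Q)$ (Proposition~\ref{prop:fn-fhtw-properties}) and invoking the standard AGM inequality on the light-restricted relations. A minor additional care is required to ensure that when $w=1$ (i.e.\ $Q$ is free-connex), the middle term vanishes and the algorithm degenerates to classical Yannakakis with complexity $O(N+\out)$, which is consistent with $N\cdot \out^{0}=N$ being absorbed into $O(N+\out)$.
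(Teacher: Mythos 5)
A framing remark first: the paper does not prove this theorem. It is imported from~\cite{xiao-os-yannakakis} purely as a point of comparison, so there is no in-paper proof to measure your sketch against. The closest internal analogue is the proof of Lemma~\ref{lmm:free-leaf} in Appendix~\ref{app:results:free-leaf}, which implements the same kind of threshold-based output-sensitive evaluation (for \freeleaf \crpqs), and comparing your sketch against that argument exposes a genuine gap.

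The gap is in your heavy/light dichotomy. You declare a value heavy when its degree in some \emph{base relation} exceeds $\Delta$, which gives you at most $N/\Delta$ heavy values per variable. The argument that actually works --- both in Lemma~\ref{lmm:free-leaf} and in the cited work --- defines heaviness with respect to the \emph{output}: a value $w$ of a free variable $W$ is heavy when more than $\Delta$ distinct output tuples extend it, which yields $|H_W| \le \out/\Delta$ because distinct heavy values own disjoint sets of output tuples. This is the step that lets you charge the heavy-side work to $\out$, and your sketch does not contain it. Concretely, your heavy recursion does not telescope: restricting to the $N/\Delta$ heavy values of one variable and recursing on a residual of width $w-1$ costs on the order of $(N/\Delta)\cdot N\cdot\Delta^{w-2} = N^2\cdot\Delta^{w-3}$ if each heavy value is handled separately, which is not $O(N\cdot\out^{1-1/w})$; and the bound $N/\Delta$ by itself says nothing about how many output tuples the heavy values generate, so the work cannot be amortized against $\out$. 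Moreover, the recursion ignores the one difficulty that makes non-free-connex bags hard, namely deduplication across the projected-out bound variables: two distinct heavy values of a bound variable can produce the same tuple over $\free(Q)$, so enumerating ``heavy witnesses'' value by value can exceed $O(\out)$ even after perfect semijoin reduction. The working scheme instead re-roots at each free variable in turn, reports all output tuples whose current root value is light, shrinks that variable's domain to its at most $\out/\Delta$ output-heavy values, and observes that once every free variable but one has a domain of size $\out^{1/\ell}$, the last one is forced to be light. That mechanism, together with the doubling trick for guessing $\out$ (which you also omit), is what your sketch would need to be repaired.
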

There are two differences between the above two theorems (besides the obvious fact that one is for \crpqs while the other is for \cqs).
The first (subtle) difference
lies in the definition of acyclicity for \crpqs versus \cqs. See Remark~\ref{rem:acyclic_crpq} for a discussion.
The other difference is in the runtime expression, where in the \crpq case,
we take the maximum of $\fnfhtw(Q)$ and $2$.
Taking this maximum is unavoidable by Proposition~\ref{prop:k-expansion-fnfhtw}.
This is because solving a \crpq $Q$ is at least as hard as solving its $k$-expansion $Q_k'$ (which is a \cq) for every $k \geq 2$.
Excluding {\em trivial}\footnote{See Definition~\ref{defn:trivial-crpq}. Trivial \crpqs are known to be solvable in $O(N+\out)$ time~\cite{crpq-icdt26}.} \crpqs,
Proposition~\ref{prop:k-expansion-fnfhtw} says $\fnfhtw(Q'_k) = \max(\fnfhtw(Q), 2)$.
This shows that it is not possible to improve upon our runtime for acyclic \crpqs~ {\em without} improving the runtimes of the state-of-the-art combinatorial algorithms for acyclic \cqs.
In other words, our result links the fate of output-sensitive evaluation of acyclic \crpqs to that of acyclic \cqs.

We now prove our Theorem~\ref{thm:main}.
To that end, we will break it down into two steps:
In the first step, we reduce the problem of answering an acyclic \crpq to answering ``\freeleaf'' \crpqs (Lemma~\ref{lmm:decomposition}).
In the second step, we show how to answer \freeleaf \crpqs efficiently (Lemma~\ref{lmm:free-leaf}).
%The query graph $\calG$ of an acyclic \crpq (or \cq) $Q$ must have a forest structure.
%\changea{We call a variable a {\em leaf} of $Q$ if it is a leaf in the query graph of $Q$.}

\begin{definition}[A \freeleaf \crpq]
    \label{defn:free-leaf}
    A \crpq $Q$ is called {\em \freeleaf} if it is acyclic, connected, and has a set of free variables which is exactly the leaves of its query graph.
\end{definition}

\begin{restatable}[Reduction from an acyclic \crpq to \freeleaf \crpqs]{lemma}{LmmDecomposition}
    \label{lmm:decomposition}
    Let $Q$ be any acyclic \crpq,
    and $G$ be an input edge-labeled graph of size $N$.
    Then, in $O(N)$-time in data complexity,
    we can construct $k$ \freeleaf \crpqs $Q_1, \ldots, Q_k$ and
    $k$ corresponding edge-labeled graphs $G_1, \ldots, G_k$
    for some constant $k$, satisfying the following:
    \begin{itemize}
        \item $|\free(Q_i)| \leq \max(\fnfhtw(Q), 2)$, for each $i \in [k]$.
        \item The output size of each $Q_i$ over $G_i$ is at most $\out$, where $\out$ is the output size of $Q$ over $G$.
        \item The answer of $Q$ over $G$ can be computed from the answers of $Q_1, \ldots, Q_k$ over $G_1, \ldots, G_k$ in time $O(\out)$.
    \end{itemize}
\end{restatable}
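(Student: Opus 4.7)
The plan is to decompose $Q$ into a constant number of \freeleaf sub-queries using two structural operations: a Yannakakis-style calibration that restricts each variable's domain and absorbs bound leaves, followed by cutting the query tree at every internal free variable. In the first phase, I would perform bottom-up and top-down passes over the query tree, propagating unary filters: for each atom $\regpath{R}(X,Y)$ (the orientation can be chosen by Proposition~\ref{prop:rpq-transpose}) and a pre-existing filter $S$ on one endpoint, Proposition~\ref{prop:filter-rpq} builds an equivalent filtered RPQ and graph, and Proposition~\ref{prop:linear-rpq} produces a refined filter on the other endpoint in $O(N)$ time. After $O(|\atoms(Q)|)=O(1)$ passes, every variable's domain contains only values that extend to a full answer of $Q$, and each bound leaf has been folded into its parent atom as a filter and discarded. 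This yields a bound-leaf-free query $\tilde Q$ and a calibrated graph $\tilde G$ with the same answer as $Q$ over $G$, in total time $O(N)$.

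Next, I would split $\tilde Q$ into its connected components and, within each component, iteratively cut at every internal free variable $X$: if $X$ has $d$ neighbors in the query tree, I create $d$ sub-queries, each consisting of $X$ together with one of its incident subtrees, so that $X$ becomes a free leaf in every resulting sub-query. A degree-preservation argument shows that cutting creates no new bound leaves: for any non-cut vertex $V$ in a subtree, all of $V$'s neighbors in $\tilde Q$ lie in that subtree or are $X$, so $V$'s degree is preserved. Each resulting sub-query $Q_i$ is therefore connected, acyclic, bound-leaf-free, and has every free variable as a leaf, making $Q_i$ \freeleaf per Definition~\ref{defn:free-leaf}. The number of sub-queries $k\le|\atoms(Q)|$ is constant in data complexity, and each $G_i$ is $\tilde G$ restricted to the atoms of $Q_i$.

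For the three required conditions, the output-size bound $|Q_i|\le\out$ holds because calibration guarantees that every answer tuple of $Q_i$ extends to a full answer of $Q$, so the answer of $Q_i$ is the projection of the answer of $Q$ onto $\free(Q_i)\subseteq\free(Q)$. Combining the sub-queries' answers reduces to a constant-size multiway join on the cut-point variables followed by a Cartesian product across connected components, which runs in $O(\out)$ time using hash indexing on the shared variables. For the free-variable bound $|\free(Q_i)|\le\max(\fnfhtw(Q),2)$: if $|\free(Q_i)|\le 2$ it is immediate; otherwise $Q_i$ has at least three free leaves, which are pairwise non-adjacent in its query tree. A structural argument then forces every free-connex tree decomposition of $Q_i$ to contain a single bag covering all free leaves (splitting them across multiple $T_f$-bags would force the connecting path in $T$ through bags containing bound variables, contradicting the union-equals-free-variables condition of $T_f$). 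Since an edge covers at most one pairwise non-adjacent vertex, that bag has fractional edge cover at least $|\free(Q_i)|$, giving $|\free(Q_i)|\le\fnfhtw(Q_i)\le\fnfhtw(Q)$, where the last inequality follows by restricting an optimal free-connex decomposition of $Q$ to $\vars(Q_i)$.

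The main obstacle is the two-step structural inequality $|\free(Q_i)|\le\fnfhtw(Q_i)\le\fnfhtw(Q)$ in the case of three or more free leaves. The first step requires a careful analysis of free-connex tree decompositions on tree-shaped query graphs, showing that pairwise non-adjacent free leaves cannot be separated into distinct $T_f$-bags without violating running intersection for the bound vertices that must appear in the bags connecting them in $T$. The second step is a restriction argument that must be handled with care because $Q_i$ has strictly fewer atoms than $Q$, which could a priori \emph{increase} fractional covers; the compensating observation is that each bag of the original decomposition, when intersected with $\vars(Q_i)$, also shrinks, and the tree structure of the query graph ensures that the restricted decomposition remains free-connex with no bag's fractional cover (computed over $Q_i$'s atoms) exceeding $\fnfhtw(Q)$. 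Once these two structural facts are in hand, the algorithmic correctness and the running-time bookkeeping are routine.
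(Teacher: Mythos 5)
Your proposal is correct and its algorithmic skeleton coincides with the paper's: calibrate so that every domain value extends to a full answer of $Q$ (your bottom-up/top-down filter propagation computes exactly the unary relations that the paper obtains by evaluating, for each free variable $X$, the single-output-variable query $Q_X$ via Proposition~\ref{prop:linear-time-crpq}), absorb bound leaves via Propositions~\ref{prop:linear-rpq} and~\ref{prop:filter-rpq} as in Proposition~\ref{prop:bound-connected-to-free-leaf}, split at the free variables (your ``cut at every internal free variable'' yields the same pieces as the paper's \boundconnected components after leaf absorption), and rejoin the calibrated projections with Yannakakis in $O(\out)$. Where you genuinely diverge is the width bound. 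The paper derives $|\free(Q_i)|\le\max(\fnfhtw(Q),2)$ by combining Proposition~\ref{prop:number_free_rho_star} with the $\geq$ direction of Lemma~\ref{lem:fhtw_con_comp}, which is proved via bound-leading variable elimination orders (Lemmas~\ref{lem:elimination_order_tree-decomposition}--\ref{lem:edge_cover_component}). You instead argue directly on free-connex tree decompositions: when $Q_i$ has at least three free leaves they are pairwise non-adjacent, any free-connex decomposition must place all of them in one bag, and that bag needs $|\free(Q_i)|$ edges to cover. Both of your deferred sub-claims are true, and your sketched justifications point at the right obstructions (running intersection for the bound variables on the connecting paths; and, for $\fnfhtw(Q_i)\le\fnfhtw(Q)$, the fact that by acyclicity an atom of $Q$ outside $Q_i$ can cover at most one variable of the connected subquery $Q_i$ --- this is precisely the paper's Lemma~\ref{lem:edge_cover_component}). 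One simplification worth noting: you do not need full monotonicity $\fnfhtw(Q_i)\le\fnfhtw(Q)$, only that the single bag covering $\free(Q_i)$ in an optimal free-connex decomposition of $Q$ has fractional cover at least $\rho^*_{Q_i}(\free(Q_i))$, which avoids reasoning about every restricted bag. Your elimination-order-free argument is arguably more elementary and self-contained; the paper's route buys the stronger equality~\eqref{eq:fhtw-con-comp}, which it reuses in Proposition~\ref{prop:k-expansion-fnfhtw}.
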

\begin{restatable}[Solving \freeleaf \crpqs]{lemma}{LmmFreeLeaf}
    \label{lmm:free-leaf}
    Let $Q$ be any \freeleaf \crpq, and $G$ be an input edge-labeled graph of size $N$.
    Then, $Q$ can be answered in time $O(N+N\cdot \out^{1-\frac{1}{|\free(Q)|}})$
    in data complexity, where 
    $\out$ is the output size.
\end{restatable}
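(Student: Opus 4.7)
The plan is to first reduce the free-leaf \crpq $Q$ to a star-shaped CRPQ with a single central bound variable, and then evaluate this star by a heavy/light decomposition calibrated to give the exponent $1 - 1/k$, where $k = |\free(Q)|$.

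For the reduction, I would root the query tree at an arbitrary internal vertex $X_0$. Since every internal variable of $Q$ is bound and every leaf is free, each free leaf $Y_j$ is connected to $X_0$ by a unique path of RPQ atoms in the query tree. Using Proposition~\ref{prop:rpq-transpose} to normalize edge directions along this path, together with the closure of \rpqs under concatenation, I would collapse each root-to-leaf path into a single \rpq atom $\regpath{\tilde R_j}(X_0, Y_j)$, absorbing any degree-two bound intermediate vertices. When the internal subtree has multiple branching bound vertices, these are handled by a bottom-up product construction (in the spirit of Definition~\ref{defn:product-graph}) that folds their choices into the single central variable $X$; combined with Proposition~\ref{prop:filter-rpq} to prune dangling witnesses, this yields an equivalent star \crpq $Q^\star(Y_1, \ldots, Y_k) = \regpath{\tilde R_1}(X, Y_1) \wedge \cdots \wedge \regpath{\tilde R_k}(X, Y_k)$ over an auxiliary graph of size $O(N)$ in data complexity.

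To evaluate $Q^\star$, I would pick the threshold $\Delta \defeq \out^{1/k}$ and classify each candidate value $x$ as \emph{light} if the number of $Y_j$-values reachable from $x$ through $\regpath{\tilde R_j}$ is at most $\Delta$ for every $j$, and \emph{heavy} otherwise. Using Proposition~\ref{prop:filter-rpq} in combination with the $O(N + N \cdot \out^{1/2})$-time \rpq algorithm of~\cite{rpq-pods25}, the heavy/light classification and the associated reach sets can be computed within the target budget. The heavy case involves at most $O(N/\Delta)$ central values, each extended via \rpq calls to~\cite{rpq-pods25} applied to the residual graph; the light case is handled by the output-sensitive enumeration pattern for star \cqs from~\cite{xiao-os-yannakakis}, where the cost of each primitive relational lookup is replaced by an \rpq call. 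Balancing the two contributions with $\Delta = \out^{1/k}$ and summing via a Cauchy--Schwarz argument yields the claimed bound $O(N + N \cdot \out^{1 - 1/k})$.

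The principal obstacle will be the reduction in the first stage: collapsing a query tree with multiple internal branching bound vertices into a single star without blowing up the alphabet or the auxiliary graph beyond $O(N)$ in data complexity, while correctly tracking direction and language composition along every root-to-leaf path. Once the star form is reached, the heavy/light analysis directly transports the combinatorial pattern of~\cite{xiao-os-yannakakis} from the \cq side, with each primitive atom evaluation replaced by a call to the \rpq algorithm of~\cite{rpq-pods25}. This substitution preserves the exponent and explains why the bound $O(N + N \cdot \out^{1 - 1/k})$ for the \crpq version matches that of Theorem~\ref{thm:acyclic-cq} for the corresponding \cq whose $\fnfhtw$ equals $k$.
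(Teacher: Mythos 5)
There is a genuine gap in your first stage: a \freeleaf \crpq cannot, in general, be reduced to a single star over an auxiliary graph of size $O(N)$. Collapsing a degree-two bound variable by concatenating the two incident regular expressions is fine, but a \emph{branching} bound variable cannot be folded away. Concretely, for $\regpath{R_1}(X,Z)\wedge\regpath{R_2}(Z,Y_1)\wedge\regpath{R_3}(Z,Y_2)$ with $Z$ bound, the star $\regpath{R_1R_2}(X,Y_1)\wedge\regpath{R_1R_3}(X,Y_2)$ is strictly weaker: it allows the two branches to use \emph{different} witnesses for $Z$. When the query tree has two or more branching bound variables (e.g., the query of Example~\ref{ex:free-leaf}, which branches at both $Y$ and $Z$), no choice of center fixes this, and any product-style encoding that makes the center range over joint witnesses blows the auxiliary graph up to $\Omega(|V|^2)$ or worse. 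You flag this as the principal obstacle, but it is not one that can be overcome within the stated budget; the reduction target has to remain a tree, not a star.

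The second stage also leans on accounting that does not transfer from materialized relations to \rpqs. The bound of $O(N/\Delta)$ on heavy center values presumes the atom relations have size $O(N)$, whereas an \rpq output can be $\Theta(|V|^2)$; moreover, a value that is heavy in one atom need not contribute $\Delta$ tuples to the \emph{output} of $Q$, so the heavy count cannot be bounded by $\out/\Delta$ without first calibrating all atoms against one another. The paper's proof addresses both issues with a different device: it never materializes \rpq outputs, but propagates $(X,\Delta)$-restrictions (truncated witness lists of length $\Delta+1$, with $\Delta=\out^{1-1/\ell}$) bottom-up through the product of the graph with each NFA in time $O(N\cdot\Delta)$ per atom. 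A root value is then heavy only if it has at least $\Delta$ genuine output tuples, so there are at most $\out^{1/\ell}$ heavy values; the algorithm re-roots at each free variable in turn, and after $\ell-1$ rounds the surviving domains of the other free variables are so small that the final root has no heavy values left. That re-rooting loop, rather than a one-shot heavy/light split at a single center, is what actually produces the exponent $1-1/\ell$.
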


%%%%%%%%%%%%%%%%%%%%%%%%%%%%%%%%%%%%%%%%%%%%%%%%%%%%%%%%%%%%%%%%%%%%%%%%%%%%%%%%%%%%%%%%%%%%
\subsection{Proof of Lemma~\ref{lmm:decomposition}}
As a running example in this section, we will use the following acyclic \crpq $Q$,
which is depicted in Figure \ref{fig:decomposition} (top-left):
\begin{align}
Q(D,E,F,G,K,L) =& \regpath{R_1}(A,B) \wedge \regpath{R_2}(A, C) \wedge \regpath{R_3}(A,D) \wedge \regpath{R_4}(C, E) \wedge \regpath{R_5}(C,F)
\wedge \regpath{R_6}(D, G)\wedge\nonumber\\
& \regpath{R_7}(E, H) \wedge \regpath{R_8}(F, I) \wedge \regpath{R_9}(F, J)
\wedge \regpath{R_{10}}(H, K) \wedge \regpath{R_{11}}(H, L)
\label{eq:example-decomposition}
\end{align}
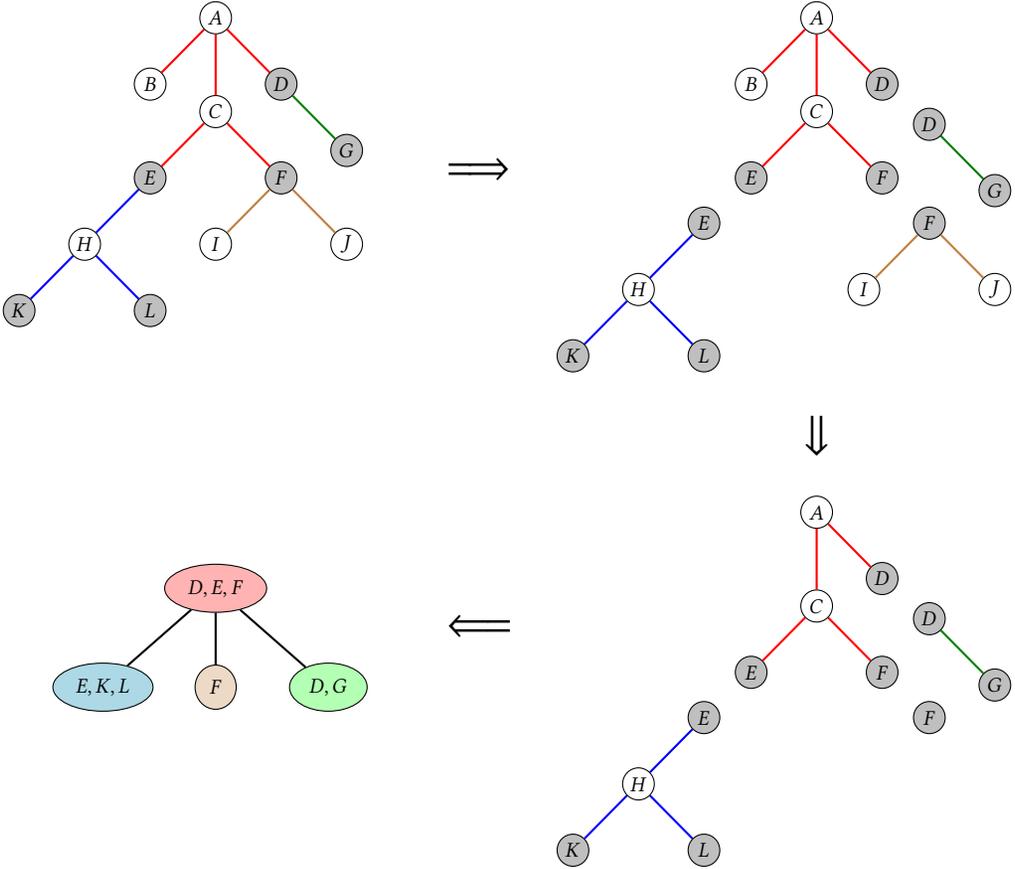
\begin{figure}[th]
    \centering
    \begin{tikzpicture}[every node/.style={minimum size = 15pt, inner sep=1pt, scale =.8}]
        \node [circle, draw] at (0,0) (A) {$A$};
        \node [circle, draw, node distance=.8cm, below left=of A] (B) {$B$};
        \node [circle, draw, node distance=.8cm, below =of A] (C) {$C$};
        \node [circle, draw, fill=lightgray, node distance=.8cm, below right=of A] (D) {$D$};
        \node [circle, draw, fill=lightgray, node distance=.8cm, below left=of C] (E) {$E$};
        \node [circle, draw, fill=lightgray, node distance=.8cm, below right=of C] (F) {$F$};
        \node [circle, draw, fill=lightgray, node distance=.8cm, below right=of D] (G) {$G$};
        \node [circle, draw, node distance=.8cm, below left=of E] (H) {$H$};
        \node [circle, draw, node distance=.8cm, below left=of F] (I) {$I$};
        \node [circle, draw, node distance=.8cm, below right=of F] (J) {$J$};
        \node [circle, draw, fill=lightgray, node distance=.8cm, below left=of H] (K) {$K$};
        \node [circle, draw, fill=lightgray, node distance=.8cm, below right=of H] (L) {$L$};
        \draw[thick, red] (A) -- (B);
        \draw[thick, red] (A) -- (C);
        \draw[thick, red] (A) -- (D);
        \draw[thick, red] (C) -- (E);
        \draw[thick, red] (C) -- (F);

        \draw[thick, DarkGreen] (D) -- (G);
        
        \draw[thick, blue] (E) -- (H);
        \draw[thick, blue] (H) -- (K);
        \draw[thick, blue] (H) -- (L);

        \draw[thick, brown] (F) -- (I);
        \draw[thick, brown] (F) -- (J);

        \node at (3.5,-2) {\Huge$\Longrightarrow$};

        \begin{scope}[shift={(1,0)}]
            \begin{scope}[shift={(7,0)}]
                \node [circle, draw] at (0,0) (A) {$A$};
                \node [circle, draw, node distance=.8cm, below left=of A] (B) {$B$};
                \node [circle, draw, node distance=.8cm, below =of A] (C) {$C$};
                \node [circle, draw, fill=lightgray, node distance=.8cm, below right=of A] (D) {$D$};
                \node [circle, draw, fill=lightgray, node distance=.8cm, below left=of C] (E) {$E$};
                \node [circle, draw, fill=lightgray, node distance=.8cm, below right=of C] (F) {$F$};

                \draw[thick, red] (A) -- (B);
                \draw[thick, red] (A) -- (C);
                \draw[thick, red] (A) -- (D);
                \draw[thick, red] (C) -- (E);
                \draw[thick, red] (C) -- (F);
            \end{scope}

            \begin{scope}[shift={(8.5,-1.4)}]
                \node [circle, draw, fill=lightgray] at (0,0) (D) {$D$};
                \node [circle, draw, fill=lightgray, node distance=.8cm, below right=of D] (G) {$G$};
                \draw[thick, DarkGreen] (D) -- (G);
            \end{scope}

            \begin{scope}[shift={(5.5,-2.7)}]
                \node [circle, draw, fill=lightgray] at (0,0) (E) {$E$};
                \node [circle, draw, node distance=.8cm, below left=of E] (H) {$H$};
                \node [circle, draw, fill=lightgray, node distance=.8cm, below right=of H] (L) {$L$};
                \node [circle, draw, fill=lightgray, node distance=.8cm, below left=of H] (K) {$K$};
                \draw[thick, blue] (E) -- (H);
                \draw[thick, blue] (H) -- (K);
                \draw[thick, blue] (H) -- (L);
            \end{scope}

            \begin{scope}[shift={(8.5,-2.7)}]
                \node [circle, draw, fill=lightgray]  at (0,0) (F) {$F$};
                \node [circle, draw, node distance=.8cm, below left=of F] (I) {$I$};
                \node [circle, draw, node distance=.8cm, below right=of F] (J) {$J$};
                \draw[thick, brown] (F) -- (I);
                \draw[thick, brown] (F) -- (J);
            \end{scope}
        \end{scope}

        \node at (8,-5.5) {\Huge$\Downarrow$};

        \begin{scope}[shift={(-1,-6.5)}]
            \begin{scope}[shift={(9,0)}]
                \node [circle, draw] at (0,0) (A) {$A$};
                \node [circle, draw, node distance=.8cm, below =of A] (C) {$C$};
                \node [circle, draw, fill=lightgray, node distance=.8cm, below right=of A] (D) {$D$};
                \node [circle, draw, fill=lightgray, node distance=.8cm, below left=of C] (E) {$E$};
                \node [circle, draw, fill=lightgray, node distance=.8cm, below right=of C] (F) {$F$};

                \draw[thick, red] (A) -- (C);
                \draw[thick, red] (A) -- (D);
                \draw[thick, red] (C) -- (E);
                \draw[thick, red] (C) -- (F);
            \end{scope}

            \begin{scope}[shift={(10.5,-1.4)}]
                \node [circle, draw, fill=lightgray] at (0,0) (D) {$D$};
                \node [circle, draw, fill=lightgray, node distance=.8cm, below right=of D] (G) {$G$};
                \draw[thick, DarkGreen] (D) -- (G);
            \end{scope}

            \begin{scope}[shift={(7.5,-2.7)}]
                \node [circle, draw, fill=lightgray] at (0,0) (E) {$E$};
                \node [circle, draw, node distance=.8cm, below left=of E] (H) {$H$};
                \node [circle, draw, fill=lightgray, node distance=.8cm, below right=of H] (L) {$L$};
                \node [circle, draw, fill=lightgray, node distance=.8cm, below left=of H] (K) {$K$};

                \draw[thick, blue] (E) -- (H);
                \draw[thick, blue] (H) -- (K);
                \draw[thick, blue] (H) -- (L);
            \end{scope}

            \begin{scope}[shift={(10.5,-2.7)}]
                \node [circle, draw, fill=lightgray]  at (0,0) (F) {$F$};
            \end{scope}
        \end{scope}

        \node at (3.5,-8) {\Huge$\Longleftarrow$};

        \begin{scope}[shift={(0, -7.5)}]
            \node [ellipse, draw, inner sep=4pt, fill=lightred] at (0,0) (DEF) {$D, E, F$};
            \node [ellipse, draw, inner sep=4pt, fill=lightgreen] at (1.5,-1.3) (DG) {$D, G$};
            \node [ellipse, draw, inner sep=4pt, fill=lightbrown] at (0,-1.3) (FF) {$F$};
            \node [ellipse, draw, inner sep=4pt, fill=lightblue] at (-1.5,-1.3) (EKL) {$E, K, L$};
            \draw[thick] (DEF) -- (FF);
            \draw[thick] (DEF) -- (DG);
            \draw[thick] (DEF) -- (EKL);
        \end{scope}

    \end{tikzpicture}
    \caption{An example demonstrating the reduction in Lemma~\ref{lmm:decomposition}.
    Top-left: The input acyclic \crpq $Q$ from Eq.~\eqref{eq:example-decomposition}, where
    free variables are \fcolorbox{black}{lightgray}{shaded}.
    Top-right: The bound connected components of $Q$ (Definition~\ref{defn:bound-connected}).
    Bottom-right: The corresponding \freeleaf \crpqs obtained by applying Proposition~\ref{prop:bound-connected-to-free-leaf} to each component.
    Each \freeleaf \crpq is solved using the algorithm from Lemma~\ref{lmm:free-leaf},
    and the answers are stitched together into an acyclic full conjunctive query over only the free variables of $Q$.
    This query is depicted at the bottom-left, and it can be solved using Yannakakis algorithm.}
    \Description{An example demonstrating the reduction in Lemma~\ref{lmm:decomposition}.
    Top-left: The input acyclic \crpq $Q$ from Eq.~\eqref{eq:example-decomposition}, where
    free variables are \fcolorbox{black}{lightgray}{shaded}.
    Top-right: The bound connected components of $Q$ (Definition~\ref{defn:bound-connected}).
    Bottom-right: The corresponding \freeleaf \crpqs obtained by applying Proposition~\ref{prop:bound-connected-to-free-leaf} to each component.
    Each \freeleaf \crpq is solved using the algorithm from Lemma~\ref{lmm:free-leaf},
    and the answers are stitched together into an acyclic full conjunctive query over only the free variables of $Q$.
    This query is depicted at the bottom-left, and it can be solved using Yannakakis algorithm.}
    \label{fig:decomposition}
\end{figure}

First, we introduce some concepts.
\begin{definition}[\Boundconnected \crpq]
    \label{defn:bound-connected}
    A \crpq (or \cq) $Q$ is called {\em \boundconnected} if every pair of variables in $\vars(Q)$ is connected in the query graph of $Q$
    by a path  that goes through only bound variables, $\bound(Q)$.
\end{definition}
Note that every \freeleaf \crpq is \boundconnected, but not every acyclic \boundconnected  \crpq is \freeleaf.
Nevertheless, every acyclic \boundconnected  \crpq can be made \freeleaf in linear time, as we will see later (Prop.~\ref{prop:bound-connected-to-free-leaf}).
\begin{definition}[\Boundconnected component of a \crpq]
    \label{defn:bound-connected-components}
    A {\em \boundconnected component} of a \crpq (or \cq) $Q$ is a {\em maximal} \boundconnected subquery $Q'$ of $Q$, i.e., a \boundconnected subquery $Q'$ such that any subquery $Q''$ of $Q$ with $\atoms(Q'') \supset \atoms(Q')$ is not \boundconnected.
\end{definition}
Figure \ref{fig:decomposition} (top-right) depicts the \boundconnected components of $Q$ from Eq.~\eqref{eq:example-decomposition}, which are:
\begin{align}
Q_1(D, E, F) &= \regpath{R_1}(A,B) \wedge \regpath{R_2}(A, C) \wedge \regpath{R_3}(A, D) \wedge \regpath{R_4}(C, E) \wedge \regpath{R_5}(C, F) \label{ex:decomposing:component1}\\
Q_2(D, G) &= \regpath{R_6}(D,G) \label{ex:decomposing:component2}\\
Q_3(E, K, L) &= \regpath{R_7}(E,H) \wedge \regpath{R_{10}}(H, K) \wedge \regpath{R_{11}}(H, L) \label{ex:decomposing:component3}\\
Q_4(F) &= \regpath{R_{8}}(F, I) \wedge \regpath{R_{9}}(F, J)
\label{ex:decomposing:component4}
\end{align}

The following statement was proven for conjunctive queries in prior work  (Lemmas 3.3 and 3.4 in~\cite{xiao-os-yannakakis}). 
We give a simplified proof adapted to \crpqs and using variable elimination orders~\cite{faq,DBLP:journals/tods/KhamisCMNNOS20}.

\begin{restatable}[\cite{xiao-os-yannakakis}]{lemma}{LemFhtwConComp}
\label{lem:fhtw_con_comp}   
    Let $Q$ be an acyclic \crpq (or acyclic \cq) and $Q_1, \ldots, Q_k$ its \boundconnected components.\footnote{The \boundconnected components of an acyclic \crpq are also acyclic.}
    Then, we have
    \begin{align}
        \fnfhtw(Q) = \max(\max_{i \in [k]}\rho^*_{Q_i}(\free(Q_i)), 1)
        \label{eq:fhtw-con-comp}
    \end{align}
\end{restatable}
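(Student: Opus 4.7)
The plan is to establish the equality by proving the two inequalities separately, following the variable-elimination-order framework. A structural observation underlies both directions: since $Q$ is acyclic and each $Q_i$ is bound-connected, for every component $Q_i$ with $B_i := \bound(Q_i) \neq \emptyset$, every free variable $y \in F_i := \free(Q_i)$ must be a leaf of $Q_i$'s query graph and must have at least one bound neighbor in $Q_i$. Indeed, if $y$ had two neighbors $u, v$ in $Q_i$, then by acyclicity the unique path between $u$ and $v$ in $Q_i$ would pass through $y$ (a free vertex), contradicting bound-connectedness; and if $y$ had no bound neighbor, no bound-only path could exist from $y$ to any vertex of $B_i$. As a consequence, $\rho^*_{Q_i}(F_i) = |F_i|$ whenever $B_i \neq \emptyset$, because each $y \in F_i$ is incident to exactly one atom of $Q_i$ and so must carry fractional weight at least one.

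For the upper bound I construct a variable elimination order of $Q$ that eliminates all bound variables first, processing each component $Q_i$ in an ``inside-out'' leaf-peeling order along its bound backbone (bound vertices with no free neighbors first, bound vertices adjacent to $F_i$ last), and then eliminates the free variables. Using the structural observation, one shows inductively that the current neighborhood of a bound vertex $X$ when it is eliminated is contained in $\{\textsf{parent}(X)\} \cup F_i'$ for some $F_i' \subseteq F_i$, so the bag $\{X\} \cup N(X)$ has $\rho^*_Q$-value bounded by the cost of covering $F_i$ using atoms of $Q_i$, which is $\rho^*_{Q_i}(F_i)$. The free-elimination phase then produces bags that are subsets of $F_i$ (plus accumulated virtual remnants) and whose union forms a connected subtree equal to $\free(Q)$, yielding free-connexity. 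Stitching the sub-decompositions of different components into a global free-connex TD is possible because the meta-graph of components, with an edge between any two components sharing a free variable, is a tree by acyclicity of $Q$.

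For the lower bound, let $(T, \chi)$ be any free-connex TD of $Q$ with free subtree $T_f$ and fix a component $Q_i$ with $B_i \neq \emptyset$. The set $T_B := \bigcup_{X \in B_i}\{t : X \in \chi(t)\}$ lies entirely in $T \setminus T_f$, because no bag of $T_f$ contains any bound variable; and it is connected, because any two vertices of $B_i$ are joined by a bound-only path in $Q_i$ whose consecutive vertices share a common bag via the atom between them. Hence $T_B$ hangs off $T_f$ at a unique attachment node $t^* \in T_f$. For every $y \in F_i$, the structural observation yields a bound neighbor of $y$ in $Q_i$ that forces $y$ into a bag of $T_B$; since $y$ is free it also appears in $T_f$, and the connectedness of $\{t : y \in \chi(t)\}$ forces $y \in \chi(t^*)$. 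Therefore $F_i \subseteq \chi(t^*)$ and $\max_t \rho^*_Q(\chi(t)) \geq \rho^*_Q(F_i) = \rho^*_{Q_i}(F_i)$, where the last equality comes from a short exchange argument: any atom of $Q \setminus Q_i$ incident to $F_i$ has its other endpoint outside $\vars(Q_i)$, so it covers only one vertex of $F_i$ and can be swapped in any LP solution for an atom of $Q_i$ covering the same vertex at equal cost. The trivial bound $\fnfhtw(Q) \geq 1$ takes care of degenerate cases.

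The main obstacle is controlling the widths of bags during the bound-elimination phase, since virtual edges introduced by earlier eliminations could in principle enlarge later neighborhoods. The inside-out order is what keeps this under control: since every free vertex of $Q_i$ is a leaf of its query graph, eliminating a bound vertex only introduces virtual edges to its parent in the backbone and to the free leaves in its subtree, never producing long chains of virtual edges among bound vertices. With this invariant the $\rho^*_Q$-value of every bag is uniformly bounded by the canonical cover of $F_i$ in $Q_i$, of value $\rho^*_{Q_i}(F_i)$.
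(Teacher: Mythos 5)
Your lower-bound argument ($\fnfhtw(Q)\geq\max_i\rho^*_{Q_i}(\free(Q_i))$) is correct and takes a genuinely different route from the paper's. The paper only proves this direction, and does so by converting an optimal free-connex tree decomposition into a bound-leading variable elimination order (Lemma~\ref{lem:elimination_order_tree-decomposition}, imported from prior work) and then arguing about the induced variable-set sequence. You instead argue directly on the tree decomposition: the bags containing $B_i=\bound(Q_i)$ form a connected subtree $T_B$ disjoint from the free subtree $T_f$, hence attached to it through a unique node $t^*$, and the running-intersection property forces every $y\in\free(Q_i)$ (which has a bound neighbor in $Q_i$) into $\chi(t^*)$. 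Combined with your exchange argument showing $\rho^*_Q(\free(Q_i))=\rho^*_{Q_i}(\free(Q_i))$ --- which is essentially the paper's Lemma~\ref{lem:edge_cover_component}, and your version is actually more carefully justified --- this is a clean, self-contained proof of the direction the paper actually needs for Theorem~\ref{thm:main}. The degenerate cases ($B_i=\emptyset$, $|\free(Q_i)|\leq 1$, empty $T_f$) are correctly deferred to the trivial bound $\fnfhtw(Q)\geq 1$.

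The $\leq$ direction (which the paper does not prove, deferring to~\cite{xiao-os-yannakakis}) is where your sketch has genuine gaps. First, your two prescriptions for the bound-elimination order --- ``leaf-peeling along the backbone'' and ``vertices with no free neighbors first'' --- can conflict: for a backbone $A-B-C$ where free leaves hang only off $A$ and $C$, the vertex $B$ has no free neighbors but is internal, so pure leaf-peeling eliminates $A$ and $C$ first and leaves $B$ with neighborhood $\free(Q_i)$, producing a bag $\{B\}\cup\free(Q_i)$ of fractional cover cost $1+|\free(Q_i)|$, which exceeds $\rho^*_{Q_i}(\free(Q_i))=|\free(Q_i)|$ since no atom covers $B$ together with a free leaf. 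What actually saves the construction is the requirement that the \emph{last} eliminated bound vertex $r$ of each component be adjacent to a free leaf $y_0$: then every earlier bag has the form $\{X,\mathrm{parent}(X)\}\cup F'$ with $y_0\notin F'$, hence costs $1+|F'|\leq|\free(Q_i)|$, and the final bag $\{r\}\cup\free(Q_i)$ costs exactly $|\free(Q_i)|$ because the atom $\{r,y_0\}$ covers two vertices at once. Your stated invariant (``bounded by the cost of covering $F_i$'') silently skips this $+1$ accounting. Second, in the free-elimination phase you must eliminate $\free(Q)$ in a GYO order of the acyclic hypergraph with hyperedges $\{\free(Q_i)\}_i$; an arbitrary order could give a free variable shared by two components a neighborhood spanning $\free(Q_i)\cup\free(Q_j)$ and hence a bag that is too wide. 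Both issues are fixable, but as written the upper-bound half is not a proof.
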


\begin{restatable}{proposition}{PropNumberFreeRhoStar}
    \label{prop:number_free_rho_star}
    For any acyclic \boundconnected \crpq $Q$,
    %\footnote{The proposition also applies to any acyclic \boundconnected \cq $Q$ where all atoms have arity at most 2, i.e., where for every $R(\bm X)\in\atoms(Q)$, we have $|\bm X| \leq 2$.} 
    $|\free(Q)|\leq\max(\rho^*_Q(\free(Q)), 2)$.
\end{restatable}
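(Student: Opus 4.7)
The plan is to case-split on $|\free(Q)|$. When $|\free(Q)|\leq 2$, the inequality $|\free(Q)|\leq\max(\rho^*_Q(\free(Q)),2)$ is immediate. The substantive case is $|\free(Q)|\geq 3$, where I would establish the stronger bound $\rho^*_Q(\free(Q))\geq |\free(Q)|$ directly.

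The structural claim at the heart of the proof is this: if $Q$ is acyclic and bound-connected and $|\free(Q)|\geq 3$, then no atom of $Q$ has both of its variables in $\free(Q)$. I would prove this by contradiction. Bound-connectedness forces the query graph to be connected, and together with acyclicity it is therefore a tree. Suppose some atom $\regpath{R}(u,v)$ has $u,v\in\free(Q)$, and pick a third free variable $w\in\free(Q)\setminus\{u,v\}$. Removing the edge $\{u,v\}$ from the tree splits it into two subtrees; let $T_u$ be the one containing $u$, and without loss of generality $w\in T_u$. Then the unique tree path from $w$ to $v$ has the form $w,\ldots,u,v$, in which $u$ appears as an interior vertex (since $w\neq u$). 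But the definition of bound-connectedness requires every interior vertex of such a path to be bound, contradicting $u\in\free(Q)$.

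Given the structural claim, the lower bound $\rho^*_Q(\free(Q))\geq |\free(Q)|$ follows by a one-line LP-duality step. The dual of the fractional edge cover LP~\eqref{eq:fractional-edge-cover} with $\bm Z = \free(Q)$ is: maximize $\sum_{v\in\free(Q)} y_v$ subject to $y_u+y_v\leq 1$ for every edge $\{u,v\}$ of the query graph and $y_v\geq 0$. Setting $y_v\defeq 1$ for $v\in\free(Q)$ and $y_v\defeq 0$ for $v\in\bound(Q)$ is feasible: by the structural claim, every edge has at most one free endpoint, so every constraint $y_u+y_v\leq 1$ holds. This feasible dual solution attains objective value $|\free(Q)|$, so by LP duality $\rho^*_Q(\free(Q))\geq |\free(Q)|$, which completes the argument.

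The main obstacle is the structural claim; it is the only place where bound-connectedness and acyclicity interact in a nontrivial way, and it hinges on reading ``a path that goes through only bound variables'' as a constraint on the interior of the path. Once the claim is in place, the LP-duality step and the base case $|\free(Q)|\leq 2$ (which is exactly what the $\max(\cdot,2)$ on the right-hand side accommodates) are both routine.
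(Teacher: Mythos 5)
Your proof is correct; it rests on the same structural insight as the paper's but executes both the structure lemma and the counting step differently. The paper proves the stronger fact that in an acyclic \boundconnected \crpq every free variable is a \emph{leaf} of the query tree (a free variable of degree at least two would be a cut vertex, hence an interior vertex of every path between its neighbors), and then counts on the primal side: each free leaf's unique incident edge must receive weight at least $1$, and these edges are distinct unless two free leaves are adjacent, which forces the whole query to be a single edge --- the one degenerate case absorbed by the $\max(\cdot,2)$. You isolate the degenerate case by the size of $\free(Q)$ instead of by the shape of $Q$, prove only the weaker claim that no atom joins two free variables when $|\free(Q)|\geq 3$, and certify $\rho^*_Q(\free(Q))\geq|\free(Q)|$ on the dual side with the all-ones fractional independent set supported on $\free(Q)$. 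Both arguments turn on the same cut-edge observation and on reading ``goes through only bound variables'' as a condition on the interior of the path, which is indeed the intended reading (otherwise no free variable could be connected to anything at all). One small point of care: the paper's notion of path in a multigraph permits repeated vertices, so you must exclude every \emph{walk} from $w$ to $v$, not only the unique simple tree path; this is immediate, since any walk from the component of $w$ to the component of $v$ must cross the deleted edge $\{u,v\}$ and therefore visits $u$ at an interior position. Your dual-feasibility step is arguably the more robust of the two, since it needs only the ``no free--free edge'' property rather than the full ``free variables are leaves'' characterization.
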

Lemma~\ref{lem:fhtw_con_comp} and Prop.~\ref{prop:number_free_rho_star} imply that for any
\boundconnected component $Q_i$ of an acyclic \crpq $Q$,
we must have
\begin{align}
    |\free(Q_i)| \quad\leq\quad \max(\fnfhtw(Q), 2)
    \label{eq:free-to-fnfhtw}
\end{align}
Consider for example $Q$ depicted in Figure \ref{fig:decomposition}.
The free-connex fractional hypertree width of $Q$ is $\fnfhtw(Q) = 3$,
whereas the components $Q_1, Q_2, Q_3, Q_4$ from Eqs.~\eqref{ex:decomposing:component1}--\eqref{ex:decomposing:component4}
have $3, 2, 3, 1$ free variables, respectively.

\begin{restatable}{proposition}{PropBoundConnectedToLeafFree}
    \label{prop:bound-connected-to-free-leaf}
    (Every acyclic \boundconnected \crpq can be transformed into \freeleaf.)
    For any acyclic \boundconnected  CRPQ $Q$ and input edge-labeled graph $G = (V, E, \Sigma)$ of size $N$, we can construct a \freeleaf CRPQ $Q'$ and an edge-labeled graph $G' = (V, E', \Sigma')$ in time
    $O(N)$ in data complexity such that $\free(Q) = \free(Q')$ and the answer of $Q$ over $G$ is the same as the answer of $Q'$ over $G'$.
\end{restatable}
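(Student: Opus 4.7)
The plan is to iteratively eliminate bound leaves from $Q$, at each step computing a unary filter on the surviving endpoint using Proposition~\ref{prop:linear-rpq} and absorbing that filter into another adjacent atom using Proposition~\ref{prop:filter-rpq}, with Proposition~\ref{prop:rpq-transpose} used as needed to put variables into the right positions. Since the number of atoms of $Q$ is constant in data complexity, the iteration terminates after $O(1)$ steps and runs in $O(N)$ time overall.

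Concretely, suppose $Q$ has a bound leaf $B$ with its unique incident atom, which I may write as $\regpath{R}(A, B)$ after applying Proposition~\ref{prop:rpq-transpose} if $B$ appeared as the first coordinate. Applying Proposition~\ref{prop:linear-rpq} in $O(N)$ time, I would compute the set $S := \{a \in V \mid \exists b \in V \text{ with a path in } G \text{ from } a \text{ to } b \text{ labeled in } L(\regpath{R})\}$, which is exactly the filter that the existentially quantified variable $B$ imposes on $A$. Next I would delete the atom $\regpath{R}(A, B)$ and the variable $B$ from $Q$, and---provided $A$ still appears in at least one remaining atom---fold $S$ into one such atom by Proposition~\ref{prop:filter-rpq} (again using Proposition~\ref{prop:rpq-transpose} if needed so that $A$ sits in the filtered coordinate). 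Each per-atom rewrite uses a freshly chosen sub-alphabet disjoint from the rest, so all the rewrites coexist inside a single edge-labeled graph $G'$ on the original vertex set $V$. After each such step, the query remains acyclic (only an edge of the query graph was removed), remains connected (the removed vertex was a leaf), retains the same free variables, and has one fewer atom, so iterating until no bound leaf remains takes $O(1)$ steps and $O(N)$ total time.

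The main obstacle is to verify that the resulting $Q'$ is actually \freeleaf, i.e., that its free variables coincide with its leaves. One direction is immediate: the termination condition rules out bound leaves, so every leaf of $Q'$ is free. For the converse, I would invoke the bound-connected hypothesis on $Q$ to show that no free variable can remain internal in $Q'$: bound-connectedness forces every free variable of $Q$ that has two or more neighbors to be ``surrounded'' by subtrees consisting entirely of bound variables, and those subtrees are peeled off completely by the iterative elimination of bound leaves, reducing the free variable's degree to at most one. The degenerate terminal case in which the pruning reduces a whole piece to a single isolated free vertex, as happens for $Q_4$ in Figure~\ref{fig:decomposition}, still yields a valid free-leaf query since a degree-zero vertex is a leaf and the sole remaining variable is free; the unary filter carried by that vertex is represented directly in the edges of $G'$ via a dedicated sub-alphabet.
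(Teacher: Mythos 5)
Your proof is correct and follows exactly the route the paper takes: the paper's entire argument for this proposition is the single sentence that it ``can be proved by repeatedly removing non-free leaf variables from $Q$ using Propositions~\ref{prop:linear-rpq} and~\ref{prop:filter-rpq},'' and your iteration (orient with Proposition~\ref{prop:rpq-transpose}, project out the bound leaf via Proposition~\ref{prop:linear-rpq}, absorb the resulting unary filter via Proposition~\ref{prop:filter-rpq}) is precisely that procedure, spelled out. Your extra care about the degenerate isolated-free-vertex case (the $Q_4$ component in Figure~\ref{fig:decomposition}) and about why \boundconnected{}ness forces the surviving query to be \freeleaf goes beyond what the paper records and is sound.
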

The above proposition can be proved by repeatedly removing non-free leaf variables from $Q$ using Propositions~\ref{prop:linear-rpq} and~\ref{prop:filter-rpq}.
Figure \ref{fig:decomposition} (bottom-right) depicts the \freeleaf \crpqs obtained by applying Prop.~\ref{prop:bound-connected-to-free-leaf} to each \boundconnected component of $Q$ from Eq.~\eqref{eq:example-decomposition}.

Our reduction in Lemma~\ref{lmm:decomposition}
relies on taking the \boundconnected components of $Q$,
and applying Prop.~\ref{prop:bound-connected-to-free-leaf} to each component.
But there is a little twist:
Consider for example the component $Q_1$ from Eq.~\eqref{ex:decomposing:component1}.
The output size of $Q_1$ could potentially be larger than the output size of $Q$,
thus violating the second bullet point in Lemma~\ref{lmm:decomposition}.
This is because, for example, the regular path query $\regpath{R_6}(D, G)$
could produce an empty output, but $\regpath{R_6}(D, G)$ is not part of $Q_1$.
To resolve this issue, we have to modify $Q_1$ a bit. In particular,
we define the following \crpq, $Q_D$, whose body is the same as $Q$ but has only one free variable, $D$:
\begin{align*}
Q_D(D) =& \regpath{R_1}(A,B) \wedge \regpath{R_2}(A, C) \wedge \regpath{R_3}(A,D) \wedge \regpath{R_4}(C, E) \wedge \regpath{R_5}(C,F)
\wedge \regpath{R_6}(D, G)\wedge\nonumber\\
& \regpath{R_7}(E, H) \wedge \regpath{R_8}(F, I) \wedge \regpath{R_9}(F, J)
\wedge \regpath{R_{10}}(H, K) \wedge \regpath{R_{11}}(H, L)
\label{eq:example-decomposition:QD}
\end{align*}
Similarly, we define a \crpq $Q_X$ for every free variable $X$ of $Q$.
Each such \crpq $Q_X$ can be answered in linear time by repeatedly applying
Propositions~\ref{prop:linear-rpq} and~\ref{prop:filter-rpq} to eliminate leaf variables
that are not $X$, as the following proposition shows.
\begin{proposition}[\cite{crpq-icdt26}]
\label{prop:linear-time-crpq}
    Every acyclic \crpq $Q$ with at most one free variable can be answered in time $O(N)$
    in data complexity, where $N$ is the size of the input edge-labeled graph.
\end{proposition}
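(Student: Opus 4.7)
The plan is to prove the proposition by a bottom-up leaf-elimination procedure on the query tree, invoking Propositions~\ref{prop:linear-rpq},~\ref{prop:rpq-transpose}, and~\ref{prop:filter-rpq} to carry out each elimination step in $O(N)$ time in data complexity. Since $Q$ is fixed, only a constant number of such steps is needed, so the overall cost stays at $O(N)$. A preprocessing step would split $Q$ into its connected components; because the query graph is a forest, each component is itself an acyclic \crpq, at most one of them contains the (at most one) free variable $F$, and every other component is a Boolean subquery. The final answer is then obtained by (i) evaluating the component containing $F$ to produce a set $S_F \subseteq V$, (ii) deciding satisfiability of each Boolean component, and (iii) returning $S_F$ if all Boolean components are satisfiable and $\emptyset$ otherwise. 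Since the number of components is a constant, combining them is free.

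For a single connected component I would root the underlying tree at $F$ when $F$ is present and at an arbitrary vertex otherwise, then process vertices in post-order. At each vertex $X$ I would maintain a unary set $S_X \subseteq V$ consisting of those vertices $v$ such that setting $X := v$ extends to a satisfying assignment of the subtree rooted at $X$; at a leaf this set is just $V$. When $X$ has already-processed children $Z_1, \ldots, Z_d$ connected by atoms $\regpath{R}_i(X, Z_i)$ or $\regpath{R}_i(Z_i, X)$, each child $Z_i$ contributes the set $T_i \subseteq V$ of admissible $X$-values obtained by folding $S_{Z_i}$ into the corresponding atom and then projecting onto $X$. Proposition~\ref{prop:filter-rpq} performs the folding (on $G$ directly if the atom is oriented $X \to Z_i$, and on the transpose graph via Proposition~\ref{prop:rpq-transpose} otherwise), producing a single equivalent RPQ on a graph of size $O(N)$; Proposition~\ref{prop:linear-rpq} then computes the projection onto $X$ in $O(N)$ time. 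After every child has been processed I set $S_X := \bigcap_{i=1}^d T_i$, again in $O(N)$ time. At the root I either output $S_F$ (if the component contains the free variable) or test whether $S_{\text{root}} \neq \emptyset$ (if the component is Boolean).

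The main technical care will be in tracking edge directions correctly via Proposition~\ref{prop:rpq-transpose} and in checking that each elimination step operates on a graph of size $O(N)$, so that Propositions~\ref{prop:filter-rpq} and~\ref{prop:linear-rpq} really do compose into an $O(N)$-time step. I do not expect a deeper obstacle: acyclicity gives a canonical leaf-elimination order, each step maps one acyclic \crpq to another on a modified graph of size $O(N)$, and the total number of steps is a constant depending only on the fixed query $Q$.
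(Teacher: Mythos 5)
Your proposal is correct and follows essentially the same route the paper indicates for this proposition: a bottom-up leaf-elimination on the query tree that repeatedly combines Proposition~\ref{prop:filter-rpq} (to fold the already-computed unary set for a child into the connecting RPQ, using Proposition~\ref{prop:rpq-transpose} to fix orientation) with Proposition~\ref{prop:linear-rpq} (to project onto the parent), for a constant number of $O(N)$ steps. Your explicit treatment of disconnected components as Boolean satisfiability checks is a reasonable bit of added care that the paper leaves implicit.
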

After evaluating each $Q_X$,
we extend $Q_1$ by adding the outputs of the \crpqs $Q_D, Q_E$, and $Q_F$ as extra filters to the body of $Q_1$:
\begin{align*}
Q_1'(D, E, F) &= \regpath{R_1}(A,B) \wedge \regpath{R_2}(A, C) \wedge \regpath{R_3}(A, D) \wedge \regpath{R_4}(C, E) \wedge \regpath{R_5}(C, F) {\color{red}\wedge Q_D(D) \wedge Q_E(E) \wedge Q_F(F)}
\end{align*}
Now, the output of $Q_1'$ above is guaranteed to be the projection of the output of $Q$ on the variables $(D, E, F)$.
By Proposition~\ref{prop:filter-rpq}, $Q_1'$ above is still equivalent to a \crpq:
\begin{align*}
Q_1'(D, E, F) &= \regpath{R_1}(A,B) \wedge \regpath{R_2}(A, C) \wedge \regpath{R_3'}(A, D) \wedge \regpath{R_4'}(C, E) \wedge \regpath{R_5'}(C, F)
\end{align*}
The above \crpq is acyclic \boundconnected but not yet \freeleaf because $B$ is a leaf variable and it is not free.
By applying Proposition~\ref{prop:linear-rpq}, followed by Proposition~\ref{prop:filter-rpq},
we can eliminate $B$, thus making $Q_1'$ \freeleaf:
\begin{align*}
Q_1'(D, E, F) &= \regpath{R_2'}(A, C) \wedge \regpath{R_3'}(A, D) \wedge \regpath{R_4'}(C, E) \wedge \regpath{R_5'}(C, F)
\end{align*}
Because it is \freeleaf, the above query can be solved using the algorithm from Lemma~\ref{lmm:free-leaf} in time $O(N + N \cdot \out_1^{1-\frac{1}{|\free(Q_1')|}})$, where $\out_1$
is the output size of $Q_1'$, which satisfies $\out_1 \leq \out$ by construction.
By Eq.~\eqref{eq:free-to-fnfhtw}, this is bounded by $O(N + N \cdot \out^{1-\frac{1}{\max(\fnfhtw(Q), 2)}})$,
as desired.

Similarly, we can define \freeleaf \crpqs $Q_2'(D, G), Q_3'(E, K, L), Q_4'(F)$
and solve them in similar complexity.
Finally, the answers from all components $Q_i'$ form an acyclic full conjunctive query,
which is depicted in Figure~\ref{fig:decomposition} (bottom-left).
We can solve it using Yannakakis algorithm~\cite{Yannakakis81} in time $O(\sum_i \out_i + \out) = O(\out)$.

%%%%%%%%%%%%%%%%%%%%%%%%%%%%%%%%%%%%%%%%%%%%%%%%%%%%%%%%%%%%%%%%%%%%%%%%%%%%%%%%%%%%%%%%%%%%

\subsection{Proof of Lemma~\ref{lmm:free-leaf}}
\label{subsec:proof-of-lmm-free-leaf}
In this section, we demonstrate our proof of Lemma~\ref{lmm:free-leaf}.
First, we introduce a definition and some auxiliary propositions.

\begin{definition}[$(X,\Delta)$-restriction]
    Let $R(X, \bm Y)$ be a relation over variables $\{X\} \cup \bm Y$, where $X$ is a single variable
    and $\bm Y$ is the set of remaining variables of $R$.
    Let $\Delta$ be a natural number.
    A relation $\ov R(X, \bm Y)$ is called an {\em $(X,\Delta)$-restriction} of $R(X, \bm Y)$
    if it is a subset of $R(X, \bm Y)$ that satisfies the following for every $X$-value $x$ that appears in $R$:
    \begin{itemize}
        \item If $\deg_{R}(\bm Y|X=x) \leq \Delta$, then $\ov R$ contains all tuples of $R$ with $X$-value $x$.
        \item If $\deg_{R}(\bm Y|X=x) > \Delta$, then $\ov R$ contains exactly $\Delta$ tuples of $R$ with $X$-value $x$.
    \end{itemize}
\end{definition}
An $(X,\Delta)$-restriction of $R$ is not necessarily unique, as there can be several subsets of $\Delta$ tuples of $R$ with $X$-value $x$.

\begin{restatable}[Propagation of restrictions through \rpqs]{proposition}{PropRestrictionPropagation}
    \label{prop:restriction-propagation}
    Let $G=(V, E,\Sigma)$ be an input edge-labeled graph of size $N$
    and $\Delta$ a natural number.
    Consider the query 
    \begin{align*}
        Q(X, \bm Z) = \regpath{R}(X, Y) \wedge S(Y, \bm Z),
    \end{align*}
    where $\regpath{R}(X, Y)$ is an \rpq over the alphabet $\Sigma$,
    and $S(Y, \bm Z)$ is a relation over variables $\{Y\} \cup \bm Z$.
    Then, given a $(Y,\Delta)$-restriction $\ov S(Y, \bm Z)$ of $S(Y, \bm Z)$,
    we can compute an $(X, \Delta)$-restriction $\ov Q(X, \bm Z)$ of $Q(X, \bm Z)$ in time $O(N \cdot \Delta)$.~\footnote{The size of $\ov S$ could be much larger than $N$, yet, the runtime still holds. However, we also need to assume that $\ov S(Y, \bm Z)$ is indexed by a {\em dictionary} that returns for a given $X$-value, the corresponding list of up to $\Delta$ $\bm Z$-tuples.}

\end{restatable}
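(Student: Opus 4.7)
The plan is to run a backward color-propagation BFS on the product of $G$ with an NFA for $\regpath{R}$, treating each distinct $\bm Z$-tuple appearing in $\ov S$ as a color and capping every vertex's color set at $\Delta$. Writing $\bar Q(X, \bm Z) \defeq \regpath{R}(X, Y) \wedge \ov S(Y, \bm Z)$ for the truncated version of $Q$, I first observe that any $(X, \Delta)$-restriction of $\bar Q$ is automatically an $(X, \Delta)$-restriction of $Q$: for every $x$, either $\deg_Q(\bm Z|X{=}x) = \deg_{\bar Q}(\bm Z|X{=}x)$, or else $\deg_{\bar Q}(\bm Z|X{=}x) \geq \Delta$. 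Indeed, any $\bm z$ witnessing $(x, \bm z) \in Q \setminus \bar Q$ arises from some $y$ with $\deg_S(\bm Z|Y{=}y) > \Delta$, and the $\Delta$ tuples retained at that $y$ in $\ov S$ already produce $\Delta$ distinct $(x,\cdot)$-tuples in $\bar Q$. So it suffices to build an $(X, \Delta)$-restriction of $\bar Q$.

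The algorithm operates on the product graph $H = G \times \mathcal{A}$, where $\mathcal{A}$ is a constant-size NFA for $\regpath{R}$ with initial state $q_0$ and final-state set $F$; then $H$ has size $O(N)$ and $(x, y) \in \regpath{R}$ iff $(x, q_0) \rightsquigarrow_H (y, q_f)$ for some $q_f \in F$. Maintain a color set $C(v)$ at each vertex of $H$, initialized to $C((y, q_f)) = \{\bm z : (y, \bm z) \in \ov S\}$ for every $q_f \in F$ (size at most $\Delta$ by the $(Y, \Delta)$-restriction property) and to $\emptyset$ elsewhere, and seed a queue with all initial (vertex, color) pairs. When $(v, c)$ is dequeued, scan each in-edge $u \to v$ of $H$; if $c \notin C(u)$ and $|C(u)| < \Delta$, insert $c$ into $C(u)$ and enqueue $(u, c)$. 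Finally, output $\ov Q \defeq \{(x, c) : c \in C((x, q_0))\}$.

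The main obstacle is the correctness claim $|C(v)| = \min(\Delta, |R(v)|)$ at termination, where $R(v)$ denotes the set of colors reachable from $v$ in $H$ to a seed. The ``$\leq$'' direction is immediate. For ``$\geq$'' I argue by contradiction: pick a counterexample $v$ minimizing $d(v) \defeq \min\{\mathrm{dist}_H(v, \mathrm{seed}(c)) : c \in R(v) \setminus C(v)\}$. The case $d(v) = 0$ contradicts initialization. Otherwise let $u$ be the first vertex after $v$ on a shortest path to a missed color $c$, and split into three sub-cases. If $c \in C(u)$, then the algorithm attempted propagation $(u, c) \to v$ and must have failed, forcing $|C(v)| = \Delta$ and contradicting $|C(v)| < \Delta$. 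If $c \notin C(u)$ with $|C(u)| < \Delta$, then $u$ is itself a counterexample with $d(u) < d(v)$, contradicting minimality. If $c \notin C(u)$ with $|C(u)| = \Delta$, then each of $u$'s $\Delta$ colors must have been successfully propagated into $C(v)$ (else $|C(v)| = \Delta$ once more), forcing $|C(v)| \geq \Delta$ and the final contradiction. Applied at $v = (x, q_0)$, this yields $|C((x, q_0))| = \min(\Delta, \deg_{\bar Q}(\bm Z|X{=}x))$, which is precisely the $(X, \Delta)$-restriction condition.

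For the runtime, each (vertex, color) pair is enqueued at most once, so every vertex is dequeued at most $\Delta$ times, and the total edge scans sum to $O(\Delta \cdot |E(H)|) = O(N \cdot \Delta)$ in data complexity. Using hash-set representations of each $C(v)$ alongside the dictionary index of $\ov S$ assumed in the statement, every membership and insertion is $O(1)$, and enumerating $\ov Q$ at the end stays within the same $O(N \cdot \Delta)$ budget.
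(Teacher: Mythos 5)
Your proof is correct and follows essentially the same approach as the paper's: build the product graph of $G$ with a constant-size NFA for $\regpath{R}$ and propagate the $\bm Z$-tuples of $\ov S$ backward from accepting states, capping each vertex's list at $\Delta$, so that each edge is traversed at most $\Delta$ times for a total of $O(N\cdot\Delta)$. You additionally spell out two correctness details the paper leaves implicit --- the reduction from $Q$ to its $\ov S$-truncation $\bar Q$, and the termination invariant $|C(v)|=\min(\Delta,|R(v)|)$ --- which is a welcome elaboration rather than a divergence.
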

Prop.~\ref{prop:restriction-propagation} implies the following as a special case.
\begin{proposition}[$(X,\Delta)$-restriction of a single \rpq]
    \label{prop:restriction-propagation:basecase}
    For any \rpq $\regpath{R}(X, Y)$ and any natural number $\Delta$,
    we can compute an $(X, \Delta)$-restriction $\bar{R}(X,Y)$ of $\regpath{R}(X, Y)$
    in time $O(N \cdot \Delta)$.
\end{proposition}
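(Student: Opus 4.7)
The plan is to obtain Proposition~\ref{prop:restriction-propagation:basecase} directly from the more general Proposition~\ref{prop:restriction-propagation} by pasting a trivial identity relation onto the $Y$-side of the \rpq, so that the join in Proposition~\ref{prop:restriction-propagation} degenerates into a no-op. Concretely, introduce a fresh variable $Y'$ and define the binary relation
$S(Y, Y') := \{(y,y) \mid y \in V\}$, where $V$ is the vertex set of the input edge-labeled graph $G$. This relation can be materialized, and indexed by $Y$ (returning the single-element list $\{y\}$ for every $y \in V$), in $O(N)$ time. Because $\deg_{S}(Y'\mid Y=y) = 1$ for every $y\in V$, the relation $S$ is already a $(Y,\Delta)$-restriction of itself for every $\Delta\geq 1$, so we can feed it directly to Proposition~\ref{prop:restriction-propagation} without any extra work. (The case $\Delta=0$ is vacuous: the empty relation is an $(X,0)$-restriction of every relation.)

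Now instantiate Proposition~\ref{prop:restriction-propagation} with this $S$ and $\bm Z := \{Y'\}$, applied to the query $Q(X, Y') := \regpath{R}(X,Y) \wedge S(Y,Y')$. Since $S$ encodes the identity on $V$ and every $Y$-value produced by $\regpath{R}(X,Y)$ already lies in $V$, the output of $Q(X,Y')$ is exactly the output of $\regpath{R}(X,Y)$ with the column $Y$ renamed to $Y'$; in particular, the per-$X$ degrees agree, i.e., $\deg_{Q}(Y'\mid X=x) = \deg_{\regpath{R}}(Y\mid X=x)$ for every $x$. Proposition~\ref{prop:restriction-propagation} therefore produces an $(X,\Delta)$-restriction $\ov Q(X,Y')$ of $Q(X,Y')$ in time $O(N\cdot\Delta)$, and relabeling the column $Y'$ back to $Y$ yields the desired $(X,\Delta)$-restriction $\ov R(X,Y)$ of $\regpath{R}(X,Y)$.

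The only real obstacle in this plan is bookkeeping around the definitions: one has to confirm that the $(X,\Delta)$-restriction condition is preserved under renaming a non-$X$ column, and that attaching the identity relation $S$ neither inflates nor deflates the per-$X$ degrees seen by $\regpath{R}$. Both facts follow immediately from Definition~\ref{defn:degree} and the fact that $S$ is a bijection on $V$, which makes the join $\regpath{R}(X,Y)\wedge S(Y,Y')$ an output-preserving operation modulo the rename $Y\mapsto Y'$. All algorithmic substance is absorbed by Proposition~\ref{prop:restriction-propagation}; the preprocessing of $S$ adds only $O(N)$, so the total running time is $O(N\cdot\Delta)$ as claimed.
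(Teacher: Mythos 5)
Your proposal is correct and matches the paper's own argument: the paper likewise derives this proposition from Proposition~\ref{prop:restriction-propagation} by attaching the identity relation $S(Y,Z)=\{(v,v)\mid v\in V\}$ and applying it to $Q(X,Z)=\regpath{R}(X,Y)\wedge S(Y,Z)$. Your additional bookkeeping about degrees being preserved under the identity join and the column rename is a correct (if slightly more verbose) elaboration of the same one-line reduction.
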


Prop.~\ref{prop:restriction-propagation:basecase}
follows from Prop.~\ref{prop:restriction-propagation}
because we can always construct an identity relation $S(Y, Z)\defeq \{(v, v) \mid v \in V\}$ and apply Prop.~\ref{prop:restriction-propagation} on the query $Q(X, Z) = \regpath{R}(X, Y) \wedge S(Y, Z)$.

%We next prove Prop.~\ref{prop:restriction-propagation}.

\begin{proposition}[Composition of $(X,\Delta)$-restrictions]
    \label{prop:restriction-composition}
    Let $R_1(X, \bm Z_1)$ and $R_2(X, \bm Z_2)$ be two relations over variables $\{X\} \cup \bm Z_1$ and $\{X\} \cup \bm Z_2$, respectively, where $\bm Z_1$ and $\bm Z_2$ are {\em disjoint}.
    Consider the following query:
    \begin{align}
        Q(X, \bm Z_1, \bm Z_2) = R_1(X, \bm Z_1) \wedge R_2(X, \bm Z_2)
    \end{align}
    Given $(X, \Delta)$-restrictions $\ov R_1(X, \bm Z_1)$ 
    of $R_1(X, \bm Z_1)$ and  $\ov R_2(X, \bm Z_2)$ of $R_2(X, \bm Z_2)$,
    where $\Delta$ is some natural number,
    an $(X,\Delta)$-restriction $\ov Q(X, \bm Z_1, \bm Z_2)$ of $Q(X, \bm Z_1, \bm Z_2)$ can be computed in time $O(|\dom_X| \cdot \Delta)$,
    where $\dom_X$ is the set of all $X$-values appearing in $R_1$ and $R_2$.
\end{proposition}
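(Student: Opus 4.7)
The plan is to construct $\ov Q$ directly from $\ov R_1$ and $\ov R_2$ by iterating over each $X$-value $x \in \dom_X$ and enumerating pairs $(t_1, t_2)$ with $t_1 \in \ov R_1$ and $t_2 \in \ov R_2$, both restricted to $X$-value $x$, stopping as soon as $\Delta$ pairs have been collected or the Cartesian product is exhausted. Each emitted pair is, by disjointness of $\bm Z_1$ and $\bm Z_2$, a well-defined tuple of $Q$ with $X$-value $x$. The cost per $X$-value is $O(\Delta)$, assuming $\ov R_1$ and $\ov R_2$ are indexed so that the tuples with a given $X$-value can be listed on demand (the same dictionary assumption used in Proposition~\ref{prop:restriction-propagation}). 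Summing over at most $|\dom_X|$ values of $x$ yields the stated $O(|\dom_X| \cdot \Delta)$ bound.

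For correctness I would do a per-$x$ case analysis. Write $d_i(x) \defeq \deg_{R_i}(\bm Z_i \mid X = x)$, so that $\deg_Q((\bm Z_1, \bm Z_2) \mid X = x) = d_1(x) \cdot d_2(x)$, and recall that the $(X,\Delta)$-restriction $\ov R_i$ contains $\min(d_i(x),\Delta)$ tuples of $R_i$ with $X$-value $x$, equal to all of them whenever $d_i(x) \leq \Delta$. In the case $d_1(x) \leq \Delta$ and $d_2(x) \leq \Delta$, the product of the $x$-slices of $\ov R_1$ and $\ov R_2$ is exactly the $x$-slice of $Q$, so early-stopping emits $\min(d_1(x)\cdot d_2(x), \Delta)$ tuples, matching the required count. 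In the remaining case at least one of $d_1(x), d_2(x)$ exceeds $\Delta$, whence $d_1(x) \cdot d_2(x) > \Delta$ and the definition asks for exactly $\Delta$ tuples of $Q$ at $X = x$; since $\min(d_1(x),\Delta)\cdot \min(d_2(x),\Delta) \geq \Delta \cdot 1 = \Delta$, the enumeration fills the quota, and each emitted pair lies in $Q$ because $\ov R_i \subseteq R_i$.

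I view the main (and essentially only) subtle point as ensuring in the second case that the product of the two already-thinned slices contains enough pairs to fill the quota $\Delta$; this reduces to the short inequality above, which holds because one factor equals $\Delta$ and the other is at least $1$. Everything else amounts to unfolding the definition of an $(X,\Delta)$-restriction, so beyond the case analysis no further machinery is needed.
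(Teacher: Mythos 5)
Your proof is correct, and it is the natural argument the authors intend: the paper actually states Proposition~\ref{prop:restriction-composition} without any proof, treating it as immediate from the definitions. Your per-$x$ early-stopping enumeration, together with the observation that $\deg_Q(\bm Z_1,\bm Z_2 \mid X{=}x) = d_1(x)\cdot d_2(x)$ and that $\min(d_1(x),\Delta)\cdot\min(d_2(x),\Delta) \geq \Delta$ whenever some $d_i(x) > \Delta$ (and both are positive), fills exactly the gap the paper leaves implicit, so there is nothing to correct.
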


We next demonstrate
the idea of the proof of Lemma~\ref{lmm:free-leaf} using the following example.
We state the general proof in Appendix~\ref{app:results:free-leaf}.
\begin{example}[for Lemma~\ref{lmm:free-leaf}]
    \label{ex:free-leaf}
    Consider the following \freeleaf \crpq $Q$ depicted in Fig.~\ref{subfig:ex-free-leaf:original}:
    \begin{align}
        Q(A, B, C, D) = \regpath{R_1}(X, Y) \wedge \regpath{R_2}(X, Z) \wedge \regpath{R_3}(Y, A) \wedge \regpath{R_4}(Y, B) \wedge \regpath{R_5}(Z, C) \wedge \regpath{R_6}(Z, D)
        \label{eq:ex-free-leaf}
    \end{align}
    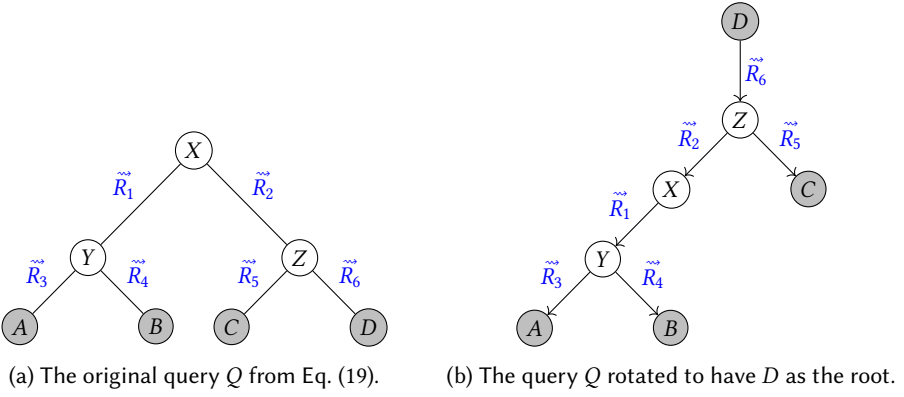
\begin{figure}[th]
    \begin{subfigure}[b]{0.45\textwidth}
        \centering
        \begin{tikzpicture}[every node/.style={inner sep=2pt, scale =.9}]
            \node [circle, draw] at (0,0) (X) {$X$};
            \node [circle, draw, node distance=1.5cm, below left=of X] (Y) {$Y$};
            \node [circle, draw, node distance=1.5cm, below right=of X] (Z) {$Z$};
            \node [circle, draw, fill = lightgray, node distance=.8cm, below left=of Y] (A) {$A$};
            \node [circle, draw, fill = lightgray, node distance=.8cm, below right=of Y] (B) {$B$};
            \node [circle, draw, fill = lightgray, node distance=.8cm, below left=of Z] (C) {$C$};
            \node [circle, draw, fill = lightgray, node distance=.8cm, below right=of Z] (D) {$D$};
            \draw[-] (X) -- node[above left] {\color{blue}$\regpath{R_1}$} (Y);
            \draw[-] (X) -- node[above right] {\color{blue}$\regpath{R_2}$} (Z);
            \draw[-] (Y) -- node[above left] {\color{blue}$\regpath{R_3}$} (A);
            \draw[-] (Y) -- node[above right] {\color{blue}$\regpath{R_4}$} (B);
            \draw[-] (Z) -- node[above left] {\color{blue}$\regpath{R_5}$} (C);
            \draw[-] (Z) -- node[above right] {\color{blue}$\regpath{R_6}$} (D);
        \end{tikzpicture}
        \caption{The original query $Q$ from Eq.~\eqref{eq:ex-free-leaf}.}
        \label{subfig:ex-free-leaf:original}
    \end{subfigure}
    \begin{subfigure}[b]{0.45\textwidth}
        \centering
        \begin{tikzpicture}[every node/.style={inner sep=2pt, scale =.9}]
            \node [circle, draw] at (0,0) (X) {$X$};
            \node [circle, draw, node distance=.8cm, below left=of X] (Y) {$Y$};
            \node [circle, draw, node distance=.8cm, above right=of X] (Z) {$Z$};
            \node [circle, draw, fill = lightgray, node distance=.8cm, below left=of Y] (A) {$A$};
            \node [circle, draw, fill = lightgray, node distance=.8cm, below right=of Y] (B) {$B$};
            \node [circle, draw, fill = lightgray, node distance=.8cm, below right=of Z] (C) {$C$};
            \node [circle, draw, fill = lightgray, node distance=.8cm, above=of Z] (D) {$D$};
            \draw[->] (X) -- node[above left] {\color{blue}$\regpath{R_1}$} (Y);
            \draw[->] (Z) -- node[above left] {\color{blue}$\regpath{R_2}$} (X);
            \draw[->] (Y) -- node[above left] {\color{blue}$\regpath{R_3}$} (A);
            \draw[->] (Y) -- node[above right] {\color{blue}$\regpath{R_4}$} (B);
            \draw[->] (Z) -- node[above right] {\color{blue}$\regpath{R_5}$} (C);
            \draw[->] (D) -- node[right] {\color{blue}$\regpath{R_6}$} (Z);
        \end{tikzpicture}
        \caption{The query $Q$ rotated to have $D$ as the root.}
        \Description{The query $Q$ rotated to have $D$ as the root.}
        \label{subfig:ex-free-leaf:rotated}
    \end{subfigure}
    \caption{The \freeleaf acyclic \crpq $Q$ from Example~\ref{ex:free-leaf}. Free variables are \fcolorbox{black}{lightgray}{shaded}.}
    \Description{The \freeleaf acyclic \crpq $Q$ from Example~\ref{ex:free-leaf}. Free variables are \fcolorbox{black}{lightgray}{shaded}.}
    \label{fig:ex-free-leaf}
\end{figure}

    Pick an arbitrary free variable, say $D$, and re-orient the query tree so that $D$ becomes the root, as shown in Fig.~\ref{subfig:ex-free-leaf:rotated}.
    Note that when we re-orient an \rpq, e.g., $\regpath{R_6}(Z, D)$, so that it is from $D$ to $Z$,
    the result is still an \rpq (Prop.~\ref{prop:rpq-transpose}).
    To reduce clutter, we  abuse notation and refer to the re-oriented \rpq $\regpath{R_6}(Z, D)$ as $\regpath{R_6}(D, Z)$.

    Using a bottom-up traversal of the query tree from the leaves to the root $D$,
    we can rewrite the query $Q$ as a sequence of subqueries as follows:
    \begin{align*}
        P_1(Y, A) &\quad\defeq\quad \regpath{R_3}(Y, A) &
            \text{(Prop.~\ref{prop:restriction-propagation:basecase})}\\
        P_2(Y, B) &\quad\defeq\quad \regpath{R_4}(Y, B) &
            \text{(Prop.~\ref{prop:restriction-propagation:basecase})}\\
        P_3(Y, A, B) &\quad\defeq\quad P_1(Y, A) \wedge P_2(Y, B) &
            \text{(Prop.~\ref{prop:restriction-composition})}\\
        P_4(X, A, B) &\quad\defeq\quad \regpath{R_1}(X, Y) \wedge P_3(Y, A, B) &
            \text{(Prop.~\ref{prop:restriction-propagation})}\\
        P_5(Z, A, B) &\quad\defeq\quad \regpath{R_2}(Z, X) \wedge P_4(X, A, B) &
            \text{(Prop.~\ref{prop:restriction-propagation})}\\
        P_6(Z, C) &\quad\defeq\quad \regpath{R_5}(Z, C) &
            \text{(Prop.~\ref{prop:restriction-propagation:basecase})}\\
        P_7(Z, A, B, C) &\quad\defeq\quad P_5(Z, A, B) \wedge P_6(Z, C) &
            \text{(Prop.~\ref{prop:restriction-composition})}\\
        Q(D, A, B, C) &\quad\defeq\quad \regpath{R_6}(D, Z) \wedge P_7(Z, A, B, C) &
            \text{(Prop.~\ref{prop:restriction-propagation})}
    \end{align*}
% \dano{Shouldn't $P_7$ be $T_7$ according to the naming scheme from the proof of the lemma?}

% \dano{Why are the restrictions with respect to $\Delta'$ and not $\Delta$?}
    We assume for now that we already know the output size $\out$ of $Q$ upfront.
    We will show at the end of the example how to get rid of this assumption.
    Let $\Delta\defeq \out^{3/4}$.
    (In general, $\Delta \defeq \out^{1-1/|\free(Q)|}$.)
    Moreover, let $\Delta' \defeq \Delta + 1$.
    For each query $P_i(X, \bm Y)$, where $X$ is the first variable and $\bm Y$ is the set of remaining variables,
    we compute an $(X, \Delta')$-restriction $\ov P_i(X, \bm Y)$ of $P_i(X, \bm Y)$
    using one of Props.~\ref{prop:restriction-propagation},~\ref{prop:restriction-propagation:basecase},
    or~\ref{prop:restriction-composition}, in time $O(N \cdot \Delta')$.
    At the very end, we compute a $(D, \Delta')$-restriction $\ov Q(D, A, B, C)$ of $Q(D, A, B, C)$.
    A vertex $d$ is called {\em light} if $\deg_{\ov Q}(A,B,C|D=d) \leq \Delta$,
    and {\em heavy} otherwise (i.e., if $\deg_{\ov Q}(A,B,C|D=d) = \Delta'\defeq \Delta+1$; \change{recall that $\deg_{\ov Q}(A,B,C|D=d)$ cannot exceed $\Delta'$ by definition of $(D, \Delta')$-restriction}).
    The relation $\ov Q$ already contains all output tuples $(d, a, b, c)$ where $d$ is light.
    Let $H_D$ be the set of all heavy vertices $d$.
    Our next target is to report output tuples $(d, a, b, c)$ where $d \in H_D$.
    Note that because each heavy vertex $d$ in $H_D$ has at least $\Delta$ associated tuples $(a, b, c)$ where $(d, a, b, c)$ is in the output,
    the number of heavy vertices, $|H_D|$, is at most $\out/\Delta = \out^{1/4}$.
    (In general, we get $|H_D| \leq \out^{1/|\free(Q)|}$.)
    Using Prop.~\ref{prop:filter-rpq}, we can assume that the query $\regpath{R_6}(D, Z) \wedge H_D(D)$
    is equivalent to an \rpq $\regpath{R_6'}(D, Z)$.
    Before we continue, we replace $\regpath{R_6}(D, Z)$ with $\regpath{R_6'}(D, Z)$,
    and from now on,  we can assume that in the remaining query, the number of different $D$-values is at most $\out^{1/4}$.

    Now, we pick another free variable, say $C$, and repeat the same process.
    Just like before, at the end of the bottom-up traversal, we will be able to report all output tuples $(c, a, b, d)$
    where $c$ is light, and we will be left with a set of heavy vertices $H_C$ of size at most
    $\out^{1/4}$.
    Our next target is to report output tuples $(c, a, b, d)$ where $c \in H_C$ (and $d \in H_D$).
    In the remaining query, we can assume that the number of different $C$-values (and $D$-values) is at most $\out^{1/4}$.

    Now, we pick another free variable, say $B$, and repeat the same process. After we report all output tuples $(b, a, c, d)$ where $b$ is light, we can assume that in the remaining query,
    the number of different $B$-values (as well as $C$-values and $D$-values) is at most $\out^{1/4}$.

    Finally, we pick the last free variable, $A$. But now, since the number of different $B$-, $C$-, and $D$-values is at most $\out^{1/4}$, the total number of different tuples $(b, c, d)$ cannot exceed
    $\Delta = \out^{3/4}$. Hence, in all remaining output tuples $(a, b, c, d)$,
    the $A$-value $a$ must be light, and we can report all of them at the end of the bottom-up traversal.

    \paragraph*{Guessing the output size $\out$.}
    We are left to explain how to guess the output size $\out$ if it is not given upfront.
    To that end, we start with an initial guess $\out = 1$ and keep doubling our guess
    every time we discover that our guess was below the actual output size.
    In particular, for each guess of $\out$, we run the above algorithm.
    Consider the last bottom-up traversal above where the free variable $A$ is the root.
    If it is indeed the case that all remaining $A$-values are light,
    then we already reported the whole output of $Q$ and we are done.
    If not, then our guess of $\out$ was below the actual output size,
    and we need to double it and repeat the process.
    The total time spent is $O(\sum_{i=1, \ldots, \lceil\log \out\rceil} N \cdot 2^{3i/4})$.
    This is a geometric series that is dominated by its last term,
    which is $O(N \cdot \out^{3/4})$.
\end{example}

% \dano{In the above example, it could be that the sets of tuples reported for the different passes overlap, right? This is not a problem for the desired complexity, though it might be useful to clarify that this could be the case and it does not affect out complexity claims.}

% \dano{Since we use as threshold a function of $\out$, which is an unknown quantity at the time of computation, we need to explain how we guess it.}

% We now explain how to generalize the above example to prove Lemma~\ref{lmm:free-leaf}.

\section{Related Work}
\label{sec:related}

The traditional approach for evaluating \rpqs is based on taking the product graph (Definition~\ref{defn:product-graph}) of the input graph with the NFA corresponding to the regular expression and then running a breadth-first search starting from each vertex in the product graph.
In data complexity, this approach takes time $O(|V| \cdot |E|)$ ~\cite{Baeza13, MartensT18, MendelzonW95}, which is $O(N^2)$, where $N\defeq |V| + |E|$.
Moreover, no combinatorial algorithm can achieve any polynomial improvement\footnote{By that, we mean achieving a runtime of $O((|V|\cdot |E|)^{1-\epsilon})$ for any $\epsilon > 0$.} over this bound unless
the {\em combinatorial Boolean matrix multiplication conjecture} fails~\cite{CaselS23}.
A prior work introduced an output sensitive algorithm for \rpqs whose complexity is $O(N + N \cdot \out^{1/2})$, where $\out$ is the output size~\cite{rpq-pods25}.

For \crpqs, prior work~\cite{DBLP:conf/icdt/CucumidesRV23} proposed solving them by first materializing each \rpq separately,
thus turning the \crpq into a traditional \cq, and then evaluating this \cq using
standard approaches.
This approach is not output-sensitive, since the output of each individual \rpq could be much larger than the final output of the \crpq.
An output-sensitive alternative for {\em acyclic} \crpqs was proposed in~\cite{crpq-icdt26}.
It lifts the Yannakakis algorithm from \cqs to \crpqs and also relies on the output-sensitive \rpq algorithm from~\cite{rpq-pods25}.
See Appendix~\ref{app:comparison} for a detailed comparison with our work.

Recent work~\cite{10.1145/3725237} studies the problem of {\em minimizing} a \crpq,
i.e., finding the smallest\footnote{This typically means finding a \crpq with the minimum number of atoms.} equivalent \crpq. This problem is not data-dependent, but it depends
on the specific regular expressions used in the \rpq atoms of the \crpq.
In contrast, our results don't depend on the specific regular expressions.
Nevertheless, minimizing a \crpq can result in a smaller value of the free-connex fractional hypertree width, thus speeding up our algorithm.

The best known output-sensitive {\em combinatorial} algorithm for any acyclic \cq was recently proposed in~\cite{xiao-os-yannakakis,paris-os-yannakakis}.
However, certain \cqs are known to enjoy faster runtimes using non-combinatorial algorithms that rely on fast matrix multiplication.
For example, Boolean matrix multiplication (which corresponds to $Q_{k\text{-path}}$ for $k=2$) can be solved in time
$O(N^{2/3}\cdot\out^{2/3}+N\cdot\out^{1/3})$ {\em assuming $\omega = 2$}~\cite{DBLP:conf/icdt/AmossenP09}, where
$\omega$ is the matrix multiplication exponent~\cite{doi:10.1137/1.9781611977912.134}.
Recent work improved this bound to $O(N\cdot\out^{1/3}+\out)$~\cite{abboud2023time}.
The triangle query is another example on the cyclic side~\cite{BjorklundPWZ14}.
See~\cite{DBLP:conf/sigmod/DeepHK20} and~\cite{omega-subw} for further discussion on the use of fast matrix multiplication for CQ evaluation.
\section{Conclusion and Future Directions}
\label{sec:conclusion}

We presented an output sensitive algorithm for acyclic \crpqs that improves upon prior work.
Moreover, despite \crpqs seemingly being more difficult than \cqs due to their recursive nature,
we show that our algorithm already matches the runtime of the best known output-sensitive combinatorial algorithm for the corresponding \cqs.
Hence, any further improvement on our algorithm can only happen modulo an improvement in the state-of-the-art on \cq evaluation.

Our result opens up several interesting directions, some of which are immediate extensions of our work,
while others are more ambitious. We list some of them below:

\paragraph*{\bf Hybrid CQ/CRPQ queries.}
    In practice, we might encounter conjunctive queries that are {\em hybrid} between \crpqs and \cqs.
    In such queries, some atoms are \rpqs $\regpath{R}(X, Y)$ while others are relational atoms
    $R(\bm X)$ of arbitrary arities (not necessarily 2).
    Our results seem to {\em immediately} extend to such queries.
    To that end, we extend the definition of free-connex fractional hypertree width
    to those queries in the natural way.
    Since hybrid queries generalize \cqs, the generalized algorithm would also subsume
    the one from~\cite{xiao-os-yannakakis,paris-os-yannakakis}.

\paragraph*{\bf Aggregate \rpqs and \crpqs.}
    We can extend the semantics of \rpqs and \crpqs to {\em weighted} edge-labeled graphs
    in a natural way as follows:
    Consider an input edge-labeled graph where each edge has a weight, e.g., a non-negative real number. Given an \rpq $\regpath{R}(X, Y)$,
    can we list all pairs of vertices $x$ and $y$ such that there is a path from $x$ to $y$ whose label matches $\regpath{R}$, {\em and} also list the {\em minimum total weight} of such a path from $x$ to $y$ (i.e., the shortest path length)?
    To make the problem more general, we can choose any semiring $\bm K$ and
    consider a {\em $\bm K$-annotated} edge-labeled graph, i.e., an edge-labeled graph where each
    edge is annotated with an element from $\bm K$.
    Given a path in such graph, we define its annotation as the product of the annotations of its edges
    (using the multiplication operation of $\bm K$).
    Then, the semantics of an \rpq $\regpath{R}(X, Y)$ over such a graph is as follows: For each pair of vertices $x$ and $y$, compute a sum (using the addition operation of $\bm K$)
    of the annotations of all paths from $x$ to $y$ whose label is in the language defined by $\regpath{R}$.
    This semantics naturally extends to \crpqs as well,
    and also mirrors nicely the traditional semantics of aggregate \cqs over $\bm K$-annotated
    databases~\cite{provenance-semirings}.
    In the \cq world, prior output-sensitive results~\cite{xiao-os-yannakakis,paris-os-yannakakis}
    work for aggregate \cqs over $\bm K$-annotated databases out-of-the-box.
    But now we ask: can we extend our result to aggregate \crpqs over $\bm K$-annotated edge-labeled graphs?
    The answer doesn't seem to be as straightforward as in the \cq case,
    and it also seems to depend on the choice of the semiring $\bm K$:
    \begin{itemize}[leftmargin=*]
        \item For the tropical semiring $(\min, +)$, this seems to be possible.
        The main change in our algorithm that needs to be done is as follows: Currently, Proposition~\ref{prop:restriction-propagation}
        uses a breadth-first search; we need to replace it with Dijkstra's algorithm
        to compute shortest paths.
        \item For the $(+,\times)$ semiring, the problem can be a lot more challenging. In particular, it subsumes counting the number of paths in a graph, where these paths are of {\em unbounded} length. This is in contrast to the \cq world, where all paths have constant length in data complexity.
        It seems unlikely that we can still achieve the same runtime bound under this  semiring.
    \end{itemize}
    We leave characterizing the general {\em semiring-dependent} complexity of aggregate \crpqs as an open problem for future work.

\paragraph*{\bf Using Fast Matrix Multiplication to speed up \crpq evaluation.}
On the acyclic \cq side, certain queries admit non-combinatorial algorithms
that utilize fast matrix multiplication to beat the runtimes from~\cite{xiao-os-yannakakis,paris-os-yannakakis}.
Among those queries is the Boolean matrix multiplication query~\cite{DBLP:conf/icdt/AmossenP09,abboud2023time}, which corresponds to $Q_{k\text{-path}}$ for $k=2$ (Eq.~\eqref{eq:intro-k-path}).
It would be interesting to see if similar techniques can be leveraged to improve the output-sensitive runtimes for acyclic \crpqs as well.

\paragraph*{\bf Cyclic \crpqs.}
    Cyclic \crpqs seems to be far more challenging than their cyclic \cq counterparts.
    For example, consider the (Boolean) triangle \crpq:
    $Q() = \regpath{R_1}(X, Y) \wedge \regpath{R_2}(Y, Z) \wedge \regpath{R_3}(Z, X)$.
    While it might be tempting at first to try to mirror the success of Worst-case Optimal Join Algorithms for cyclic \cqs~\cite{Ngo:JACM:18,LeapFrogTrieJoin2014,WCOJGemsOfPODS2018,SkewStrikesBack2014},
    this seems to be a lot more challenging for \crpqs.
    This is because the $k$-expansion (Definition~\ref{defn:k-expansion}) of the triangle is a $k'$-cycle where $k'\defeq 3k$.
    In turn, as $k'$ goes to infinity, the best known complexity for the $k'$-cycle CQ approaches
    $O(N^2)$~\cite{AlonYZ95,theoretics:13722}, which is not a very interesting complexity for the triangle \crpq.
    We leave characterizing the complexity of cyclic \crpqs, including proving lower bounds, as an open question.

%%%%%%%%%%%%%%%%%%%%%%%%%%%%%%%%%%%%%%%%%%%%%%%%%%%%%%%%%%%%%%%%%%%%%%%%%%%%%%%%%%%%%%%%%%%%
\begin{acks}
Suciu's work was partially supported by NSF IIS 2314527, NSF SHF 2312195, NSF III 2507117, and a Microsoft professorship. Olteanu's work is partially supported by SNSF 200021-231956.
\end{acks}
%%%%%%%%%%%%%%%%%%%%%%%%%%%%%%%%%%%%%%%%%%%%%%%%%%%%%%%%%%%%%%%%%%%%%%%%%%%%%%%%%%%%%%%%%%%%

\bibliographystyle{ACM-Reference-Format}
\bibliography{bibliography}

\appendix
\section{Missing details in Section~\ref{sec:prelims}}
\label{app:prelims}

%%%%%%%%%%%%%%%%%%%%%%%%%%%%%%%%%%%%%%%%%%%%%%%%%%%%%%%%%%%%%%%%%%%%%%%%%%%%%%%
\subsection{Proof of Proposition~\ref{prop:linear-rpq}} 
\PropLinearRPQ*
\begin{proof}
Assume that the CRPQ $Q$ is defined over an alphabet $\Sigma$,
and let $G = (V , E, \Sigma)$ be an input edge-labeled graph for $Q$.
Let $M_{\regpath{R}}=(V_{\regpath{R}}, E_{\regpath{R}}, \Sigma)$ be an edge-labeled graph representing the nondeterministic finite automaton (NFA) for the regular expression $\regpath{R}$.
Let $G'=(V', E')$ be the product graph of $G$ and $M_{\regpath{R}}$ (Definition~\ref{defn:product-graph}).
Recall that $V' = V \times V_{\regpath{R}}$.
Let $T\subseteq V'$ be the set of all vertices in $V'$ of the form $(v, q_f)$ where $v \in V$ and $q_f$ is an accepting state of $M_{\regpath{R}}$.
Similarly, let $S$ be the set of all vertices in $V'$ of the form $(u, q_0)$ where $u \in V$ and $q_0$ is the initial state of $M_{\regpath{R}}$.
To compute the answer of $Q$ on $G$,
we do a {\em backward} traversal of $G'$ starting from all vertices in $T$,
until we compute the subset $S' \subseteq S$ of vertices in $S$ reachable from $T$.
The answer of $Q$ on $G$ is then obtained by projecting each vertex in $S'$ onto its first component.

Since the size of $Q$, and hence of $\regpath{R}$, is constant,
the product graph $G'$ can be constructed in 
$O(|V| + |E|) = O(N)$ time.
Traversal takes $O(N)$ time as well.
\end{proof}

%%%%%%%%%%%%%%%%%%%%%%%%%%%%%%%%%%%%%%%%%%%%%%%%%%%%%%%%%%%%%%%%%%%%%%%%%%%%%%%
\subsection{Proof of Proposition~\ref{prop:filter-rpq}} 
\PropFilterRPQ*

\begin{proof}
Let $\sigma'$ be a fresh symbol not in $\Sigma$, and $\Sigma' \defeq \Sigma \cup \{\sigma'\}$.
We construct $G'=(V, E', \Sigma')$ by extending the edge set $E$ with self-loops at every vertex $s \in S$ with label $\sigma'$.
Now, let the regular expression $\regpath{R'}$ be the concatenation of the regular expression $\regpath{R}$ and the symbol $\sigma'$.
Then, answering the \rpq $\regpath{R'}(X, Y)$ over $G'$ is equivalent to answering the original query $Q(X, Y)$ over $G$.
\end{proof}

%%%%%%%%%%%%%%%%%%%%%%%%%%%%%%%%%%%%%%%%%%%%%%%%%%%%%%%%%%%%%%%%%%%%%%%%%%%%%%%
\subsection{Proof of Proposition~\ref{prop:k-expansion-fnfhtw}}
\PropKExpansionFnfhtw*
\begin{proof}
    Let $Q$ be an acyclic \crpq and $k \geq 2$ be any natural number.
    Let $Q'$ be the $k$-expansion of $Q$ (Definition~\ref{defn:k-expansion}).
    If $Q$ is trivial, then it is free-connex, hence $Q'$ is also free-connex.
    By Proposition~\ref{prop:fn-fhtw-properties}, $\fnfhtw(Q') = 1$.

    Now, we consider the case where $Q$ is non-trivial.
    Let $Q_1, \ldots, Q_m$ be the \boundconnected components of $Q$ (Definition~\ref{defn:bound-connected-components}).
    Since $Q$ is acyclic, each $Q_i$ is also acyclic.
    Moreover, let $Q_1', \ldots, Q_m'$be the $k$-expansions of $Q_1, \ldots, Q_m$, respectively.
    It is straightforward to verify that $Q_1', \ldots, Q_m'$ are the \boundconnected components of $Q'$.

    \begin{claim}
        Let $Q$ be a acyclic \boundconnected  \crpq (or acyclic \boundconnected  \cq with only binary atoms).
        Let $Q'$ be its $k$-expansion for some $k \geq 2$.
        We call $Q$ {\em single-edge} if it consists of a single atom $\regpath{R}(X, Y)$ where both $X$ and $Y$ are free variables. It must hold that
        \begin{align}
            \rho^*_{Q}(\free(Q)) &= \begin{cases}
                1 & \text{if $Q$ is single-edge}\\
                |\free(Q)| & \text{otherwise}
            \end{cases}
            \label{eq:rho-star-Q}\\
            \rho^*_{Q'}(\free(Q')) &= |\free(Q)|
            \label{eq:rho-star-Q'}
        \end{align}
        \label{claim:bound-connected-free-rho-star}
    \end{claim}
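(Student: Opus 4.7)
My plan is to split the claim into its two equations and handle each by a structural analysis of the query graph (a tree by acyclicity and connectedness, where connectedness follows from bound-connectedness).

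For Eq.~\eqref{eq:rho-star-Q}, the single-edge case is immediate: the one atom covers both free variables, so $\rho^*_Q(\free(Q)) = 1$. In the ``otherwise'' case, I would first prove the following structural fact: every free variable is a leaf of the query graph. Suppose $v \in \free(Q)$ had degree at least two, with neighbors $u_1, u_2$. Then in the tree the unique $u_1$-$u_2$ path passes through $v$ as an internal vertex, so no $u_1$-$u_2$ path can have only bound internal vertices, contradicting bound-connectedness. Next I would show each edge has at most one free endpoint: if both endpoints of an edge were free, they would both be leaves, forcing the whole tree to be that single edge, which is the single-edge case. With these two facts, the matching lower bound $\rho^*_Q(\free(Q)) \geq |\free(Q)|$ follows from $\sum_e x_e \geq \sum_{v \in \free(Q)} \sum_{e \ni v} x_e \geq |\free(Q)|$ (each edge is counted at most once on the right), and the upper bound follows by placing weight $1$ on the unique edge incident to each free leaf.

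For Eq.~\eqref{eq:rho-star-Q'}, the $k$-expansion subdivides each edge $\{X,Y\}$ of $Q$ into a path $X, Z_1, \ldots, Z_{k-1}, Y$ whose $k-1$ internal vertices are fresh bound variables. Subdivision preserves the degrees of the original vertices, so every original free variable is still a leaf of $Q'$. Moreover every edge of $Q'$ has at most one free endpoint: interior edges of each subdivided path have both endpoints fresh (hence bound), while the first and last edges have exactly one fresh endpoint. This is exactly where the hypothesis $k \geq 2$ matters: even when $Q$ is single-edge, the two free endpoints $X, Y$ now lie on two distinct edges of the subdivided $k$-path in $Q'$. The same double-counting argument from Step~1 then gives $\rho^*_{Q'}(\free(Q')) \geq |\free(Q')| = |\free(Q)|$, and selecting the unique edge of $Q'$ incident to each free variable yields the matching upper bound.

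The main obstacle is the structural step in Eq.~\eqref{eq:rho-star-Q}: leveraging bound-connectedness and acyclicity in the correct direction to conclude that all free variables are leaves and that no edge (outside the single-edge case) can have two free endpoints. Once that is in hand, both equalities reduce to the same counting/covering argument on a tree, and the role of the single-edge exception — the only configuration in which one atom can simultaneously cover two free variables — becomes transparent.
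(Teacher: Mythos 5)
Your proof is correct and takes essentially the same route as the paper's (which itself argues via Proposition~\ref{prop:number_free_rho_star} that the query graph is a tree with free variables only at leaves, so each free variable needs its own covering edge except in the single-edge case, and that the $k$-expansion with $k\geq 2$ destroys the single-edge exception). You merely spell out the details the paper leaves implicit, namely the cut-vertex argument showing free variables must be leaves and the double-counting argument for the lower bound on the fractional cover.
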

    Assume for now that the above claim holds.
    \begin{itemize}
    \item If no $Q_i$ is a single-edge, then $\rho^*_{Q_i}(\free(Q_i)) = \rho^*_{Q_i'}(\free(Q_i'))$ for every $i \in [m]$.
    Lemma~\ref{lem:fhtw_con_comp} implies that $\fnfhtw(Q) = \fnfhtw(Q')$.
    Moreover, since $Q$ is non-trivial, one $Q_i$ must contain at least two free variables.
    Hence, $\fnfhtw(Q) \geq 2$ and $\fnfhtw(Q') = \max(\fnfhtw(Q), 2)$ holds as well.
    \item If some $Q_i$ is a single-edge, then $\rho^*_{Q_i}(\free(Q_i)) = 1$ and $\rho^*_{Q_i'}(\free(Q_i')) = 2$.
        Therefore, by Lemma~\ref{lem:fhtw_con_comp}, $\fnfhtw(Q') =$ $\max(\fnfhtw(Q), 2)$.
    \end{itemize}

    \begin{proof}[Proof of Claim~\ref{claim:bound-connected-free-rho-star}]
    The above claim is proved similarly to Proposition~\ref{prop:number_free_rho_star}.
    In particular, because $Q$ is acyclic, \boundconnected, where all atoms are binary,
    $Q$ must be a tree where free variables can occur only at the leaves.
    The only way for a single edge to cover two free variables is when two leaves are directly connected by an edge,
    which can only happen when the query is a single edge.
    In all other cases, each free variable must be covered by a distinct edge.
    This proves Eq.~\eqref{eq:rho-star-Q}.

    To prove Eq.~\eqref{eq:rho-star-Q'}, note that $Q'$ must also be acyclic, \boundconnected, where all atoms are binary, and  $\free(Q') = \free(Q)$.
    Moreover, $Q'$ cannot be single-edge, even when $Q$ is single-edge. (Recall that $k\geq 2$.)
    \end{proof}
\end{proof}
\section{Missing details in Section~\ref{sec:results}}
\label{app:results}

%%%%%%%%%%%%%%%%%%%%%%%%%%%%%%%%%%%%%%%%%%%%%%%%%%%%%%%%%%%%%%%%%%%%%%%%%%%%%%%
\subsection{Proof of Lemma~\ref{lem:fhtw_con_comp}}

\LemFhtwConComp*

Below, we only prove the $\geq$ direction in Eq.~\eqref{eq:fhtw-con-comp}
because our main result, Theorem~\ref{thm:main}, only uses this direction.
The $\leq$ direction is exclusively used in the proof of Proposition~\ref{prop:k-expansion-fnfhtw}, and we refer the reader to~\cite{xiao-os-yannakakis} (Lemmas 3.3 and 3.4) for a proof of that direction.
\begin{proof}[Proof of $\geq$ direction in Lemma~\ref{lem:fhtw_con_comp}]
We start with some definitions.
A {\em (multi-)hypergraph} $\calG = (\calV, \calE)$ consists of a finite set of vertices $\calV$ and a (multi-)set  of edges $\calE$ such that each edge is a subset of $\calV$ (of arbitrary size). 
Given a hypergraph $\calG = (\calV, \calE)$ and a vertex $v \in \calV$, we define the sets $\calE_v = \{e \in \calE \mid v \in e\}$ and 
$\calV_v = \bigcup_{e \in \calE_v} e$.
The hypergraph obtained by the elimination of $v$ in $\calG$ is defined as $\eliminate(\calG , v) = (\calV', \calE')$, where $\calV' = \calV \setminus \{v\}$ and $\calE' = (\calE \setminus \calE_v) \cup \{\calV_v \setminus \{v\}\}$.

Next, we recall the definition of variable elimination orders from prior work~\cite{faq,DBLP:journals/tods/KhamisCMNNOS20}. 
\begin{definition}[Variable Elimination Orders] 
 A  {\em variable elimination order} for a \crpq $Q$ is an ordering 
$\sigma = (X_1, \ldots, X_n)$ of its variables.
The {\em hypergraph sequence induced} by $\sigma$ is of the form 
$(\calG_0,  \calG_1, \ldots , \calG_n)$, where 
$\calG_0$ is the query graph of $Q$ and
$\calG_i = \eliminate(\calG_{i-1} , X_i)$ for $i \in [n]$.\footnote{In the original formulation, the elimination proceeds from the last variable to the first in the elimination order, that is, from right to left~\cite{faq}. For notational convenience, we instead traverse the elimination order from left to right.}
The {\em variable set sequence} induced by $\sigma$ is 
$(\calV_{X_1}, \ldots , \calV_{X_n})$. 
A variable elimination order is called {\em bound-leading} if no free variable appears before any bound one. 
\end{definition}

Prior work shows:
\begin{lemma}[\cite{faq,DBLP:journals/tods/KhamisCMNNOS20}]
\label{lem:elimination_order_tree-decomposition}
Let $Q$ be a \crpq.
For any free-connex tree decomposition $(T, \chi)$ for $Q$, there is a
bound-leading elimination order for $Q$ such that every set in the 
variable set sequence induced by $\sigma$ is contained in a bag $\chi(t)$
for some $t \in \nodes(T)$.
\end{lemma}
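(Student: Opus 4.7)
The plan is to construct the elimination order explicitly from the given free-connex tree decomposition $(T, \chi)$ via a rooted bottom-up traversal. Let $T_f$ be the connected subtree of $T$ whose bags collectively equal $\free(Q)$; consequently the bags in $T_f$ contain only free variables, while every bound variable appears solely in bags outside $T_f$. First, I would root $T$ at an arbitrary node $r \in \nodes(T_f)$ (if $\free(Q) = \emptyset$ the bound-leading requirement is vacuous and any root works). For each variable $X \in \vars(Q)$, define its \emph{home} $h(X)$ to be the node closest to $r$ whose bag contains $X$; this is well-defined because the set of bags containing $X$ forms a connected subtree. Under this rooting, $h(X) \in \nodes(T_f)$ iff $X \in \free(Q)$, and every non-$T_f$ node is a strict descendant of some $T_f$ node.

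The elimination order $\sigma$ proceeds in two phases. In the first phase, eliminate the variables whose homes lie outside $T_f$ by sweeping non-$T_f$ nodes in reverse topological order (deepest first); within each node, the variables homed there are emitted in any internal order. In the second phase, do the same for the $T_f$-nodes. By the home characterization, the first phase emits exactly the bound variables and the second exactly the free ones, so $\sigma$ is bound-leading by construction.

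The core technical claim is that at the moment $X$ is eliminated, $\calV_X \subseteq \chi(h(X))$. I would prove this by induction on the elimination step, maintaining the auxiliary invariant that the reduced hypergraph after each step admits the restriction of $(T, \chi)$ (with eliminated variables removed from every bag) as a valid tree decomposition. The base case is the atom-coverage property of $(T, \chi)$ applied to the original query. For the inductive step, when $Y$ is eliminated at $h(Y) = s$, the new virtual hyperedge $\calV_Y \setminus \{Y\}$ lies by induction in $\chi(s) \setminus \{Y\}$, which preserves the invariant after deleting $Y$ from every bag. When it is then $X$'s turn at $t = h(X)$, every current neighbor of $X$ arises either from an original atom or from a virtual hyperedge created by a previously eliminated variable; in both cases the running-intersection property of the current reduced decomposition, combined with the fact that $t$ is the highest bag still containing $X$, forces that neighbor to lie in $\chi(t)$.

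The main obstacle I anticipate is justifying that the two-phase discipline does not spawn a virtual hyperedge bridging disjoint parts of the tree. Because each non-$T_f$ subtree hangs off $T_f$ as an independent branch and the first phase always eliminates from the deepest level upward, every virtual hyperedge created in phase one stays confined to a single bag within its branch, and when phase two begins the reduced decomposition restricted to $T_f$ remains valid on the remaining free variables. This confinement, which is the technical heart of the argument in the cited prior work, is what ultimately yields $\calV_X \subseteq \chi(h(X))$ throughout the entire order.
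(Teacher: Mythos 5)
Your proposal is essentially correct. Note that the paper itself does not prove this lemma at all: it is imported verbatim from prior work (\cite{faq,DBLP:journals/tods/KhamisCMNNOS20}), so there is no in-paper proof to compare against; what you have written is a reconstruction of the standard argument from those references (root the decomposition inside $T_f$, assign each variable to its highest bag, eliminate by a children-before-parents sweep of the non-$T_f$ nodes followed by the $T_f$ nodes), and it goes through. The one step I would ask you to make fully explicit is the final implication ``$\calV_X \subseteq \chi(h(X))$'': the precise reason is that any surviving neighbor $Y$ of $X$ at the moment $X$ is eliminated must have $h(Y)$ at or above $t = h(X)$ (otherwise $h(Y)$ would be a strict descendant of $t$, hence already fully processed, and $Y$ would be gone), while $Y$ co-occurs with $X$ in some bag at or below $t$ by your invariant; connectedness of the set of bags containing $Y$ then forces $Y \in \chi(t)$. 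Appealing only to ``running intersection plus $t$ being the highest bag containing $X$'' skips the ``$Y$ not yet eliminated'' part of this argument, which is where the two-phase, deepest-first discipline is actually used. Your worry about virtual hyperedges bridging the tree is already dispatched by the invariant itself, since each newly created hyperedge is contained in the single bag $\chi(h(Y))$.
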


To prove Lemma~\ref{lem:fhtw_con_comp}, we use the following two auxiliary lemmas.

\begin{lemma}
\label{lem:elimination_order_icludes_components}   
    Let $Q$ be an acyclic \crpq with \boundconnected components $Q_1, \ldots, Q_k$ and let $(X_1, \ldots , X_n)$ be a bound-leading variable elimination 
    order for $Q$ with induced variable set sequence $(\calV_{X_1}, \ldots, \calV_{X_n})$. For any $i \in [k]$, there is a
    $j \in [n]$ such that $\free(Q_i) \subseteq \calV_{X_j}$. 
\end{lemma}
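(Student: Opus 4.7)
The strategy is to exhibit, for each bound-connected component $Q_i$, one specific step $j$ in the elimination order such that $\free(Q_i) \subseteq \calV_{X_j}$. A key structural observation sets the stage: by maximality of bound-connected components, distinct components can only share free variables, so their bound variables are pairwise disjoint. Since the elimination order is bound-leading, every bound variable is removed before any free variable. Therefore, during the bound phase, each time a bound variable of some $Q_{i'} \neq Q_i$ is eliminated, no edge of $Q_i$ is affected, because that variable does not appear in any edge of $Q_i$. The ``$Q_i$-portion'' of the hypergraph thus evolves purely through the eliminations of bound variables of $Q_i$ itself.

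Assume first that $\bound(Q_i) \neq \emptyset$, and take $j = j^{*}$ where $X_{j^{*}} \defeq B^{*}$ is the last bound variable of $Q_i$ in the order. Fix $F \in \free(Q_i)$; by bound-connectedness there is a path $F = v_0, v_1, \ldots, v_t = B^{*}$ in the query graph of $Q_i$ whose interior vertices are all bound in $Q_i$, and hence all eliminated strictly before step $j^{*}$. The heart of the argument is a connectivity invariant: whenever two still-alive vertices lie in a common hyperedge at some step, they continue to lie in a common hyperedge at every later step, since either the edge is untouched, or it is merged with others into a strictly larger edge that still contains both. An easy induction on the subset $S \subseteq \{v_1, \ldots, v_{t-1}\}$ of interior vertices already eliminated shows that consecutive still-alive vertices along the reduced path always share a common edge: when the next interior vertex $v_s$ is eliminated, the two edges witnessing adjacency of its alive predecessor and its alive successor both contain $v_s$ and are therefore fused into a single merged edge containing both. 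Applying the invariant at $S = \{v_1, \ldots, v_{t-1}\}$ collapses the reduced path to just $F$ and $B^{*}$, so they share an edge in $\calG_{j^{*}-1}$, and therefore $F \in \calV_{B^{*}}$.

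If instead $\bound(Q_i) = \emptyset$, then every two variables of $Q_i$ must be joined by a direct edge (no bound intermediate is available), and acyclicity together with the absence of self-loops and parallel edges forces $Q_i$ to be a single edge on exactly two free variables $\{F_1, F_2\}$. Let $j$ be the earlier of the two steps eliminating $F_1$ or $F_2$, say $X_j = F_1$. The original edge $\{F_1, F_2\} \in \calE_0$ survives unchanged into $\calG_{j-1}$, because eliminating any other variable leaves untouched every edge that does not contain that variable; hence $\{F_1, F_2\} \subseteq \calV_{X_j}$. The main technical obstacle is the connectivity-preservation induction of the bound-variable case; once the disjointness observation in the first paragraph ensures that other components' eliminations never touch $Q_i$'s edges, the induction goes through cleanly regardless of the specific interleaving in which $v_1, \ldots, v_{t-1}$ are eliminated.
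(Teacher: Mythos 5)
Your proof is correct and follows essentially the same route as the paper's: both arguments track how eliminating the bound variables of a component merges its hyperedges so that the component's free variables end up together in some $\calV_{X_j}$. Your version is simply more explicit than the paper's terse argument --- it pins down the witness index (the last bound variable of $Q_i$, or the first-eliminated endpoint in the degenerate single-edge case) and supplies the connectivity-preservation induction that the paper leaves implicit.
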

\begin{proof}
Consider an acyclic \crpq $Q$. Let $(X_1, \ldots , X_n)$ be a bound-leading variable elimination 
    order for $Q$ with induced variable set sequence $(\calV_{X_1}, \ldots, \calV_{X_n})$ and induced hypergraph sequence 
    $(\calG_0, \calG_1, \ldots , \calG_n)$. 
    Let $(X_1, \ldots , X_m)$ be the bound variables in $Q$.
    The main observation is that for any bound variable $X_j$, it holds:
    if $\calV_{X_j}$ contains two free variables, then they must be contained in the same \boundconnected component of $Q$.
    Consider the hypergraph $\calG_m$, which we obtain after eliminating all bound variables. Each hyperedge in $\calG_m$ is a set in the induced variable set sequence and  must consist of the free variables of a \boundconnected component of $Q$. Hence the free variables of each \boundconnected component of $Q$ must be contained in a set of the   induced variable set sequence. 
\end{proof}

\begin{lemma}
\label{lem:edge_cover_component}   
Let $Q$ be an acyclic \crpq and $Q'$ a \boundconnected component of $Q$. It holds
$\rho^*_{Q}(\free(Q')) = \rho^*_{Q'}(\free(Q'))$. 
\end{lemma}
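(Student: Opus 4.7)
The plan is to prove the equality $\rho^*_Q(\free(Q')) = \rho^*_{Q'}(\free(Q'))$ by establishing the two inequalities separately. The inequality $\rho^*_Q(\free(Q')) \leq \rho^*_{Q'}(\free(Q'))$ is immediate: starting from any feasible fractional edge cover $x'$ of $\free(Q')$ using only atoms of $Q'$, I extend it to a cover in $Q$ by assigning weight $0$ to every atom of $\atoms(Q) \setminus \atoms(Q')$; feasibility and cost are preserved, so the optimum in $Q$ is at most the optimum in $Q'$.

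For the reverse inequality, I will take an optimal fractional edge cover $x \colon \atoms(Q) \to [0,\infty)$ of $\free(Q')$ and convert it into a cover $x'$ supported on $\atoms(Q')$ of no greater total weight. The crux is a structural fact about ``outer'' atoms: for every $e = \regpath{R}(X,Y) \in \atoms(Q) \setminus \atoms(Q')$, it holds that $|\{X,Y\} \cap \vars(Q')| \leq 1$. To prove it, suppose both $X$ and $Y$ lie in $\vars(Q')$. Since $Q'$ is \boundconnected it is in particular connected, so there is a path between $X$ and $Y$ inside the query graph of $Q'$; together with $e$ this path forms a cycle in the query graph of $Q$. In the degenerate case where the path in $Q'$ is itself a single edge $\regpath{R}'(X,Y) \in \atoms(Q')$, we instead get parallel edges, which are likewise forbidden for acyclic \crpqs by Remark~\ref{rem:acyclic_crpq}. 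Either way we contradict acyclicity of $Q$. Because $\free(Q') \subseteq \vars(Q')$, the fact specializes to $|e \cap \free(Q')| \leq 1$ for every $e \notin \atoms(Q')$.

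With that in hand, the transformation is a standard weight-shifting argument. Set $x'(e) = x(e)$ for every $e \in \atoms(Q')$, and for each $X \in \free(Q')$ add the missing mass $\mu_X := \sum_{e \notin \atoms(Q'),\ X \in e} x(e)$ to the weight of some chosen atom $e_X \in \atoms(Q')$ incident to $X$ (such an $e_X$ exists because $X \in \vars(Q')$ forces some atom of $Q'$ to contain $X$). Feasibility of $x'$ at each $X \in \free(Q')$ follows because the total weight at $X$ under $x'$ equals the total weight at $X$ under $x$, which is at least $1$. For the total cost, swapping the order of summation gives $\sum_{X \in \free(Q')} \mu_X = \sum_{e \notin \atoms(Q')} x(e) \cdot |e \cap \free(Q')|$, which by the structural fact is at most $\sum_{e \notin \atoms(Q')} x(e)$. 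Hence $\sum_{e} x'(e) \leq \sum_{e} x(e)$, so $\rho^*_{Q'}(\free(Q')) \leq \rho^*_Q(\free(Q'))$, closing the loop.

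The main obstacle, and the only non-routine step, is establishing the structural fact $|e \cap \vars(Q')| \leq 1$ for $e \notin \atoms(Q')$. This is precisely where both acyclicity of the query multigraph and the \crpq-specific prohibition on parallel edges are essential; everything else is a direct manipulation of the linear program defining $\rho^*$ in Eq.~\eqref{eq:fractional-edge-cover}. Notably, the argument never uses the maximality of $Q'$ among \boundconnected subqueries---only its connectedness and the acyclicity of $Q$---so the statement in fact generalizes to any connected subquery.
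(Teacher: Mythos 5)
Your proof is correct and follows essentially the same route as the paper's: the $\leq$ direction is immediate from $\atoms(Q)\supseteq\atoms(Q')$, and the $\geq$ direction rests on the observation that acyclicity of $Q$ (including the prohibition of parallel edges and self-loops) forbids any atom outside $Q'$ from containing two variables of the connected subquery $Q'$. Your explicit weight-shifting argument in fact fills in a step the paper leaves implicit---the paper jumps directly from the strict inequality $\rho^*_{Q}(\free(Q')) < \rho^*_{Q'}(\free(Q'))$ to the existence of an outer atom covering two free variables of $Q'$---and your closing remark that only connectedness of $Q'$, not its maximality, is used is accurate.
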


\begin{proof}
Let $Q$ be an acyclic CRPQ and $Q'$ a \boundconnected component of $Q$.
Since $\atoms(Q) \supseteq \atoms(Q')$, it holds $\rho^*_{Q}(\free(Q')) \leq \rho^*_{Q'}(\free(Q'))$. Hence, it suffices to show that
$\rho^*_{Q}(\free(Q')) \geq \rho^*_{Q'}(\free(Q'))$.
For the sake of contradiction, assume that 
$\rho^*_{Q}(\free(Q')) < \rho^*_{Q'}(\free(Q'))$.
This means that $Q$ contains an atom $\regpath{R}(X, Y)$ such that $\regpath{R}(X, Y)$ is not contained in $Q'$ and $X$ and $Y$
are two distinct free variables in $Q'$. Since each pair of variables in $Q'$ is connected, this implies that $Q$ is cyclic, which is a contradiction to our initial assumption. 
\end{proof}

%\medskip

We are ready to prove Lemma~\ref{lem:fhtw_con_comp}.
Consider an acyclic \crpq $Q$ with \boundconnected components $Q_1, \ldots, Q_k$ and query graph $\calG = (\calV, \calE)$.
Consider a free-connex tree decomposition $(T, \chi)$ 
for $Q$ that is {\em optimal}, i.e., that satisfies
$\max_{t \in \nodes(T)} \rho^*_{Q}(\chi(t)) = \fnfhtw(Q)$. By Lemma~\ref{lem:elimination_order_tree-decomposition}, there exists  a bound-leading elimination order $(X_1, \ldots , X_n)$ for $Q$ with induced variable set sequence $(\calV_{X_1}, \ldots, \calV_{X_n})$ such that for every $j \in [n]$, there is $t \in \nodes(T)$
with $\calV_{X_j} \subseteq \chi(t)$. This implies  that 
$\fnfhtw(Q) \geq \max_{j \in [n]} \rho^*_{Q}(\calV_{X_j})$.
By Lemma~\ref{lem:elimination_order_icludes_components}, it holds that for every $i \in [k]$, there is a $j \in [n]$ such that $\free(Q_i) \subseteq \calV_{X_j}$. This implies that 
$\fnfhtw(Q) \geq \max_{i \in [k]} \rho^*_{Q}(\free(Q_i))$.
By Lemma~\ref{lem:edge_cover_component}, it holds  $\rho^*_{Q}(\free(Q_i)) = \rho^*_{Q_i}(\free(Q_i))$ for $i \in [k]$.
This means 
$ \fnfhtw(Q) \geq \max_{i \in [k]} \rho^*_{Q_i}(\free(Q_i))$.
Moreover, $\fnfhtw(Q) \geq 1$ by definition.
This proves the $\geq$ direction in Eq.~\eqref{eq:fhtw-con-comp}.
\end{proof}

%%%%%%%%%%%%%%%%%%%%%%%%%%%%%%%%%%%%%%%%%%%%%%%%%%%%%%%%%%%%%%%%%%%%%%%%%%%%%%%
\subsection{Proof of Proposition~\ref{prop:number_free_rho_star}}
\PropNumberFreeRhoStar*
\begin{proof}
    Since $Q$ is acyclic, \boundconnected, where all atoms are binary,
    $Q$ must be a tree where free variables can occur only at the leaves.
    In general, we need a distinct edge to cover each free leaf, i.e., $\rho^*_Q(\free(Q)) = |\free(Q)|$, {\em unless} there are two free leaves that are directly connected by an edge,
    which can only happen when the query consists of a single edge. In this case,
    we have $\rho^*_Q(\free(Q)) = 1$ and $|\free(Q)| = 2$.
\end{proof}

%%%%%%%%%%%%%%%%%%%%%%%%%%%%%%%%%%%%%%%%%%%%%%%%%%%%%%%%%%%%%%%%%%%%%%%%%%%%%%%
\subsection{Proof of Proposition~\ref{prop:restriction-propagation}}

\PropRestrictionPropagation*

\begin{proof}
    Let $M_{\regpath{R}}=(V_{\regpath{R}}, E_{\regpath{R}}, \Sigma)$ be an edge-labeled graph representing the NFA for the regular expression $\regpath{R}$.
    Let $G'=(V', E')$ be the product graph of $G$ and $M_{\regpath{R}}$ (Definition~\ref{defn:product-graph}).
    For each vertex $(v, q) \in (V' \defeq V\times V_{\regpath{R}})$ in $G'$, we compute a list, $\mylist(v, q)$, of up to $\Delta$ distinct $\bm Z$-tuples $\bm z$ such that
    there is a path in $G'$ from $(v, q)$ to some $(y, q_f)\in V'$ where $q_f$ is an accepting state of $M_{\regpath{R}}$
    and $(y, \bm z) \in \ov S$.
    This computation can be done in time $O(N \cdot \Delta)$ as follows:
    \begin{itemize}
        \item For each vertex $(y, q_f) \in V'$ where $q_f$ is an accepting state of $M_{\regpath{R}}$,
        we initialize $\mylist(y, q_f)$ to be the set of all $\bm Z$-tuples $\bm z$ such that $(y, \bm z) \in \ov S$.
        \item Every time we add a $\bm Z$-tuple $\bm z$ to $\mylist(v, q)$ for some vertex $(v, q) \in V'$, we go through all incoming edges $((v_2, q_2), (v, q))$ and we add $\bm z$ to $\mylist(v_2, q_2)$
        {\em assuming} $\bm z$ is not already in this list and $|\mylist(v_2, q_2)|<\Delta$.
        We repeat this process until no more tuples can be added to any list.
    \end{itemize}
    
    During the above process, each edge $((v_2, q_2), (v, q)) \in E'$ is traversed once every time
    a tuple $\bm z$ is added to $\mylist(v, q)$, and there can only be at most $\Delta$ such tuples.
    Hence, the overall runtime is $O(N \cdot \Delta)$.
    Finally, we set $\ov Q(X, \bm Z)$ as follows, where $q_0$ is the initial state of $M_{\regpath{R}}$:
    \begin{align*}
        \ov Q(X, \bm Z) \quad\defeq\quad \setof{(x, \bm z)}{x \in V \text{ and }\bm z \in \mylist(x, q_0)}
    \end{align*}

\end{proof}

%%%%%%%%%%%%%%%%%%%%%%%%%%%%%%%%%%%%%%%%%%%%%%%%%%%%%%%%%%%%%%%%%%%%%%%%%%%%%%%
\subsection{Proof of Lemma~\ref{lmm:free-leaf}}
\label{app:results:free-leaf}

\LmmFreeLeaf*

\begin{proof}
    Let $Q$ be any \freeleaf \crpq with $\ell$ leaves, i.e., $\ell\defeq |\free(Q)|$,
    and let $G$ be an input edge-labeled graph of size $N$.
    We assume for now that the output size $\out$ is known upfront; we will later explain how to get rid of this assumption.
    Let $\Delta \defeq \out^{1-1/\ell}$, and $\Delta' \defeq \Delta + 1$.
    Pick any free variable $W\in\free(Q)$, and re-orient the query tree so that $W$ becomes the root.
    For each variable $Y\in\vars(Q)$, let $\bm F_Y \subseteq \free(Q)$ be the set of free variables
    in the subtree rooted at $Y$ excluding $W$ (which can happen if $Y$ is the root $W$).
    We do a bottom-up traversal of the query tree, from the leaves to the root, $W$,
    where for each variable $Y$, let $X$ be its parent variable (assuming $Y$ is not the root), and let $\regpath{R}(X, Y)$ be the \rpq
    associated with the edge $(X, Y)$ in the query graph.
    For each variable $Y$, we define subqueries {\em without} actually computing them yet, as follows:
    \begin{itemize}
        \item If $Y$ is a leaf, we define the following subquery:
        (Recall that $Y$ must be a free variable in this case because $Q$ is \freeleaf.)
        \begin{align*}
            T_Y(X, Y) &\quad\defeq\quad \regpath{R}(X, Y)
                &\quad\text{(Prop.~\ref{prop:restriction-propagation:basecase})}
        \end{align*}
        \item Otherwise, Let $Z_1, \ldots, Z_k$ be the children of $Y$ in the query tree.
        We define the following two subqueries: (Note that in this case, $\bm F_{Z_1}, \ldots, \bm F_{Z_k}$ must be disjoint and their union must be $\bm F_Y$.)
        \begin{align*}
            S_Y(Y, \bm F_Y) &\quad\defeq\quad T_{Z_1}(Y, \bm F_{Z_1}) \wedge \cdots \wedge T_{Z_k}(Y, \bm F_{Z_k})
                \quad&\text{(Prop.~\ref{prop:restriction-composition})}&\\
            T_Y(X, \bm F_Y) &\quad\defeq\quad \regpath{R}(X, Y) \wedge S_Y(Y, \bm F_Y)
                \quad&\text{(Prop.~\ref{prop:restriction-propagation})}&
        \end{align*}
    \end{itemize}
    By induction, it is straightforward to verify the following two claims:
    \begin{claim}
        For each variable $Y$, $T_Y(X, \bm F_Y)$ is equivalent to a subquery of the \crpq $Q$
        with free variables $\{X\} \cup \bm F_Y$ and with edges corresponding to the subtree
        rooted at $Y$ in addition to the edge $(X, Y)$.
    \end{claim}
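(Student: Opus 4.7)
The plan is to prove the claim by structural induction on the subtree rooted at $Y$, mirroring the bottom-up definition of $T_Y$ and $S_Y$. Since the query tree has depth bounded by $|\vars(Q)|$, which is constant in data complexity, a finite-depth induction suffices.

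For the base case, suppose $Y$ is a leaf. Because $Q$ is \freeleaf, we have $Y \in \free(Q)$, hence $\bm F_Y = \{Y\}$. The subtree rooted at $Y$ contains no edges, so the subquery in question consists solely of the atom $\regpath{R}(X, Y)$ on the edge $(X, Y)$, with free variables $\{X, Y\}$. This coincides exactly with the definition $T_Y(X, Y) \defeq \regpath{R}(X, Y)$.

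For the inductive step, suppose $Y$ is internal with children $Z_1, \ldots, Z_k$ and the claim holds for each $T_{Z_i}(Y, \bm F_{Z_i})$. Because the query graph is acyclic, the subtrees rooted at distinct $Z_i$ share only the vertex $Y$, so the sets $\bm F_{Z_1}, \ldots, \bm F_{Z_k}$ are pairwise disjoint and their union is exactly $\bm F_Y$. Applying the induction hypothesis and conjoining, $S_Y(Y, \bm F_Y) = \bigwedge_i T_{Z_i}(Y, \bm F_{Z_i})$ is equivalent to the subquery of $Q$ whose atoms are those on the edges of the subtree rooted at $Y$, with free variables $\{Y\} \cup \bm F_Y$. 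Conjoining $\regpath{R}(X, Y)$ then adds exactly the edge $(X, Y)$; moreover, since $Q$ is \freeleaf and $Y$ is internal, $Y$ is bound in $Q$ and therefore $Y \notin \bm F_Y$, so projecting the free variables of $T_Y(X, \bm F_Y)$ down to $\{X\} \cup \bm F_Y$ binds $Y$, yielding precisely the claimed subquery.

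The main obstacle I anticipate is purely a bookkeeping one: carefully distinguishing the variables that are free in the intermediate subquery $T_Y$ (namely $X$ together with $\bm F_Y$) from those that are bound (notably $Y$ itself and every other internal variable of the subtree), and verifying the disjointness of the $\bm F_{Z_i}$ so that the conjunction in $S_Y$ joins only on $Y$. Both points hinge on the acyclicity of the query graph and the \freeleaf property. No width-theoretic or restriction-propagation machinery is needed for this claim; it is a purely syntactic equivalence that sets up the subsequent quantitative argument in which the \rpq atoms inside each $T_Y$ and $S_Y$ are replaced by $(X, \Delta')$-restrictions via Propositions~\ref{prop:restriction-propagation:basecase}, \ref{prop:restriction-composition}, and \ref{prop:restriction-propagation}.
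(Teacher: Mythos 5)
Your proposal is correct and matches the paper's intent: the paper simply asserts that the claim is "straightforward to verify by induction," and your structural induction on the subtree rooted at $Y$ (leaf base case via Prop.~\ref{prop:restriction-propagation:basecase}, internal case via the disjointness of the $\bm F_{Z_i}$ and the conjunction with $\regpath{R}(X,Y)$) is exactly that intended argument, with the relevant bookkeeping about which variables are projected out done correctly.
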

    \begin{claim}
        For each non-leaf variable $Y$, $S_Y(Y, \bm F_Y)$
        is equivalent to a subquery of the \crpq $Q$ with free variables $\{Y\} \cup \bm F_Y$ and with edges corresponding to the subtree rooted at $Y$.
    \end{claim}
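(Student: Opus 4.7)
The plan is to prove this claim by a short inductive argument that builds directly on the immediately preceding claim about $T_Y$. Conceptually, the two claims are proved by simultaneous induction on the height of $Y$ in the (reoriented) query tree, where the claim about $T_Y$ feeds into the claim about $S_Y$ at the next level, and vice versa. For the current target, I would take the preceding claim as the induction hypothesis applied to each child $Z_i$ of $Y$, and then show that the conjunction defining $S_Y$ assembles these pieces into precisely the subquery over the subtree rooted at $Y$.

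First, I would unroll the definition
\begin{align*}
S_Y(Y, \bm F_Y) \;=\; \bigwedge_{i \in [k]} T_{Z_i}(Y, \bm F_{Z_i}).
\end{align*}
By the preceding claim applied to each $Z_i$, the conjunct $T_{Z_i}(Y, \bm F_{Z_i})$ is equivalent to a subquery of $Q$ whose atoms are exactly the edges of the subtree rooted at $Z_i$ together with the parent edge $(Y, Z_i)$, with free variables $\{Y\} \cup \bm F_{Z_i}$. Taking the conjunction over all $i \in [k]$, the resulting atom set is the union of the edges in the subtrees rooted at $Z_1, \ldots, Z_k$ together with each edge $(Y, Z_i)$, which is precisely the atom set of the subtree of $Q$ rooted at $Y$. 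The free variables collect to $\{Y\} \cup \bigcup_{i \in [k]} \bm F_{Z_i} = \{Y\} \cup \bm F_Y$, using the disjointness of the $\bm F_{Z_i}$ noted as a side remark in the definition.

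The main obstacle I anticipate is verifying that the variable identifications across the conjuncts behave correctly under the join semantics, i.e., that the conjunction really is the intended subquery rather than something that accidentally equates additional variables. Since $Q$ is \freeleaf and hence its query graph is a tree, the subtrees rooted at distinct children $Z_i$ and $Z_j$ of $Y$ are vertex-disjoint in $Q$, so the only variable shared between any two conjuncts $T_{Z_i}$ and $T_{Z_j}$ is $Y$ itself; moreover, the bound variables internal to each $T_{Z_i}$ can be kept fresh across different $i$ without loss of generality. Consequently, the natural join along $Y$ faithfully realizes the tree structure of the subquery rooted at $Y$, and no spurious identifications occur. Once this disjointness is in hand, equivalence of $S_Y(Y, \bm F_Y)$ to the claimed subquery follows immediately, completing the inductive step.
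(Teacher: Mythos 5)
Your proposal is correct and matches the paper's intent: the paper disposes of this claim (together with the companion claim about $T_Y$) with a one-line ``by induction, it is straightforward to verify,'' and your simultaneous induction on the height of $Y$ --- combining the $T_{Z_i}$ claims for the children, using disjointness of the $\bm F_{Z_i}$ and vertex-disjointness of the subtrees to rule out spurious identifications --- is exactly the argument being left implicit. No gaps.
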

    As a result, $S_W(W, \bm F_W)$ is equivalent to the original query $Q$.
    (Note that $\{W\}\cup \bm F_W = \free(Q)$.)
    By applying Prop.~\ref{prop:restriction-propagation:basecase},
    we can compute an $(X,\Delta')$-restriction $\ov T_Y(X, Y)$ of $T_Y(X, Y)$
    for each leaf variable $Y$ in time $O(N \cdot \Delta')$.
    Then, by repeatedly alternating Props.~\ref{prop:restriction-composition}
    and~\ref{prop:restriction-propagation},
    we can compute a $(Y, \Delta')$-restriction $\ov S_Y(Y, \bm F_Y)$ of $S_Y(Y, \bm F_Y)$
    and an $(X, \Delta')$-restriction $\ov T_Y(X, \bm F_Y)$ of $T_Y(X, \bm F_Y)$
    for each non-leaf variable $Y$ in time $O(N \cdot \Delta')$.
    At the very end, we compute a $(W, \Delta')$-restriction $\ov Q(W, \bm F_W)$ of $Q(W, \bm F_W)$
    in time $O(N \cdot \Delta')$.
    A $W$-value $w$ is called {\em light} if $\deg_{\ov Q}(\bm F_W|W=w) \leq \Delta$,
    and {\em heavy} otherwise (i.e., if $\deg_{\ov Q}(\bm F_W|W=w) = \Delta'\defeq \Delta+1$).
    The relation $\ov Q$ already contains all output tuples where the $W$-value is light.
    Let $H_W$ be the set of all heavy $W$-values.
    The size of $H_W$ is at most $\out/\Delta = \out^{1/\ell}$.
    Our next target is to report output tuples where the $W$-value is heavy.
    Using Prop.~\ref{prop:filter-rpq}, we can ``merge'' the filter $H_W$ into any \rpq
    that has $W$ as an endpoint.

    We repeat the above process for each free variable in $\free(Q)$.
    Every time we repeat the process, there will be one extra free variable $W$ whose
    number of different values is at most $\out^{1/\ell}$.
    Finally, let $Z$ be the {\em last} free variable we pick.
    At this point, every other free variable has at most $\out^{1/\ell}$ different values.
    Let $\bm F_Z\defeq \free(Q)\setminus \{Z\}$.
    The total number of different $\bm F_Z$-tuples is at most
    $(\out^{1/\ell})^{\ell-1}= \Delta$.
    Hence, at this point, all remaining output tuples must have light $Z$-values,
    and we can report all of them during the last bottom-up traversal.

    The above algorithm was under the assumption that the output size $\out$ is known upfront.
    If this is not the case, then we start by guessing $\out = 1$
    and keep doubling our guess every time we discover that our guess was below the actual output size,
    similar to Example~\ref{ex:free-leaf}.
    The total runtime would be $\sum_{i=1, \ldots, \lceil\log \out\rceil} N \cdot 2^{(\ell-1)i/\ell}$.
    This is a geometric series that is dominated by its last term, which is $O(N \cdot \out^{(\ell-1)/\ell})$.
\end{proof}

\section{Comparison with Existing Work}
\label{app:comparison}

In this section, we compare the performance of our algorithm with the performance of two prior approaches for evaluating CRPQs. We use the following metrics in our complexity expressions:
$N$ is the size of the edge-labeled graph;
$\out$ is the output size of the query; 
$\out_a$ is the maximal size of the (uncalibrated) output of any RPQ appearing in the query;
$\fnfhtw$ is the free-connex fractional hypertree width of the CRPQ;
$\cw$ is a width measure for CRPQs used to specify the runtime of the algorithm proposed in \cite{crpq-icdt26}.

The first approach, which we call here \ospg\!+\osyan, involves constructing an equivalent CQ $Q'$ by materializing the results of each RPQ atom of $Q$, then evaluating $Q'$ using standard approaches. The best known algorithm for materializing RPQs \cite{rpq-pods25}, dubbed by the authors as \ospg, runs in time $O(N + N \cdot \out_a^{1/2})$. $Q'$ is then evaluated using an output-sensitive refinement of Yannakakis \cite{xiao-os-yannakakis}, called here \osyan, which is the best known output-sensitive algorithm for acyclic CQs. \osyan evaluates the CQ $Q'$ in time $O(\out_a \cdot \out^{1-1/\fnfhtw(Q')} + \out)$. Therefore, the total running time of \ospg\!+\osyan is $O(N + N \cdot \out_a^{1/2} + \out_a \cdot \out^{1-1/\fnfhtw(Q)} + \out)$.

The second approach, which we call here \pc\!+\ospg, is the output-sensitive algorithm for evaluating acyclic CRPQs proposed in \cite{crpq-icdt26}. The algorithm first transforms the original query $Q$ into a free-connex acyclic CRPQ $Q'$ by contracting some RPQs and "promoting" remaining bound variables into free variables. In the next step, the algorithm calibrates the RPQs of $Q'$ with each other before invoking \ospg on each RPQ. This yields an output-sensitive algorithm for evaluating $Q$ whose runtime depends on the number of "promoted" variables of $Q$. This quantity is defined in \cite{crpq-icdt26} as a width measure for CRPQs and is termed the "contraction width" ($\cw$) of a CRPQ.

The following table states the evaluation times achieved by 
\ospg\!+\osyan~\cite{rpq-pods25, xiao-os-yannakakis}, \pc\!+\ospg~\cite{crpq-icdt26}, and our algorithm for general acyclic queries.

\renewcommand{\arraystretch}{1.5}
\begin{center}
\begin{tabular}{@{}c | c@{}}
Algorithm & Running time for acyclic CRPQs \\
\hline

\ospg\!+\osyan~\cite{rpq-pods25, xiao-os-yannakakis} & 
$N + N\hspace{-0.1em}\cdot\hspace{-0.1em} \out_a^{1/2} + \out_a \hspace{-0.1em} \cdot\hspace{-0.1em} \out^{1-1/\fnfhtw(Q)} + \out$ \\
\hline

\pc\!+\ospg~\cite{crpq-icdt26} & 
$N + N\hspace{-0.1em}\cdot \hspace{-0.1em}\out^{1/2} \cdot N^{\cw/2}+ \out \cdot N^{\cw}$ \\
\hline

Our algorithm &
$N + N \cdot \out^{1-1/\max(\fnfhtw(Q), 2)} + \out$
% $N + N\hspace{-0.1em}\cdot\hspace{-0.1em} \out_a^{1/2} + \out_a \hspace{-0.1em} \cdot\hspace{-0.1em} \out^{1-1/\fw} + \out$ & $N + N\hspace{-0.1em}\cdot \hspace{-0.1em}\out^{1/2} \cdot N^{\ctw/2}+ \out N^{\ctw}$
\end{tabular}
\end{center}

We compare these three algorithms with an example.

\begin{figure}[ht]
\centering
\begin{tikzpicture}
  % left column: u_1, ..., u_n
  \node (u0) at (0,0) {$u_{0}$};
    \node (v0) at (1.5,0) {$v_{0}$};
    
    \node (z1) at (4,0) {$z_{1}$};
    \node (z2) at (4,-1) {$z_{2}$};    
    
  \node (u1) at (0,-0.5) {$u_{1}$};
  \node (udots) at (0,-1.1) {$\vdots$};
  \node (un) at (0,-2) {$u_{n}$};

  % middle vertex v
  \node (v) at (1.5,-1) {$v$};

  % right column: w_1, ..., w_n
  \node (w1) at (3,-0.5) {$w_{1}$};
  \node (wdots) at (3,-1.1) {$\vdots$};
  \node (wn) at (3,-2) {$w_{n}$};

  % edges u -> v
  \draw[<-] (u1) to node[above] {$a$} (v);
  \draw[<-] (un) to node[above] {$a$} (v);
 % \draw[->] (un) to[bend right=10] node[pos=0.4, below] {$b$} (v);

  % edges v -> w
  \draw[<-] (v) to  node[above] {$a$} (w1);
  \draw[<-] (v) to node[below] {$a$} (wn);

    \draw[->] (u0) to node[above] {$a$} (v0);
        \draw[->] (v0) to node[above] {$a$} (w1);

        \draw[->] (z1) to node[above] {$b$} (w1);        
        \draw[->] (z2) to node[below] {$c$} (w1);                

\end{tikzpicture}
\caption{The edge-labeled graph used in Example \ref{ex:running-time-comparison}}
\Description{The edge-labeled graph used in Example \ref{ex:running-time-comparison}}
\label{fig:running-time-comparison-graphs}
\end{figure}
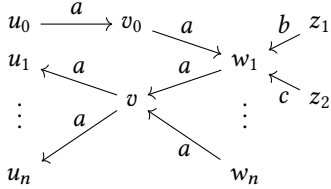

\begin{example}
 \label{ex:running-time-comparison}

Consider the 3-star CRPQ $Q(X_1,X_2, X_3) = a^*aa(X_1,X) \wedge b(X_2,X)\wedge c(X_3,X)$, which asks for all triples $(u,v,w)$ of vertices such that there is a vertex $z$, a path from $u$ to $z$ labeled by $a^*aa$, an edge from $v$ to $z$ labeled by $b$, and an edge from $w$ to $z$ labeled by $c$.
The query is acyclic but not free-connex. It has free-connex fractional hypertree width $w = 3$ and contraction width $\cw = 1$.

Consider the edge-labeled graph $G = (V_1 \cup V_2 \cup \{u_0, v_0, v, z_1, z_2\}, E_a \cup E_{bc}, \{a,b,c\})$, where $V_1 = \{u_1, \ldots , u_n\}$, $V_2 = \{w_1, \ldots , w_n\}$, 
$E_a = \{(w_i,a,v), (v,a,u_i) \mid i \in [n]\}$ $\cup$
$\{(u_0, a, v_0), (v_0, a, w_1)\}$, and $E_{bc} = \{(z_1,b,w_1), (z_2,c,w_1)\}$ 
for some $n \in \mathbb{N}$. The graph $G$ is visualized in Figure~\ref{fig:running-time-comparison-graphs} (left). 

We observe that  $|V| =2n+5$, $|E| = 2n + 4$, and $\out_a = \Theta(n^2)$.
% The output of the RPQ $a^*aa$ is $\{(w_i, u_j) \mid i,j \in [n]\} \cup \{(u_0, w_1)\}$. 
% The outputs of the RPQs $b$ and $c$ are $\{(z_1, w_1)\}$
% and $\{(z_2, w_1)\}$, respectively. 
The output of $Q$ consists of the single tuple $(u_0,z_1, z_2)$, hence $\out = 1$.
These quantities imply that \ospg\!+\osyan needs $O(n^2)$ time, \pc\!+\ospg needs $O(n^{3/2})$ time, while our algorithm needs only $O(n)$ time to evaluate the query. 
\qed
\end{example}

\end{document}